\newcommand{\gr}{\mathrm{GG}}
\newcommand{\cc}{\mathrm{CC}}
\newcommand{\rmed}{r_{\mathrm{med}}}
\newcommand{\rhat}{r_{\mathrm{apx}}}
\newcommand{\rlow}{r_{\mathrm{low}}}
\newcommand{\rhigh}{r_{\mathrm{high}}}
\newcommand{\rlowt}{r^{\mrm{term}}_{\mathrm{low}}}
\newcommand{\rhight}{r^{\mrm{term}}_{\mathrm{high}}}
\newcommand{\rlhc}{C_{\mathrm{range}}}
\newcommand{\crep}{C_{\mathrm{rep}}}
\newcommand{\spcomp}{\mathrm{space}}
\newcommand{\tcomp}{\mathrm{time}}
\newcommand{\chigh}{\mc{C}_{\mathrm{high}}}
\newcommand{\clow}{\mc{C}_{\mathrm{low}}}
\newcommand{\cprob}{c_{\mathrm{prob}}}
\newcommand{\ru}{\rho_u}
\newcommand{\rc}{\rho_c}
\newcommand{\xhat}{\hat{x}}
\newcommand{\tres}{\mc{T}_{\mrm{res}}}
\newcommand{\rone}{\rho_1}
\newcommand{\rtwo}{\rho_2}
\newcommand{\rthree}{\rho_3}
\newcommand{\rfour}{\rho_4}
\newcommand{\rrep}{\rho_{\mathrm{rep}}}
\newcommand{\nrep}{n_{\mathrm{rep}}}
\DeclareMathOperator*{\conv}{\mathrm{Conv}}
\newcommand{\comprmed}{\mathrm{CompRmed}}
\newcommand{\conpart}{\mathrm{ConstructPartition}}
\newcommand{\fixedscaleinst}{\mathrm{FixedScaleInstantiation}}
\newcommand{\fixedscalequery}{\mathrm{FixedScaleQuery}}
\newcommand{\multiscaleinst}{\mathrm{MultiScaleInstantiation}}
\newcommand{\multiscalequery}{\mathrm{MultiScaleQuery}}
\newcommand{\locn}{\mc{N}_\mathrm{loc}}
\newcommand{\xt}[1]{x^{(#1)}}
\newcommand{\vt}[1]{v^{(#1)}}
\newcommand{\ut}[1]{u^{(#1)}}
\newcommand{\At}[1]{A^{(#1)}}
\newcommand{\tv}{\wt{v}}
\newcommand{\Size}{\mathrm{Size}}
\newcommand{\fail}{\mathrm{FAIL}}
\newcommand{\stepsizec}{c_{\mathrm{step}}}
\newcommand{\stepsize}{\frac{\stepsizec \cdot \eps^3}{\log^3 n}}
\DeclareMathOperator*{\vol}{Vol}
\def\authornotes{1pt}
\newcommand{\ynote}[1]{\footnote{\color{ForestGreen}Yeshwanth: #1}}
\newcommand{\jnote}[1]{\footnote{\color{Orange}Jelani: #1}}
\newcommand{\ynote}[1]{}
\newcommand{\jnote}[1]{}
\begin{document}

\title{Terminal Embeddings in Sublinear Time}
\author{Yeshwanth Cherapanamjeri\thanks{UC Berkeley. \texttt{yeshwanth@berkeley.edu}. Supported by a Microsoft Research BAIR Commons Research Grant} \and Jelani Nelson\thanks{UC Berkeley. \texttt{minilek@berkeley.edu}. Supported by NSF
  award CCF-1951384, ONR grant N00014-18-1-2562, ONR DORECG award N00014-17-1-2127, and a Google Faculty Research Award.}}
\date{\today}
\maketitle

\begin{abstract}  
  Recently (Elkin, Filtser, Neiman 2017) introduced the concept of a {\it terminal embedding} from one metric space $(X,d_X)$ to another $(Y,d_Y)$ with a set of designated terminals $T\subset X$. Such an embedding $f$ is said to have distortion $\rho\ge 1$ if $\rho$ is the smallest value such that there exists a constant $C>0$ satisfying
  \begin{equation*}
    \forall x\in T\ \forall q\in X,\ C d_X(x, q) \le d_Y(f(x), f(q)) \le C \rho d_X(x, q) .
  \end{equation*}
  In the case that $X,Y$ are both Euclidean metrics with $Y$ being $m$-dimensional, recently (Narayanan, Nelson 2019), following work of (Mahabadi, Makarychev, Makarychev, Razenshteyn 2018), showed that distortion $1+\eps$ is achievable via such a terminal embedding with $m = O(\eps^{-2}\log n)$ for $n := |T|$. This generalizes the Johnson-Lindenstrauss lemma, which only preserves distances within $T$ and not to $T$ from the rest of space. The downside of prior work is that evaluating their embedding on some $q\in \R^d$ required solving a semidefinite program with $\Theta(n)$ constraints in $m$ variables and thus required some superlinear $\poly(n)$ runtime. Our main contribution in this work is to give a new data structure for computing terminal embeddings. We show how to pre-process $T$ to obtain an almost linear-space data structure that supports computing the terminal embedding image of any $q\in\R^d$ in sublinear time $O^* (n^{1-\Theta(\eps^2)} + d)$. To accomplish this, we leverage tools developed in the context of approximate nearest neighbor search.
\end{abstract}

\thispagestyle{empty}
\setcounter{page}{0}
\newpage

\section{Introduction}
\label{sec:intro}

A {\it distortion-$\rho$ terminal embedding} $f:X\rightarrow Y$ for two metric spaces $(X, d_X)$, $(Y, d_Y)$ and given terminal set $T\subseteq X$ is such that
\begin{equation*}
    \forall x\in T\ \forall q\in X,\ C d_X(x, q) \le d_Y(f(x), f(q)) \le C \rho d_X(x, q).
\end{equation*}
Recently Elkin, Filtser, and Neiman \cite{ElkinFN17} showed the existence of such $f$ when $X=\R^d, Y = \R^m$ with distortion arbitrarily close to $\sqrt{10}$ for $m = O(\log|T|)$. Following \cite{mmmr18}, the recent work of \cite{nn19} showed distortion $1+\eps$ is achievable with $m = O(\eps^{-2}\log|T|)$, thus yielding a strict generalization of the Johnson-Lindenstrauss (JL) lemma \cite{jl}.

One family of motivating applications for dimensionality-reducing terminal embeddings is to high-dimensional computational geometry static data structural problems. To make things concrete, one example to keep in mind is (approximate) nearest neighbor search in Euclidean space: we would like to pre-process a database $D = \{x_1,\ldots,x_n\}\subset \R^d$ into a data structure to later answer approximate nearest neighbor queries for some other $q\in\R^d$. Given a terminal embedding $f$ with terminal set $T = D$, we can build the data structure on $f(D)$ then later answer queries by querying the data structure on $f(q)$. In this way, we can save on memory and potentially query costs by working over a lower-dimensional version of the problem.

Prior to the introduction of terminal embeddings, the typical approach to applying dimensionality reduction in the above scenario was to observe that the JL lemma actually provides a distribution over (linear embeddings) $x\mapsto \Pi x$ for $\Pi$ randomly drawn from a distribution independent of $D$ such that $\|\Pi z\|_2 =(1\pm\eps) \|z\|_2$ with high probability for any fixed $z\in\R^d$. Thus if one wants to make a sequence of queries $q_1,\ldots,q_N$, one could set the failure probability to be $\ll 1/(nN)$ to then have by a union bound that $z_{i,j} = x_i - q_j$ has its norm preserved by $\Pi$ for all $i,j$ simultaneously with good probability. This approach though does not generally provide correctness guarantees when the sequence of queries is chosen {\it adaptively}, i.e.\ the $j$th query depends upon the responses to queries $q_1,\ldots,q_{j-1}$. This is because since those query responses depend on $\Pi$ (and perhaps also the randomness of the data structure itself, if it is randomized), future adaptive queries are not independent of $\Pi$ and the internal randomness of the data structure. Terminal embeddings circumvent this problem and can be used for adaptive queries, since if $f$ is a terminal embedding it is guaranteed to preserve distances to {\it any} future query $q$, even one chosen adaptively.

At this point we observe the gap in the above motivation of terminal embeddings and the results of prior work: we want to speed up data structures for high-dimensional problems by allowing querying for $f(q)$ (a ``simpler'' query as it is lower-dimensional) instead of $q$, but the optimal terminal embedding of \cite{nn19} require solving a semidefinite program with $\Theta(n)$ constraints and $m$ variables to compute $f(q)$. Thus the bottleneck becomes computing $f(q)$ itself, which may be even slower than exactly solving the original data structural problem in the higher-dimensional space (note nearest neighbor search can be solved exactly in linear $O(nd)$ time!). Even the $\sqrt{10}$-distortion terminal embedding construction of \cite{ElkinFN17} (and all subsequent work) requires computing the nearest neighbor of $q$ in $D$ in order to compute $f(q)$, and thus clearly cannot be used to speed up the particular application of approximate nearest neighbor search.

\paragraph{Our Contribution.} We give a new terminal embedding construction for Euclidean space into dimension $O(\eps^{-2}\log n)$, where $n$ is the number of terminals, together with a Monte Carlo procedure for computing $f(q)$ given any query $q\in\R^d$. Specifically, we compute $f(q)$ correctly with high probability, even if $q$ is chosen adaptively, and our running time to compute $f(q)$ is {\it sublinear} in $n$. Specifically, we can compute $f(q)$ in time $O^*(n^{1 - \Theta(\eps^2)} + d)$. The space complexity of our data structure to represent the terminal embedding is at most $O^*(nd)$. Our techniques also yield faster query times if more space is allowed (see \cref{thm:term_embedding}). 

\medskip

\begin{remark}
Our main theorem statement (\cref{thm:term_embedding}) for computing terminal embeddings does not provide the guarantee that if $q$ is queried twice, the same image $f(q)$ will be computed each time. If this property is desired, it can be achieved by increasing the query time to $O^*(n^{1-\Theta(\eps^2)}d)$.
\end{remark}

\paragraph{Notation:} Through the paper, for $x, y \in \mb{R}^d$, $\norm{x}$ and $\inp{x}{y}$ will denote the standard Euclidean norm and inner product. For $X = \{x_i\}_{i = 1}^n \subset \mb{R}^d$, we will frequently use calligraphic letters, say $\mc{S}$, to denote subsets of the powerset of $X$. We use $\mc{D}$ and (sub/super)scripted versions to denote data structures and use $\spcomp(\mc{D})$ to denote the space complexity of $\mc{D}$ and $\tcomp(\mc{D}(q))$ to denote the time taken by $\mc{D}$ to process query $q$. We use $\mb{B}(x, r)$ to denote the set $\{y: \norm{y - x} \leq r\}$ and for $A \subset \mb{R}^d$, $\vol(A)$ denotes its volume and $\conv(A)$ denotes its convex hull. For a probabilistic event $A$, $\bm{1} \lbrb{A}$ will denote the indicator function of the event. For $n \in \mb{N}$, we use $[n]$ to denote the set $\{1, \dots, n\}$. Finally, $O^*(\cdot)$ and $o^* (\cdot)$ hide factors of $\poly (1 / \eps) \cdot (nd)^{o(1)}$ and $\poly (\eps)$ respectively while $\Ot(\cdot)$ and $\wt{\Omega}(\cdot)$ hide poly-logarithmic factors in $n$ and $d$. 

\paragraph{Organization:} In \cref{sec:techniques}, we provide an overview of our procedure to compute terminal embeddings and the key ideas involved in the analysis. Then, in \cref{sec:term}, we formally state and prove the main theorem of our paper and in \cref{sec:aann}, we develop a data structure for adaptive approximate nearest neighbor search which we use as a subroutine to prove our main result. \cref{sec:med_jl} provides a dimensionality reduction technique tailored to the adaptive setting allowing further speedups in query time, \cref{sec:cons_tree} details a recursive point partitioning data structure crucial to our construction and \cref{sec:misc_res} contains supporting results used in earlier sections.

\section{Our Techniques}
\label{sec:techniques}

The starting point of our work is the construction of \cite{mmmr18}, tightly analyzed in \cite{nn19}. Before we describe the construction and our generalization, we recall some definitions pertaining to the construction of terminal embeddings. The first is the precise parametrization of a terminal embedding used in our paper:
\begin{definition}[Terminal Embedding]
    \label{def:term_embed}
    Given $X \subset \R^d$ and $\eps \in (0, 1)$, we say $f: \R^d \to \R^k$ is an $\eps$-\emph{terminal embedding} for $X$ if:
    \begin{equation*}
        \forall x \in X, y \in \R^d: (1 - \eps) \norm{y - x} \leq \norm{f(y) - f(x)} \leq (1 + \eps) \norm{y - x}.
    \end{equation*}
\end{definition}
Next, we recall the notion of an Outer Extension \cite{ElkinFN17,mmmr18}:
\begin{definition}[Outer Extension]
    \label{def:outer_ext}
    Given $X \subset \mb{R}^d$ and $f: \mb{R}^d \to \mb{R}^k$, we say that $g: \mb{R}^d \to \mb{R}^{k^\prime}$ for $k^\prime > k$ is an \emph{outer extension} of $f$ on $X$ if:
    \begin{equation*}
        \forall x \in X: g(x) = (f(x), 0, \dots, 0).
    \end{equation*}
\end{definition}
All previous approaches \cite{ElkinFN17,mmmr18,nn19} as well as ours all construct terminal embeddings by extending a standard distance preserving embedding on $X$ by a single coordinate. Therefore, in all our subsequent discussions, we restrict ourselves to the case where $k^\prime = k + 1$. However, \cite{nn19} require the stronger property that the distance preserving embedding being extended satisfies $\eps$-convex hull distortion, allowing them to obtain the optimal embedding dimension of $O(\eps^{-2}\log n)$:
\begin{definition}
    \label{def:conv_hull}
    Given $X = \{x_i\}_{i = 1}^n \subset \mb{R}^d$ and $\eps > 0$, we say that a matrix $\Pi \in \mb{R}^{k \times d}$ satisfies $\eps$-convex hull distortion for $X$ if
        $\forall z \in \conv (T): \abs{\norm{\Pi z} - \norm{z}} \leq \eps$,
    where $T = \lbrb{\frac{x - y}{\norm{x - y}}: x, y \in X}$.
\end{definition}
Furthermore, \cite{nn19} show that a matrix with i.i.d.\ subgaussian entries satisfies this property with high probability. We now formally describe the construction analyzed in \cite{nn19}. Given query $q$ and $\Pi \in \mb{R}^{k\times d}$ satisfying $\eps$-convex hull distortion for $X$, they construct a terminal embedding for $q$ by first finding $v \in \mb{R}^k$ satisfying the following constraints:
\begin{gather*}
    \norm{v - \Pi \xhat} \leq \norm{q - \xhat} \\
    \forall x \in X: \abs{\inp{v - \Pi \xhat}{\Pi \lprp{x - \xhat}} - \inp{q - \xhat}{x - \xhat}} \leq \eps \norm{q - \xhat} \norm{x - \xhat} \tag{Prog} \label{eq:pprog}
\end{gather*}
where $\xhat = \argmin_{x \in X} \norm{x - q}$; that is, the closest neighbor of $q$ in $X$. It is shown in \cite{nn19}, building upon \cite{mmmr18}, that such a point indeed exists and furthermore, the above set of constraints are convex implying the existence of polynomial time algorithms to find $v$. Given $v^*$ satisfying the above constraints, it is then shown that one can set $f(q) := z_q = (v^*, \sqrt{\norm{q - \xhat}^2 - \norm{v^* - \Pi \xhat}^2})$.

While \ref{eq:pprog} is \emph{not} the convex program we solve in our work, it is still instructive to analyze it to construct an algorithm with fast query time albeit with large memory. We do this in two steps:
\begin{enumerate}
    \item We assume access to a separating oracle for the set $K \coloneqq \{v \in \R^k: v \text{ satisfies \ref{eq:pprog}}\}$ and analyze the complexity of an optimization algorithm making queries to the oracle.
    \item We then construct a separating oracle $\mc{O}$ for \ref{eq:pprog}.
\end{enumerate}
First, observe that there exists a choice of $k = \Theta(\eps^{-2}\log n)$, as was shown in \cite{nn19}. Now let $\mc{O}$ be a separating oracle for the convex set $K = \{v \in \mb{R}^k: v \text{ satisfies \ref{eq:pprog}}\}$; that is, given $v \in \mb{R}^k$, $\mc{O}$ either reports that $v \in K$ or outputs $u \neq 0$ such that for all $y \in K$, we have $\inp{y - v}{u} \geq 0$. Then standard results on the Ellipsoid Algorithm \cite{ny83,bertsekas} imply that one can find a feasible $v$ by making $O^*(1)$ calls to $\mc{O}$ with each call incurring additional time $O^* (1)$. Hence, we may restrict ourselves to the task of designing a fast separating oracle for $K$.

To implement a fast separation oracle, first note that the first constraint in \ref{eq:pprog} can be checked explicitly in time $O(d)$ and if the constraint is violated, the oracle may return $\Pi \hat{x} - v$. If the first constraint is satisfied, consider the following re-arrangement of the second constraint:
\begin{equation*}
    \forall x \in X: \abs*{\inp*{\frac{v - \Pi \xhat}{\norm{q - \xhat}}}{\Pi \lprp{\frac{x - \xhat}{\norm{x - \xhat}}}} - \inp*{\frac{q - \xhat}{\norm{q - \xhat}}}{\frac{x - \xhat}{\norm{x - \xhat}}}} \leq \eps.
\end{equation*}
By denoting $\tv_{y} = \frac{\lprp{q - y,-(v - \Pi y)}}{\norm{\lprp{q - y,-(v - \Pi y)}}}$ and $\tv_{y,z} = \frac{\lprp{y - z,\Pi (y - z)}}{\norm{\lprp{y - z, \Pi (y - z)}}}$, we see that the above constraint essentially tries to enforce that $\inp{\tv_{\hat{x}}}{\tv_{x, \hat{x}}}$ is close to $0$. Note that from the constraints that we have already checked previously, we have that $\norm{\tv_{x, \hat{x}}}, \norm{\tv_{\hat{x}}} = 1$ and hence, the condition  $\abs{\inp{\tv_{\hat{x}}}{\tv_{x, \hat{x}}}} \approx 0$ is equivalent to $\norm{\tv_{\hat{x}} \pm \tv_{x, \hat{x}}} \approx \sqrt{2}$. Conversely, $\abs{\inp{\tv_{\hat{x}}}{\tv_{x, \hat{x}}}} \gg C \eps$ implies $\norm{\tv_{\hat{x}} \pm \tv_{x, \hat{x}}} \leq \sqrt{2} - C\eps$ for some signing. Since, $\tv_{x,y}$ for $x,y \in X$ are independent of $q$, we may build a fast separating oracle by building $n$ nearest neighbor data structures, one for each $x \in X$, with the point set $\{\pm \tv_{y, x}\}_{y \in X}$ and at query time, constructing $\tv_{\hat{x}}$ and querying the data structure corresponding to $\hat{x}$. Despite this approach yielding a fast separating oracle, it has three significant shortcomings:
\begin{enumerate}
    \item Computing exact nearest neighbor in high dimensions is inefficient
    \item Queries may be adaptive, violating guarantees of known randomized approximate nearest neighbor data structures
    \item The space complexity of the separating oracle is rather large.
\end{enumerate}

The first two points are easily resolved: the correctness guarantees can be straightforwardly extended to the setting where one works with an approximate nearest neighbor and adopting the framework from \cite{adaptiveds} in tandem with known reductions from Approximate Nearest Neighbor to Approximate Near Neighbor yield an adaptive approximate nearest neighbor algorithm. 

For the third point, note that in the approach just described, we construct $n$ data structures with $n$ data points each. Hence, even if each data structure can be implemented in $O(n)$ space, this still yields a data structure of at least quadratic space complexity. The rest of our discussion is dedicated to addressing this difficulty. 

We first generalize \ref{eq:pprog}, somewhat paradoxically, by adding more constraints:
\begin{gather*}
    \forall y \in X: \norm{v - \Pi y} \leq (1 + \eps) \norm{q - y}\\
    \forall x, y \in X: \abs{\inp{v - \Pi y}{\Pi \lprp{x - y}} - \inp{q - y}{x - y}} \leq \eps \norm{q - y} \norm{x - y}. \tag{Gen-Prog} \label{eq:gen_prog}
\end{gather*}
Despite these constraints (approximately) implying those in \ref{eq:pprog}, they are also implied by those from \ref{eq:pprog}. Hence, in some sense the two programs are equivalent but it will be more convenient to describe our approach as it relates to the generalized set of constraints. These constraints may be interpreted as a multi-centered characterization of the set of constraints in \ref{eq:pprog}. While \ref{eq:pprog} only has constraints corresponding to a centering of $q$ with respect to $\xhat$ and its projection, \ref{eq:gen_prog} instead requires $v$ to satisfy similar constraints irrespective of centering.

The first key observation behind the construction of our oracle is that it is not \emph{necessary} to find a point satisfying all the constraints \ref{eq:gen_prog}. It suffices to construct an oracle, $\mc{O}$, satisfying:
\begin{enumerate}
    \item If $\mc{O}$ outputs $\mrm{FAIL}$, $v$ can be extended to a terminal embedding for $q$; that is $v$ may be appended with one more element to form a valid distance-preserving embedding for $q$.
    \item Otherwise, $\mc{O}$ outputs a separating hyperplane for \ref{eq:gen_prog}.
\end{enumerate}
This is weaker than the one just constructed for \ref{eq:pprog} in two significant ways: the oracle may output $\mrm{FAIL}$ even if $v$ does \emph{not} satisfy \ref{eq:pprog}, and the expanded set of constraints allow a greater range of candidate separating hyperplanes, hence, making the work of the oracle ``easier''. A technical benefit of introducing the program \ref{eq:gen_prog} is that it provides the definition of a ``nice'' convex body for which this oracle outputs a separating hyperplane when it does not output $\mrm{FAIL}$, hence, allowing use to inherit the convergence bounds from the use of the Ellipsoid method for convex optimization.

The second key observation underlying the design of our oracle is one that allows restricting the set of relevant constraints for each input substantially. Concretely, to ensure that $v$ can be extended to a point preserving its distance to any $x \in X$, it is sufficient to satisfy the following two constraints for all $x \in X$:
\begin{gather*}
    \norm{v - \Pi y} \leq (1 + \eps) \norm{q - y} \\
    \abs{\inp{v - \Pi y}{\Pi (x - y)} - \inp{q - y}{x - y}} \leq \eps \norm{q - y} \norm{x - y}
\end{gather*}
for any $y \in X$ satisfying $\norm{q - y} = O (\norm{q - x})$. In particular, the point $y$ may be much farther from $q$ than $\xhat$ but still constitutes a good ``centering point'' for $x$. Therefore, we simply need to ensure that our oracle checks at least one constraint involving a valid centering point for $x$ for all $x \in X$. However, at this point, several difficulties remain:
\begin{enumerate}
    \item Which constraints should we satisfy for any input query $q$?
    \item How do we build a succinct data structure quickly checking these constraints?
\end{enumerate}

These difficulties are further exacerbated by the fact that the precise set of constraints may depend on the query $q$ which may be chosen adaptively. In the next two subsections, we address these issues with the following strategy where we still make use of nearest neighbor data structures over $\{\tv_{x,y}\}_{x,y \in X}$:
\begin{enumerate}
    \item Each nearest neighbor data structure consists of points $\{\tv_{y,x}\}_{y \in S}$ for some small $S \subset X$
    \item Use a smaller number of data structures
    \item Only a small number of relevant data structures are queried when presented with query $q$
\end{enumerate}
For each of these three choices, we will exploit recent developments in approximate near neighbor search \cite{alrw}. In \cref{ssec:fixed_scale_oracle}, we describe our approach to construct an oracle for a fixed scale setting where the distance to the nearest neighbor is known up to a polynomial factor and finally, describe the reduction to the fixed scale setting in \cref{ssec:gen_red}.

\subsection{Fixed Scale Oracle}
\label{ssec:fixed_scale_oracle}
In this subsection, we outline the construction of a suitable separation oracle for all $q$ satisfying $\wt{r} \leq \norm{q - \xhat} \leq \poly (n) \wt{r}$ for some \emph{known} $\wt{r}$. For the sake of illustration, in this subsection we only consider the case where our oracle has space complexity $O^*(nd)$ though our results yield faster oracles if more space is allowed (see \cref{thm:term_embedding}). In order to decide which points to use to construct our nearest neighbor data structures, we will make strong use of the following randomized partitioning procedures. Given $X \subset \R^d$, these data structures construct a set of subsets $\mc{S} = \{S_i \subseteq X\}_{i = 1}^m$ and a hash function $h: \R^d \to 2^{[m]}$ such that for a typical input point, $x$, sets in $h(x)$ contain points that are close to $x$ and exclude points far from $x$. The data structure is formally defined below recalling that $\spcomp (\mc{D})$ and $\tcomp (\mc{D})$ denote the space and time complexities of a data structure $\mc{D}$:
\begin{definition}
    \label{def:na_ann_part}
    We say a randomized data structure is an $(\rho_u, \rho_c)$-Approximate Partitioning (AP) if instantiated with a set $X = \{x_i\}_{i = 1}^n \subset \mb{R}^d$ and $r > 0$, produces, $\mc{D} = (h, \mc{S} = \{S_i\}_{i = 1}^m)$, with $S_i \subseteq X$, $\abs{S_i} > 0$ and $h: \mb{R}^d \to 2^{[m]}$ satisfying:
    \begin{gather*}
        \mb{P} \lbrb{\spcomp (\mc{D}) \leq O^*(n^{1 + \rho_u} d) \text{ and } \sum_{i = 1}^m \abs{S_i} \leq O^*(n^{1 + \rho_u}) } = 1 \\
        \forall x \in \mb{R}^d: \mb{P} \lbrb{\abs{h(x)} \leq O^*(n^{\rho_c}) \text{ and } \tcomp(h (x)) \leq O^*(n^{\rho_c} d)} = 1 \\
        \forall x \in X: \mb{E} \lbrb{\sum_{i = 1}^m \bm{1}\lbrb{x \in S_i}} \leq O^*(n^{\rho_u})\\
        \forall x \in \mb{R}^d: \mb{E} \lbrb{\sum_{\substack{y \in X \\ \norm{x - y} \geq 2r}} \sum_{i \in h(x)} \bm{1} \lbrb{y \in S_i}} \leq O^*(n^{\rho_c}) \\
        \forall x \in \mb{R}^d, y \in X \text{ such that } \norm{x - y} \leq r: \mb{P} \lbrb{\sum_{i \in h(x)} \bm{1} \lbrb{y \in S_i} \geq 1} \geq 0.99
    \end{gather*}
    where the probability is taken over the random decisions used to construct $\mc{D}$.
\end{definition}

The first condition in the above definition restricts the space complexity of the data structure and the sum of the number of points stored in all of the sets in $\mc{S}$ while the third condition states that each point is replicated very few times across all the sets in $\mc{S}$ in expectation. The second condition states that for any input $x$, $h$ is computable quickly and maps $x$ to not too many sets in $\mc{S}$. Finally, the last two conditions ensure that points far from $x$ are rarely in the sets $x$ maps to and that points close to $x$ are likely to be found in these sets. Essentially, these data structures will allow us to partition our space such that most points in each partition are close to each other and hence, the constraints corresponding set from \ref{eq:gen_prog} can be checked with a small number of centering points while the few far away points may be checked explicitly. 

Data structures satisfying the above definition have a been a cornerstone of LSH based approaches for approximate nearest neighbor search which yield state-of-the-art results and nearly optimal time-space tradeoffs \cite{him,nearoptHashHighDim,alrw}. The conditions with probability $1$ can be ensured by truncating the construction of the data structure if its space complexity grows too large or by truncating the execution of $h$ on $x$ if its runtime exceeds a certain threshold. Finally, the events with probability $0.99$ can be boosted from arbitrary constant probability by repetition. In this subsection's setting of almost linear memory, any $(0, c)$-AP data structure suffices for any $c < 1$, and \cite{alrw} show that an $(0, 7/16)$-AP data structure exists. 

For the sake of exposition, assume that the second, third and fourth conditions in the above definition hold deterministically; that is, assume that each data point is only replicated in $O^*(n^{\ru})$ many of the $S_i$, for each $x$ only $O^*(n^{\rc})$ many points farther away from $x$ than $2r$ are present in the sets mapped to by $h$ and each point in the dataset within $r$ of $x$ is present in one of the sets that $x$ maps to. In our formal proof, we show that any issues caused due to the randomness are easily addressed. We now instantiate $O \lprp{\log (n) / \gamma}$ independent $(0, 7/16)$-AP data structures, $\mc{D}_i = (h_i, \mc{S}_i = \{S_j\}_{j = 1}^{m_i})$ for the point set $X$ with $r_i = (1 + \gamma)^i \wt{r}$ and $\gamma \approx 1 / \log^3 n$. Note that this only results in $O^* (1)$ data structures in total. Now, for each $i$ and $S \in \mc{S}_i$, we pick $l \approx \log n$ random points from $S$, $\mc{Z}_{i,S} = \lbrb{z_j}_{j = 1}^l$, and instantiate nearest neighbor data structures for the points $\lbrb{\pm \tv_{x, z_j}}_{x \in S}$ and assign any point $y \in S$ within $4r_i$ of $z_j$ to $z_j$ for $S$. Note that these assignments are only for the set $S$; for a distinct set $S^\prime$ such that $y \in S^\prime$, $y$ need not be assigned to any point. Intuitively, this assignment will be used to determine which of the nearest-neighbor data structures are queried on an input query $q$ for which a terminal embedding is required to be computed as we only instantiate these data structures for some points in $S_i$ -- in fact, we will chose the points in $\mc{Z}_{i, S}$ to which the nearest neighbor of $q$ in $X$ has been assigned. Each of these data structures only stores $O^*(n)$ points in total and the existence of near neighbor data structures with space complexity $O^*(dn)$ (and query time $O^*(dn^{1 - \wt{c} \eps^2})$) \cite{alrw} complete the bound on the space complexity of a data structure at a single scale. Since we only instantiate $O^* (1)$ many such data structures, this completes the bound on the space complexity of the data structure. By choosing the points randomly in this way, one can show that for any $x$, the sum total of the number of unassigned points over all the sets in $h_i(x)$ (including potentially duplicated points) is at most $O^*(n^{\rc})$ by partitioning $X$ into a set with points close to $x$ and those far away and analyzing their respective probabilities of being assigned in each set. Furthermore, note the total number of points stored in all the $h_i(x)$ is trivially at most $O^*(n)$.

At query time, suppose we are given query $q$, its nearest neighbor in $X$, $\xhat$, and a candidate $v$ and we wish to check whether $v$ can be extended to a valid terminal embedding for $q$. While having access to the exact nearest neighbor is an optimistic assumption, extending the argument to use an approximate nearest neighbor is straightforward. We query the data structure as follows, we query each data structure $\mc{D}_i$ with $\xhat$ and for each $S \in h_i (\xhat)$, we check whether each unassigned point, $x$, satisfies $\abs{\inp{\tv_{x, \xhat}}{\tv_{\xhat}}} \approx 0$. Then, for each point in $z_j \in \mc{Z}_{i, S}$, we query its nearest neighbor data structure with $\wt{v}_{z_j}$. If any of the data structures report a point significantly violating the inner product condition, we return that data point as a violator to our set of constraints.

We now prove the correctness and bound the runtime of the oracle. We start by bounding the runtime of this procedure. For a single scale, $r_i$, we query at most $O^*(n^{\rc})$ unassigned points and their contribution to the runtime is correspondingly bounded. Intuitively, this is true because $h(x)$ contains at most $O^*(n^{\rc})$ points far from $x$ and if there are more than $O^*(n^{\rc})$ points close to $x$, they tend to be assigned. For assigned points, there are at most $O^*(n)$ many of them (repetitions included) spread between $O^*(n^{\rc})$ many nearest neighbor data structures (as $\abs{h_i (\xhat)} \leq O^*(n^{\rc})$) and each of these data structures has query time $O^*(l^{1 - \wt{c} \eps^2})$ where $l$ is the number of points assigned to the data structure. A simple convexity argument shows that the time taken to query all of these is at most $O^*(n^{1 - \wt{c} \eps^2})$. This bounds the query time of the procedure. 

To establish its correctness, observe that when the algorithm outputs a hyperplane, the correctness is trivial. Suppose now that the oracle outputs $\mrm{FAIL}$. Note that any point, $x \in X$, \emph{very} far away from $\xhat$ may be safely ignored (say, those $\poly(n) \wt{r}$ away from $\xhat$) as an embedding that preserves distance to $\xhat$ also preserves distance to $x$ by the triangle inequality. We now show that for any other $x$, it satisfies $\abs{\inp{\tv_{x,y}}{\tv_y}} \approx 0$ for some $y$ with $\norm{q - y} \leq C\norm{q - x}$. Any such $x$ must satisfy $\norm{x - \xhat}\leq 2 \norm{q - x}$ by the triangle inequality and the fact that $\xhat$ is the nearest neighbor. As a consequence, there exists $i$ such that $\norm{x - \xhat} \leq r_i$ and $\norm{x - q} \geq 0.5 r_{i - 1}$. For this $i$, there exists $S \in h_i(\xhat)$ containing $x$. In the case that $x$ is not assigned, we check $\abs{\inp{\tv_{x,\xhat}}{\tv_{\xhat}}} \approx 0$ and correctness is trivial. In case $x$ is assigned, it is assigned to $y$ with $\norm{y - x} \leq 4r_i$ and we have:
\begin{equation*}
    \norm{y - q} \leq \norm{x - q} + \norm{y - x} \leq \norm{x - q} + 4r_i \leq 10 \norm{x - q}
\end{equation*}
where the final inequality follows from the fact that $\norm{x - q} \geq 0.5 r_{i-1}$. This proves that the inner product condition for $x$ is satisfied with respect to $y$ with $\norm{q - y} \leq 10 \norm{x - q}$. This concludes the proof in the second case where the oracle outputs $\mrm{FAIL}$. 

The argument outlined in the last two paragraphs concludes the construction of our weaker oracle when an estimate of $\norm{q - \xhat}$ is known in advance. The crucial property provided by the existence of the $(\ru, \rc)$-AP procedure is that there are at most $O^*(n)$ many points used to construct the near neighbor data structures for the points $\lbrb{\wt{v}_{x,y}}$ (as opposed to $n^2$ for the previous construction). This crucially constrains us to having either a large number of near neighbor data structures with few points or a small number with a large number of points but not both. However, the precise choice of how the algorithm trades off these two competing factors is dependent on the set of data points and the scale being considered. The savings in query time follow from the fact that at most $O^*(n^{c})$ of these data structures are consulted for any query for some $c < 1$.

\subsection{Reduction to Fixed Scale Oracle}
\label{ssec:gen_red}

We reduce the general case to the fixed scale setting from the previous subsection.\footnote{Note that \cref{ssec:fixed_scale_oracle} already allows the construction of a separating oracle if one is willing to incur runtimes depending \emph{logarithmically} in the aspect ratio of the dataset. However, the results in the paper allow for a construction with \emph{no} such dependence.} To define our reduction, we will need a data structure we will refer to as a Partition Tree that has previously played a crucial part in reductions from the Approximate Nearest Neighbor to Approximate Near Neighbor \cite{him}. We show that the same data structure also allows us to reduce the oracle problem from the general case to the fixed scale setting. Describing the data structure and our reduction, requires some definitions from \cite{him}:
\begin{restatable}{definition}{gpcc}
    \label{def:gp_cc}
    Let $X = \{x_i\}_{i = 1}^n \subset \mb{R}^d$ and $r > 0$. We will use $\gr (X, r)$ to denote the graph with nodes indexed by $x_i$ and an edge between $x_i$ and $x_j$ if $\norm{x_i - x_j} \leq r$. The connected components of this graph will be denoted by $\cc (X, r)$; that is, $\cc (X, r) = \{C_j\}_{j = 1}^m$ is a partitioning of $X$ with $x \in C_j$ if and only if $\norm{x - y} \leq r$ for some $y \in C_j \setminus \{x\}$.
\end{restatable}
Note from the above definition that $\cc (X, r)$ results in increasingly fine partitions of $X$ as $r$ decreases. This notion is made precise in the following straightforward definition:
\begin{restatable}{definition}{partref}
    \label{def:part_ref}
    For a data set $X = \{x_i\}_{i = 1}^n \subset \mb{R}^d$, we say that a partition $\mc{C}$ \emph{refines} a partition $\mc{C}^\prime$ if for all $C \in \mc{C}$,  $C \subseteq C^\prime$ for some $C^\prime$ in $\mc{C}^\prime$. This will be denoted by $\mc{C}^\prime \sqsubseteq \mc{C}$. 
\end{restatable}

Next, define $\rmed(X)$ as:
\begin{equation*}
    \rmed(X) = \min \{r > 0: \exists C \in \cc (X, r) \text{ with } \abs{C} \geq n / 2\}.
\end{equation*}

We are now ready to define a Partition Tree:
\begin{restatable}{definition}{parttree}
    \label{def:part_tree}
    Given $X = \{x_i\}_{i = 1}^n \subset \mb{R}^d$, a Partition Tree of $X$ is a tree, $\mc{T}$, whose nodes are labeled by $\lprp{Z, \lbrb{\mc{T}_C}_{C \in \clow}, \mc{T}_{\mathrm{rep}}, \clow, \chigh, \crep, \rhat}$ where $Z,\crep \subset X$, $\{\mc{T}_{C}\}_{C \in \clow} \cup \{\mc{T}_{\mrm{rep}}\}$ represent its children, $\clow, \chigh$ are partitions of $Z$ and $\rhat > 0$ satisfying the following conditions:
    \begin{gather*}
        \cc (Z, 1000n^2\rhat) \sqsubseteq \chigh \sqsubseteq \cc (Z, \rhat) \sqsubseteq \cc (Z, \rmed) \sqsubseteq \cc \lprp{Z, \frac{\rhat}{10n}} \sqsubseteq \clow \sqsubseteq \cc \lprp{Z, \frac{\rhat}{1000n^3}} \\
        \forall C \in \chigh: \exists!\, z \in \crep \text{ with } z \in C
    \end{gather*}
    For the sake of notational simplicity, we will use $\mc{T}^\prime \in \mc{T}$ both to refer to a node in the tree as well as the subtree rooted at that node and $\Size(\mc{T}^\prime)$ to refer to the sum of the number of points stored in the subtree $\mc{T}^\prime$. The above condition implies $\Size(\mc{T}) \leq O(n\log n)$ \cite{him}.
\end{restatable}

While deterministic data structures with the same near linear runtime are also known \cite{hp}, we include, for the sake of completeness, a simple probabilistic algorithm to compute a partition tree with probability $1 - \delta$ in time $O(nd\log^2 n / \delta)$. Having defined the data structure, we now describe how it may be used to define our reduction. 

At a high level, the reduction traverses the data structure starting at the root and at each step either terminating at the node currently being explored or proceeds to one of its children. By the definition of the data structure, the number of points in the node currently being explored drops by at least a factor of $2$ in each step. Therefore, the procedure explores at most $\ceil{\log n}$ nodes. For any node $\mc{T}^\prime \in \mc{T}$, with associated point set $Z$, currently being traversed, we will aim to enforce the following two conditions:
\begin{enumerate}
    \item An approximate nearest neighbor of $q$ in $Z$ is also an approximate nearest neighbor in $X$
    \item A terminal embedding of $q$ for $Z$ is also valid for $X$.
\end{enumerate}
For simplicity, we assume the existence of near neighbor data structures; that is, data structures which when instantiated with $(X, r)$ and given a query $q$, output a candidate $y \in X$ such that $\norm{y - q} \leq r$ if $\min_{x \in X} \norm{x - q} \leq r$ (we refrain from assuming access to nearest neighbor data structures here as this reduction will also be used to construct our nearest neighbor data structures). For each node $\mc{T}^\prime = \lprp{Z, \lbrb{\mc{T}_C}_{C \in \clow}, \mc{T}_{\mrm{rep}}, \clow, \chigh, \crep, \rhat}$, we first decide two thresholds $\rlow = \rhat / \poly(n)$ and $\rhigh = \poly (n)\rhat$ and interpolate the range with roughly $m \approx (\log \rhigh / \rlow) / \gamma$ many near neighbor data structures, $\{\mc{D}_i\}_{i = 0}^m$, with $r_i = (1 + \gamma)^i \rlow$ for the point set $Z$. Note, we set $\gamma \approx 1 / \log^3 n$ which implies that we instantiate at most $O^* (1)$ many near neighbor data structures.

At query time, suppose we are at node $\mc{T}^\prime = \lprp{Z, \lbrb{\mc{T}_C}_{C \in \clow}, \mc{T}_{\mrm{rep}}, \clow, \chigh, \crep, \rhat}$ with associated near neighbor data structures, $\mc{D}_i$. We query each of the data structures $\mc{D}_i$ with $q$ and we have three possible cases:
\begin{enumerate}
    \item The nearest neighbor to $q$ is within a distance of $\rlow$
    \item The nearest neighbor to $q$ is beyond $ \rhigh$ of it
    \item The nearest neighbor to $q$ is between $\rlow$ and $\rhigh$ of $q$
\end{enumerate}
The first case occurs when $\mc{D}_0$ returns a candidate nearest neighbor, the second when none of the near neighbor data structures return a candidate and the third when, $\mc{D}_i$ succeeds but $\mc{D}_{i - 1}$ fails for some $i$. If the third case occurs, the reduction is complete. If the second case occurs, let $x \in C \in \chigh$ and $\wt{x} \in \crep \cap C$. We have by the triangle inequality and \cref{def:gp_cc,def:part_tree} $\norm{x - \wt{x}} \leq \poly (n) \rmed \ll \rhigh$ and hence $\norm{q - \wt{x}} \approx \norm{q - x}$ and hence we recurse in $\crep$ still satisfying the two conditions stated above. For the first case, let $\wt{x} \in C \in \clow$ such that $\norm{q - \wt{x}} \leq \rlow$. From \cref{def:gp_cc,def:part_tree}, any $x \notin C$ satisfies $\norm{x - \wt{x}} \geq \rmed / \poly (n) \gg \rlow$ and hence, both conditions are again maintained as the nearest neighbor of $q$ in $Z$ is in $C$ and a terminal embedding for $q$ in $C$ is a terminal embedding in $Z$ by the triangle inequality.

\subsection{Runtime Improvements and Adaptivity}
\label{ssec:runtime_adapt}

We conclude with other technical considerations glossed over in the previous discussion. As remarked before, we assumed access to exact nearest and near neighbor data structures which perform correctly even when faced with adaptive queries. While the arguments outlined in the previous two subsections extend straightforwardly to the setting where approximate nearest neighbors are used, the guarantees provided by previous approaches to the approximate near neighbors are not robust in the face of adaptivity. 
\begin{definition}
    \label{def:na_ann}
    For $\rho_u, \rho_c \geq 0$ and $c > 1$, we say a randomized data structure is an $(\rho_u, \rho_c, c)$-Approximate Near Neighbor (ANN) data structure for Approximate Near Neighbor if instantiated with a set of data points $X = \{x_i\}_{i = 1}^n$ and $r > 0$ constructs, $\mc{D}$, satisfying:
    \begin{gather*}
        \mb{P} \lbrb{\spcomp (\mc{D}) \leq O^*(d n^{1 + \rho_u})} = 1 \\
        \forall q \in \mb{R}^d: \mb{P} \lbrb{\tcomp(\mc{D} (q)) \leq O^*(dn^{\rho_c})} = 1 \\
        \forall q \in \mb{R}^d: \mb{P} \lbrb{\mc{D} (q) \text{ returns } x \in X \text{ with } \norm{q - x} \leq cr \text{ if } \exists y \in X \text{ with } \norm{q - y} \leq r} \geq 0.99
    \end{gather*}

    where the probability is taken over \emph{both} the random decisions used to construct $\mc{D}$ and those used by $\mc{D}$ to answer the query $q$. Additionally, $q$ is assumed to be independent of $\mc{D}$. 
\end{definition}
A simple repetition argument can then be used to devise adaptive near neighbor data structure with similar guarantees \cite{adaptiveds}. Used in tandem with the reduction outlined in the previous subsection yields the adaptive nearest neighbor data structures used in \cref{ssec:fixed_scale_oracle}.

Finally, the argument outlined previously enabled computing terminal embeddings in time $O^*(dn^{\rho})$ for some $\rho < 1$ which while being sublinear in $n$ is suboptimal in its interaction with the dimension. This is in contrast to state-of-the-art approaches to nearest neighbor search which yield runtimes scaling as $O^*(dn + n^{\rho})$. However, all these approaches deduce this result by first projecting onto a lower dimensional space using a Johnson-Lindenstrauss (JL) projection and building the data structure in the lower dimensional space. This is not feasible in our setting as JL projections are not robust to adaptivity and we require an alternate strategy. 

To improve our runtime, suppose $\Pi^\prime \in \R^d \to \R^k$ with $k = O^* (1)$ satisfies:
\begin{equation}
    \label{eq:pip_ip_pres}
    \forall x,y,z \in X \cup \{q\}: \abs{\inp{\Pi^\prime (x - z)}{\Pi^\prime (y - z)} - \inp{x - z}{y - z}} \leq o^*(1) \cdot \norm{x - z} \norm{y - z}.
\end{equation}
Then, to construct a terminal embedding, we may construct the vectors $\tv_{x,y}$ and $\tv_x$ by using the vectors $(\Pi^\prime x, \Pi^\prime y, \Pi^\prime q)$ instead of the vectors in the high dimensional space. Assuming the projections for $x \in X$ are pre-computed, we may do this projection in $O^* (d)$ time and the rest of the procedure to compute terminal embeddings takes time $O^*(n^{\rho})$. When $q$ is independent of the data structure, a standard JL-sketch satisfies \cref{eq:pip_ip_pres} with high probability but this is not true when $q$ depends on $\mc{D}$. Thankfully, we show that if one draws $O^* (d)$ many $JL$-sketches, at least $95\%$ satisfy \cref{eq:pip_ip_pres} for any query $q$ with high probability. Note that the order of the quantifiers in the statement make the proof more challenging than previous work using such ideas \cite{adaptiveds} and requires a careful gridding argument which we carry out in \cref{sec:med_jl}.

\section{Terminal Embeddings}
\label{sec:term}

In this section, prove the main theorem of the paper. Note that in the following theorem the query $q$ can be chosen with full knowledge of the data structure. That is, conditioned on the successful creation of the data structure, the randomized construction of $z_q$ is only over the random decisions taken at \emph{query-time} and not during the creation of the data structure. The main theorem of the paper is stated below:
\begin{theorem}
    \label{thm:term_embedding}
    Let $\eps \in (0, 1)$, $\rone, \rtwo, \rthree, \rfour, \rrep > 0$. Then, there is a randomized procedure which when instantiated with a dataset $X = \{x_i\}_{i = 1}^n \subset \mb{R}^d$, a $(\rthree, \rfour)$-Approximate Partitioning data structure, a $(\rone, \rtwo, (1 + \eps^\dagger))$-Approximate Near Neighbor data structure for $\eps^\dagger = c\eps$ for some small enough $c > 0$ and parameter $\rrep$ constructs a data structure, $(\mc{D}, \Pi \in \mb{R}^{k \times d})$, satisfying the following guarantees:
    \begin{enumerate}
        \item $\Pi$ has $\eps$-convex hull distortion for $X$
        \item Given $q \in \mb{R}^d$, $\mc{D}$ produces with probability at least $1 - 1 / \poly(n)$ over the randomness of $\mc{D}$ at \emph{query} time, a vector $z_q \in \mb{R}^{k + 1}$  such that:
        \begin{equation*}
            \forall x \in X: (1 - O(\eps)) \norm{q - x} \leq \norm{z_q - (\Pi x, 0)} \leq (1 + O(\eps)) \norm{q - x}
        \end{equation*}
        \item On any $q \in \mb{R}^d$, the runtime of $\mc{D}$ is $O^*(d + n^{\rtwo} + n^{\rfour} + n^{\rfour + (1 + \rthree - \rfour - \rrep) \rtwo})$
        \item The space complexity of $\mc{D}$ is $O^*(dn^{\rrep + (1 + \rone)} + dn^{\rthree + (1 + \rone)})$
    \end{enumerate}
    with probability at least $1 - 1 / \poly(n)$ over the randomness during the \emph{instantiation} of $\mc{D}$.
\end{theorem}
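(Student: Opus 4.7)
The plan is to combine the three ingredients developed in the techniques section: the outer-extension framework of \cite{nn19} which reduces the task to finding some $v \in \mathbb{R}^k$ that can be extended by a single coordinate to a distance-preserver for $q$; the fixed-scale oracle of \cref{ssec:fixed_scale_oracle} which solves this when $\|q - \xhat\|$ is known up to a polynomial factor; and the Partition Tree reduction of \cref{ssec:gen_red}. The matrix $\Pi$ is drawn with i.i.d.\ subgaussian entries at dimension $k = O(\eps^{-2}\log n)$, which by the convex-hull-distortion result cited after \cref{def:conv_hull} satisfies item~1 with high probability independently of the rest of the construction.

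At preprocessing, I would build a Partition Tree $\mathcal{T}$ for $X$ as in \cref{def:part_tree}. At each node $\mathcal{T}'$ with associated point set $Z$ and radius $\rhat$: first, instantiate $O^*(1)$ geometrically-spaced copies of the given $(\rone, \rtwo, 1+c\eps)$-ANN data structure over $Z$ at scales in $[\rlow,\rhigh] = [\rhat/\mathrm{poly}(n), \mathrm{poly}(n)\rhat]$, boosted against adaptive queries by the repetition framework of \cite{adaptiveds}; these drive the scale-detection step. Second, for each of $O^*(1)$ scale levels $r_i$, instantiate a $(\rthree, \rfour)$-AP with sets $\mathcal{S}_i$; within each $S\in\mathcal{S}_i$ sample $\approx n^{-\rrep}|S|$ anchor points $z_j$, assign every $y\in S$ within $4r_i$ of some anchor to that anchor, and build on the anchor-centered points $\{\pm \tv_{x,z_j}\}_{x \in S}$ an adaptive ANN data structure. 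To keep the runtime dependence on $d$ additive, follow \cref{sec:med_jl}: draw $O^*(d)$ independent JL-sketches that collectively satisfy \ref{eq:pip_ip_pres} by a median-of-sketches argument, and precompute all projections of $X$ into sketch space. Item~4 then follows by multiplying the AP space bound $O^*(n^{1+\rthree}d)$ by the per-set ANN bound $O^*(n^{1+\rone}d)$ after the anchor-sparsification saves a factor of $n^{\rrep}$, with the $O(\log n)$ depth of $\mathcal{T}$ absorbed into $O^*(\cdot)$.

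At query time on $q$ I would first sample $O(\log n)$ of the precomputed sketches and keep one good for $q$, at cost $O^*(d)$, which succeeds by the median-JL guarantee. Next, traverse $\mathcal{T}$ from the root using the three-case classification of \cref{ssec:gen_red}---terminate in the ``middle'' case, descend into $\mathcal{T}_C$ or $\mathcal{T}_{\mathrm{rep}}$ otherwise---visiting at most $O(\log n)$ nodes. At the terminating node, identify an approximate nearest neighbor $\xhat$ of $q$ from the scale-detection data structures and invoke the Ellipsoid algorithm on \ref{eq:gen_prog} with the weakened separation oracle of \cref{ssec:fixed_scale_oracle}: for each of the $O^*(n^{\rfour})$ sets $S \in h_i(\xhat)$ returned across scales, check unassigned points directly and query the anchor-ANN on $\tv_{z_j}$ for each anchor $z_j$. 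If the oracle ever reports $\mathrm{FAIL}$, return $z_q = (v^*, \sqrt{\|q-\xhat\|^2 - \|v^* - \Pi \xhat\|^2})$. For item~2, correctness follows because any $x \in X$ within the ``active'' scale range is witnessed at some scale $i$ either by an unassigned occurrence of $x$ in $h_i(\xhat)$, or by an anchor $y$ with $\|q-y\| = O(\|q-x\|)$ flagged by the anchor-ANN, while very distant or very close $x$ are controlled by the triangle inequality against $\xhat$ and by the Partition-Tree termination rule respectively.

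The runtime of item~3 accounts as follows: $O^*(d)$ for sketch selection and projection of $q$; $O^*(n^{\rtwo})$ for scale-detection ANN queries and locating $\xhat$; $O^*(n^{\rfour})$ for evaluating the AP hashes $h_i(\xhat)$ and inspecting unassigned points; and the cross term $O^*(n^{\rfour + (1+\rthree-\rfour-\rrep)\rtwo})$ from summing per-set anchor-ANN costs across the $n^{\rfour}$ active buckets, whose total planted mass is $n^{1+\rthree-\rrep}$, using convexity of $t \mapsto t^{\rtwo}$ to bound the sum once mass is spread across buckets of roughly equal size. The main obstacle I expect is the coupling between adaptivity and the probabilistic primitives: neither the AP of \cref{def:na_ann_part} nor the ANN of \cref{def:na_ann} is adaptive by default, so the argument must interleave insulating the ANN/AP internals from $q$ via the median-JL gridding of \cref{sec:med_jl}, boosting constant-probability guarantees by $\mathrm{polylog}$-repetition in the style of \cite{adaptiveds}, and arranging that the probability-$1$ space and per-query-time conditions in \cref{def:na_ann_part} and \cref{def:na_ann} survive the gridding and truncation steps without destroying correctness. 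Closing this coupling so that the $0.99$ success events hold simultaneously for \emph{every} query-dependent choice of scale, AP bucket, and anchor is what makes the proof delicate rather than a routine assembly of known pieces.
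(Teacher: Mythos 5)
Your plan follows the paper's architecture closely: convex-hull-distortion $\Pi$, a Partition Tree reduction to a fixed scale, an AP-plus-anchored-ANN oracle for \ref{eq:gen_prog} run inside the Ellipsoid method with a weak (\fail-permitting) separation oracle, adaptivity handled by repetition and by the median-JL gridding of \cref{sec:med_jl}. The correctness argument you sketch (every relevant $x$ is witnessed either as an unassigned point in some $h_i(\xhat)$ or via an anchor $y$ with $\norm{q-y}=O(\norm{q-x})$) is exactly \cref{lem:ms_fail}.

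There is, however, a genuine gap in how the parameter $\rrep$ enters. You implement it by sampling $\approx n^{-\rrep}\abs{S}$ anchors inside each AP bucket $S$. The paper instead uses a two-tier construction (\cref{alg:fixed_scale_inst}): a \emph{global} set $\mc{Z}$ of $\Theta(n^{\rrep}\log(n/\delta))$ main points drawn uniformly from $X$, each carrying an AANN over all $n$ points (this is the source of the $O^*(dn^{\rrep+(1+\rone)})$ space term, which your construction does not produce), together with only $\Theta(\log(1/\delta^{\ddagger}))$ anchors per AP bucket, each anchor's AANN built over \emph{all} of $S$. The point of $\mc{Z}$ is a dichotomy on $\abs{\locn(\xhat,r)}$: if $\xhat$ has at least $n^{1-\rrep}$ points within $2r$, it is assigned to some $z\in\mc{Z}$ w.h.p.\ and the query is answered by a single size-$n$ AANN call costing $O^*(n^{\rtwo})$, bypassing the AP route entirely; only in the sparse case ($<n^{1-\rrep}$ close points) do you enter the buckets, and \emph{that} is why the total mass in the active buckets is $O^*(n^{1+\rthree-\rrep})$, which your convexity step needs for the cross term $n^{\rfour+(1+\rthree-\rfour-\rrep)\rtwo}$. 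With your per-bucket sampling alone, a query whose $2r$-neighborhood is dense puts mass up to $n^{1+\rthree}$ into the active buckets and the $\rrep$ saving in the runtime is unjustified; moreover, if each of your $n^{-\rrep}\abs{S}$ anchors carries an ANN over $S$, the space becomes $\sum_S n^{-\rrep}\abs{S}\cdot O^*(\abs{S}^{1+\rone})=O^*(n^{2+\rone+\rthree-\rrep})$ rather than the claimed bound, and the "multiply AP space by per-set ANN space" accounting for item~4 does not parse. To close the proof you need the global representative set $\mc{Z}$ and the dense/sparse case split, with the per-bucket anchor count reduced to polylogarithmic.
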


Before we proceed, we will instantiate the above theorem in three specific cases. We note that the state-of-the-art algorithms for approximate nearest neighbor search are implemented in terms of approximate partitioning schemes as defined here \cite{alrw} who show for any $(\ru, \rc)$ satisfying:
\begin{equation*}
    c^2 \sqrt{\rc} + (c^2 - 1) \sqrt{\ru} = \sqrt{2c^2 - 1},
\end{equation*}
there exist both an $(\ru, \rc, c)$-Approximate Near Neighbor data structure and a $(\ru, \rc)$-Approximate Partitioning scheme when $c = 2$. By instantiating an Approximate Partitioning data structure (by setting $c = 2$) and applying the results of \cref{thm:term_embedding}, we get a data structure to compute $\eps$-terminal embeddings for (possibly different) universal $\wt{c}, \wt{C} > 0$:
\begin{itemize}
    \item Query time $O^*(d + n^{1 - \wt{c}\eps^2})$ and space complexity $O^*(nd)$ with $\rrep = \rone = \rthree = 0$
    \item Query time $O^*(d + n^{1 - \wt{c} \eps})$ and space complexity $O^*(n^{2}d)$ with $\rrep = \rthree = 0, \rone = 1$
    \item Query time $O^*(d)$ and space complexity $dn^{\wt{C} / \eps^2}$ with $\rfour = \rtwo = 0$.
\end{itemize}
The previous three results capture a range of potential time space tradeoffs for data structures computing terminal embeddings. We now move on to the proof of \cref{thm:term_embedding}. As discussed in \cref{sec:techniques}, our algorithm operates by constructing a weak separating oracle for a convex program. In \cref{ssec:gen_char}, we define the convex program for which we define our oracle, \cref{ssec:fixed_scale_terminal} defines our weak oracle for a fixed scale and in \cref{ssec:multi_scale_term} we describe the the data structure which enables the reduction of the general case to the fixed scale scenario and establish that it provides an appropriate separation oracle for the convex program introduced in \cref{ssec:gen_char}. 

A key data structure that will be extensively utilized and will, henceforth, be referred to as a $(\ru, \rc, c)$-Adaptive Approximate Nearest Neighbor (AANN) data structure is described in the following theorem. The theorem below provides a data structure supporting several important functions. It enables approximate nearest neighbor queries on the dataset \emph{even} for queries which are adaptively chosen based on the instantiation of the data structure (but not on the fresh randomness drawn at query time) and furthermore, constructs a partition tree, $\mc{T}$, with the property that for any query $q$, it suffices to construct a valid terminal embedding for the data points in node in the partition tree that the data structure returns, $\tres$. The guarantees for all other points in the dataset are guaranteed as a consequence of the first claim of the theorem. The key use of this data structure is that it now effectively suffices to construct a valid terminal embedding for the node in the partition tree that is returned in response to the query. Furthermore, the distance bounds on the distance of $q$ to its approximate nearest neighbor $\hat{x}$, allows us to use a small number of \emph{fixed-scale} data structures for each nodes thus directly reducing the multi-scale setting to the fixed-scale setting. The proof of the theorem itself will be deferred to \cref{sec:aann}. 
\begin{restatable}{theorem}{aannmain}
    \label{thm:aann_main}
    Let $c > 1$ and $\ru, \rc > 0$. Then, there is a randomized procedure which when instantiated with a dataset $X = \{x_i\}_{i = 1}^n \subset \mb{R}^d$ and a $(\ru,\rc, c)$-Approximate Near Neighbor data structure (\cref{def:na_ann}) produces a data structure, $(\mc{D}, \mc{T})$, satisfying:
    \begin{enumerate}
        \item Given any $q \in \mb{R}^d$, $\mc{D}$ produces $(\xhat \in X, \tres \in \mc{T})$ satisfying:
        \begin{enumerate}
            \item $\norm{q - \xhat} \leq \min_{x \in X} (1 + o^*(1))c \norm{q - x}$
            \item $\xhat \in \tres$
            \item Furthermore, let $\mc{Y} = \{y_i\}_{i = 1}^n \subset \mb{R}^k$ satisfying for some $\eps^\dagger \in \lprp{\frac{1}{\sqrt{d}}, 1}$:
                \begin{equation*}
                    \forall i, j \in [n]: (1 - \eps^\dagger) \norm{x_i - x_j} \leq \norm{y_i - y_j} \leq (1 + \eps^\dagger) \norm{x_i - x_j}
                \end{equation*}
                and for $\tres = \lprp{Z, \lbrb{\mc{T}_C}_{C \in \clow}, \mc{T}_{\mathrm{rep}}, \clow, \chigh, C_{\mathrm{rep}}, \rhat}$, let $y \in \mb{R}^k$ satisfy for $\eps^\ddagger \in [\eps^\dagger, 1)$:
                \begin{equation*}
                    \forall x_i \in Z: (1 - \eps^\ddagger) \norm{q - x_i} \leq \norm{y - y_i} \leq (1 + \eps^\ddagger) \norm{q - x_i}.
                \end{equation*}
                Then:
                \begin{equation*}
                    \forall x_i \in X: \lprp{1 - \lprp{1 + o^*(1)}\eps^\ddagger} \norm{q - x_i} \leq \norm{y - y_i} \leq \lprp{1 + \lprp{1 + o^*(1)} \eps^\ddagger} \norm{q - x_i}.
                \end{equation*}
                and if $\abs{Z} > 1$:
                \begin{equation*}
                    \Omega \lprp{\frac{1}{(nd)^{10}} \rhat} \leq \norm{q - \xhat} \leq O((nd)^{10} \rhat)
                \end{equation*}
        \end{enumerate}
        with probability at least $1 - 1 / \poly(n)$.
        \item $\mc{T}$ is a valid Partition Tree of $X$ (\cref{def:part_tree})
        \item The space complexity of $\mc{D}$ is $O^*(dn^{1 + \ru} \log 1 / \delta)$ 
        \item The runtime of $\mc{D}$ on any $q \in \mb{R}^d$ is at most $O^*(d + n^{\rc})$.
    \end{enumerate}
    with probability $1 - \delta$.
\end{restatable} 

\subsection{Generalized Characterization of Terminal Embeddings}
\label{ssec:gen_char}

To start, we recall a key lemma from \cite{nn19}:

\begin{lemma}
    \label{lem:conv_hull}
    Let $X = \{x_i\}_{i = 1}^n \subset \mb{R}^d$, $\eps \in \lprp{\frac{1}{\sqrt{n}}, 1}$ and $T = \lbrb{\frac{x - y}{\norm{x - y}}: x \neq y \in X} \bigcup \{0\}$. Then for $\Pi \in \mb{R}^{k \times d}$ with $k = \Omega \lprp{\frac{\log n + \log 1 / \delta}{\eps^2}}$ with $\Pi_{i,j} \thicksim \mc{N}(0, 1/k)$,
    \begin{equation*}
        \Pr(\Pi\text{ satisfies }\eps\text{-convex hull distortion for } T) \ge 1-\delta .
    \end{equation*}
\end{lemma}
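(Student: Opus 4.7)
The plan is to control $\sup_{z \in \conv(T)} \big|\|\Pi z\| - \|z\|\big|$ via a matrix deviation inequality in terms of the Gaussian width of $\conv(T)$, and then exploit that $T$ is a finite set of at most $n^2 + 1$ unit vectors to conclude that this Gaussian width is $O(\sqrt{\log n})$.

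First, I would invoke a Gordon-type deviation inequality (equivalently, the ``matrix deviation inequality'' of Liaw--Mehrabian--Plan--Vershynin): for a Gaussian matrix $\Pi \in \mb{R}^{k \times d}$ with i.i.d.\ $\mc{N}(0, 1/k)$ entries and any bounded set $S \subseteq \mb{R}^d$, with probability at least $1 - \delta$,
\begin{equation*}
\sup_{z \in S} \big| \|\Pi z\| - \|z\| \big| \leq \frac{C \cdot \left(w(S) + \sup_{z \in S} \|z\| \cdot \sqrt{\log(1/\delta)}\right)}{\sqrt{k}},
\end{equation*}
where $w(S) := \mb{E}\sup_{z \in S}\langle g, z\rangle$ for $g \sim \mc{N}(0, I_d)$. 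This follows by combining Chevet's inequality for the mean with Gaussian concentration (Borell--TIS), as the map $\Pi \mapsto \sup_{z \in S} |\|\Pi z\| - \|z\||$ is $O(\sup_{z \in S}\|z\|/\sqrt{k})$-Lipschitz in the Frobenius norm.

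Second, I would bound $w(\conv(T))$. Any linear functional attains its maximum over a convex hull at an extreme point, so $w(\conv(T)) = w(T)$. Since $T$ consists of at most $n^2 + 1$ unit vectors, the standard bound on the expected maximum of sub-Gaussian random variables yields $w(T) \leq C\sqrt{\log n}$. Every $z \in \conv(T)$ also satisfies $\|z\| \leq 1$. Combining these ingredients gives, with probability at least $1 - \delta$,
\begin{equation*}
\sup_{z \in \conv(T)}\big|\|\Pi z\| - \|z\|\big| \leq C \cdot \frac{\sqrt{\log n} + \sqrt{\log(1/\delta)}}{\sqrt{k}}.
\end{equation*}
Taking $k = \Omega((\log n + \log(1/\delta))/\eps^2)$ with a sufficiently large absolute constant then renders the right-hand side at most $\eps$, proving the lemma.

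The main technical input is the Gordon-type deviation inequality, which is what enables the passage from preserving norms on the finite vertex set $T$ to preserving norms uniformly on the entire convex hull with the optimal dependence on $\eps$. A naive alternative --- union-bounding over all pairs $t, t' \in T$ to preserve the inner products $\langle \Pi t, \Pi t'\rangle$ and then expanding $z = \sum_i \alpha_i t_i$ as a convex combination --- can only yield $|\|\Pi z\|^2 - \|z\|^2| \leq \eps^2$ and therefore requires $k = \Omega(\log n / \eps^4)$, a quadratically worse dependence on $\eps$. The chaining argument underlying Gordon's inequality is precisely what recovers the optimal $\eps^{-2}$ rate.
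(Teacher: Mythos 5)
Your proof is correct. Note that the paper does not actually prove this lemma; it is recalled verbatim from \cite{nn19}, and for Gaussian $\Pi$ the route you take --- bounding $\sup_{z\in\conv(T)}\bigl|\|\Pi z\|-\|z\|\bigr|$ by the Gaussian width of $\conv(T)$ via the Gordon/Liaw--Mehrabian--Plan--Vershynin matrix deviation inequality, observing $w(\conv(T))=w(T)\le C\sqrt{\log|T|}=O(\sqrt{\log n})$ since $T$ consists of at most $n^2+1$ unit vectors, and then setting $k=\Omega((\log n+\log(1/\delta))/\eps^2)$ --- is exactly the standard argument and the one underlying the cited result in the Gaussian case (the additional technical work in \cite{nn19} is needed only to extend it to general subgaussian entries). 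All the individual steps check out: the Borell--TIS/Lipschitz concentration step is valid since $\Pi\mapsto\sup_{z\in S}\bigl|\|\Pi z\|-\|z\|\bigr|$ has Lipschitz constant $\mathrm{rad}(S)/\sqrt{k}$ in the standardized Gaussian entries, and the reduction of the width of the hull to the width of its finite vertex set is immediate. The only cosmetic quibble is the attribution of the expectation bound to Chevet's inequality --- the two-sided bound with the absolute value inside the supremum is really Gordon's/Schechtman's theorem (or Theorem 1.1 of Liaw et al.) rather than a direct consequence of Chevet --- but since you invoke the matrix deviation inequality itself as the main input, this does not affect correctness.
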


Finally, the convex program for which we will construct our oracle is defined in the following generalization of \cite{nn19,mmmr18}. As remarked in \cref{sec:techniques}, we generalize the convex program to an expanded set of constraints but crucially do not attempt to satisfy \emph{all} of them in our algorithm. For the convergence guarantees for our weak oracle to hold (\cref{lem:weak_ellipsoid}), we first need to show the feasibility of the convex program. Before, we proceed we require the following simple lemma which is proved in \cref{ssec:misc_tech}.

\begin{restatable}{lemma}{ippres}
    \label{lem:ip_pres}
    Let $X = \lbrb{x_i}_{i = 1}^n$, $0 < \eps < 1$ and $T = \lbrb{\frac{x - y}{\norm{x - y}}: x \neq y \in X} \bigcup \{0\}$. Furthermore, suppose $\Pi \in \mb{R}^{k \times d}$ has $\eps$-convex hull distortion for $X$. Then, we have:
    \begin{equation*}
        \forall x, y \in \conv (T): \abs*{\inp{\Pi x}{\Pi y} - \inp{x}{y}} \leq 6\eps.
    \end{equation*}
\end{restatable}

We now establish feasibility in the following lemma.

\begin{lemma}
    \label{lem:multi_point_nn}
    Suppose $X = \{x_i\}_{i = 1}^n \subset \mb{R}^d$ and $\Pi \in \mb{R}^{k \times d}$ satisfies $\eps$-convex hull distortion for $X$. Then, for any $q \in \mb{R}^d$, there exists $z \in \mb{R}^k$ such that:
    \begin{gather*}
        \forall x, y \in X: \abs{\inp{z - \Pi x}{\Pi (y - x)} - \inp{q - x}{y - x}} \leq 15\eps \norm{q - x} \norm{y - x} \\
        \forall x \in X: \norm{z - \Pi x} \leq (1 + 8\eps) \norm{q - x}.
    \end{gather*}
\end{lemma}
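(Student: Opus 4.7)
The plan is to take $z := \Pi \tilde q$, where $\tilde q$ is the orthogonal projection of $q$ onto the affine hull of $X$, so that both required bounds reduce to statements about how $\Pi$ distorts inner products among differences of points in $X$. Concretely, fix any nearest neighbor $\hat x$ of $q$ in $X$, let $u$ be the orthogonal projection of $q - \hat x$ onto $\mathrm{span}(X - \hat x) = \mathrm{span}(X - X)$, and set $\tilde q := \hat x + u$. Then the residual $w := q - \tilde q$ is orthogonal to $\mathrm{span}(X - X)$, so since $\tilde q - x, y - x \in \mathrm{span}(X - X)$ for every $x, y \in X$, Pythagoras yields the two clean identities
\begin{gather*}
  \|\tilde q - x\|^2 = \|q - x\|^2 - \|w\|^2 \le \|q - x\|^2, \\
  \langle \tilde q - x, y - x\rangle = \langle q - x, y - x\rangle.
\end{gather*}
With $z := \Pi \tilde q$, the lemma's two conclusions therefore reduce to $\|\Pi(\tilde q - x)\| \le (1 + 8\eps)\|q - x\|$ and $|\langle \Pi(\tilde q - x), \Pi(y - x)\rangle - \langle \tilde q - x, y - x\rangle| \le 15\eps \|q - x\|\|y - x\|$, statements that no longer involve $q$ directly.

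Next I would extract from $\eps$-convex hull distortion the standard polarization bound on differences: since $T$ is symmetric, $\tfrac{1}{2}(a/\|a\| \pm b/\|b\|) \in \conv(T)$ for any $a, b \in X - X$, and applying $|\|\Pi s\| - \|s\|| \le \eps$ inside the identity $4\langle \Pi(a/\|a\|), \Pi(b/\|b\|)\rangle = \|\Pi(a/\|a\| + b/\|b\|)\|^2 - \|\Pi(a/\|a\| - b/\|b\|)\|^2$ yields $|\langle \Pi a, \Pi b\rangle - \langle a, b\rangle| \le O(\eps) \|a\|\|b\|$, and in particular $\|\Pi a\| = (1 \pm O(\eps))\|a\|$. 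Writing $\tilde q - x = (\hat x - x) + u$, both reduced conditions expand linearly, and every summand that involves only $\hat x - x \in X - X$ and $y - x \in X - X$ is controlled by this polarization bound together with the nearest-neighbor inequality $\|\hat x - x\| \le 2\|q - x\|$.

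The main obstacle is then the cross and quadratic pieces involving $u$: since $u$ lies in $\mathrm{span}(X - \hat x)$ but need not be a single scaled element of $T$, convex hull distortion does not directly give $\|\Pi u\| \approx \|u\|$ or control $\langle \Pi u, \Pi(y - x)\rangle$. To handle this I would apply a second polarization, now to combinations $\tfrac{1}{2}(u/\|u\| \pm v)$ against $v \in T$, combined with the critical norm bound $\|u\| \le \|q - \hat x\| \le \|q - x\|$ coming from $u$ being an orthogonal projection and $\hat x$ being a nearest neighbor; this is what propagates the desired factor of $\|q - x\|$ on the right-hand sides. A careful triangle-inequality bookkeeping of $\|y - \hat x\| \le \|y - x\| + 2\|q - x\|$ and $\|q - \hat x\| \le \min(\|q - x\|, \|q - y\|)$ then turns the various additive $\eps$ distortions into the specific constants $1 + 8\eps$ in the norm bound and $15\eps$ in the inner product bound, completing the proof.
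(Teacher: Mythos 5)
There is a genuine gap, and it sits exactly where you flagged your ``main obstacle'': the term involving $u$, the in-span component of $q - \hat x$. The $\eps$-convex hull distortion hypothesis controls $\Pi$ only on $\conv(T)$, the convex hull of the \emph{normalized differences} of data points; it says nothing about general unit vectors of $\mathrm{span}(X - X)$, and your proposed fix via a second polarization against $\tfrac{1}{2}(u/\norm{u} \pm v)$ fails because these vectors need not lie in $\conv(T)$ (already for three points $\{0, e_1, e_2\}$ the unit vector $(e_1+e_2)/\sqrt{2}$ is outside $\conv(T)$). Worse, the statement you would need --- $\norm{\Pi u} \approx \norm{u}$ for all $u \in \mathrm{span}(X-X)$ --- is simply false in the regime of interest: with $k = O(\eps^{-2}\log n)$ and $n - 1 > k$, the restriction of $\Pi$ to $\mathrm{span}(X - X)$ has a nontrivial kernel. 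Taking $q = \hat x + u$ for a unit $u$ in that kernel, your choice $z = \Pi \tilde q$ gives $z - \Pi x = \Pi(\hat x - x)$, so the required inequality forces $\abs{\inp{u}{y - x}} \lesssim \eps \norm{q-x}\norm{y-x}$, which fails by a constant factor for generic $y$. So no linear image of $q$ can serve as the witness $z$; this is precisely why terminal embeddings need the extra coordinate and why the feasibility of the constraint system is a nontrivial claim.

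The paper's proof is correspondingly non-constructive: it sets up a bilinear game between a multiplier vector $\lambda$ ranging over the $\ell_1$ ball (one coordinate per constraint $(x,y)$) and a candidate $w$ ranging over the Euclidean ball of radius $\norm{q - x^*}$, shows the value of the game is at most $\eps$ using exactly the convex hull distortion property applied to the convex combination $\sum \lambda_{x,y}\frac{y-x}{\norm{y-x}} \in \conv(T)$, and then invokes von Neumann's minimax theorem to extract a single $w$ satisfying all constraints simultaneously; the witness is $z = w + \Pi x^*$, and the final constants come from triangle-inequality bookkeeping much like yours. Your reduction of $\inp{q-x}{y-x}$ to $\inp{\tilde q - x}{y-x}$ and your nearest-neighbor estimates are fine, but the existence of the in-span witness must come from this duality argument (or an equivalent separation/LP-duality step), not from an explicit formula.
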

\begin{proof}
    Let $x^* = \argmin_{x \in X} \norm{q - x}$. As in \cite{nn19,mmmr18}, we consider a bilinear game where $\lambda$ is essentially selects which constraint is worst-violated and $w$ is a candidate terminal embedding:
    \begin{multline*}
        \max_{\norm{\lambda}_1 \leq 1} \min_{\norm{w} \leq \norm{q - x^*}} \sum_{x, y \in X} \lambda_{x,y} \lprp{\frac{\inp{w}{\Pi (y - x)} - \inp{q - x^*}{y - x}}{\norm{q - x^*} \norm{y - x}}} \\
        = \min_{\norm{w} \leq \norm{q - x^*}} \max_{\norm{\lambda}_1 \leq 1} \sum_{x,y \in X}\lambda_{x,y} \lprp{\frac{\inp{w}{\Pi (y - x)} - \inp{q - x^*}{y - x}}{\norm{q - x^*} \norm{y - x}}}
    \end{multline*}
    where the exchange of the min and max follows from von Neumann's minimax theorem. Considering the first formulation, let $\norm{\lambda}_1 \leq 1$. We have for such a $\lambda$:
    \begin{align*}
        &\min_{\norm{w} \leq \norm{q - x^*}} \sum_{x,y \in X} \lambda_{x,y} \lprp{\frac{\inp{w}{\Pi (y - x)} - \inp{q - x^*}{y - x}}{\norm{q - x^*} \norm{y - x}}} \\
        &=  \min_{\norm{w} \leq \norm{q - x^*}} \lprp{\inp*{\frac{w}{\norm{q - x^*}}}{\Pi \lprp{\sum_{x, y \in X} \lambda_{x,y} \cdot \frac{y - x}{\norm{y - x}}}} - \inp*{\frac{q - x^*}{\norm{q - x^*}}}{\lprp{\sum_{x,y \in X} \lambda_{x,y} \cdot \frac{y - x}{\norm{y - x}}}}} \\
        &\leq \min_{\norm{\wt{w}} \leq 1} \inp*{\wt{w}}{\Pi \lprp{\sum_{x,y \in X} \lambda_{x,y} \cdot \frac{y - x}{\norm{y - x}}}} + \norm*{\sum_{x,y \in X} \lambda_{x,y} \frac{y - x}{\norm{y - x}}} \\
        &= -\norm*{\Pi \lprp{\sum_{x,y \in X} \lambda_{x,y} \cdot \frac{y - x}{\norm{y - x}}}} + \norm*{\sum_{x,y \in X} \lambda_{x,y} \cdot \frac{y - x}{\norm{y - x}}} \leq \eps
    \end{align*}
    where the first inequality follows from Cauchy-Schwarz and the final inequality follows from the fact that $\Pi$ has $\eps$-convex hull distortion. From the previous result, we may conclude that there exists $w \in \mb{R}^k$ satisfying:
    \begin{gather*}
        \norm{w} \leq \norm{q - x^*} \\
        \forall x, y \in X: \abs{\inp{w}{\Pi (y - x)} - \inp{q - x^*}{y - x}} \leq \eps \norm{q - x^*} \norm{y - x}.
    \end{gather*}
    Now consider the vector $z = w + \Pi x^*$. For this $z$, we have for any $x,y \in X$:
    \begin{align*}
        &\abs{\inp{z - \Pi x}{\Pi (y - x)} - \inp{q - x}{y - x}} \\
        &\quad \leq \abs{\inp{z - \Pi x^*}{\Pi (y - x)} - \inp{q - x^*}{y - x}} + \abs{\inp{\Pi (x^* - x)}{\Pi (y - x)} - \inp{x - x^*}{y - x}} \\
        &\quad \leq \eps \norm{q - x^*}\norm{y - x} + \norm{x^* - x}\norm{y - x} \abs*{\inp*{\Pi {\frac{x^* - x}{\norm{x^* - x}}}}{\Pi {\frac{y - x}{\norm{y - x}}}} - \inp*{\frac{x - x^*}{\norm{x - x^*}}}{\frac{y - x}{\norm{y - x}}}} \\
        &\quad \leq \eps \norm{q - x^*}\norm{y - x} + 6\eps \norm{x^* - x} \norm{y - x} = \eps \norm{y - x} \lprp{\norm{q - x^*} + 6 \norm{x^* - x}} \\
        &\quad \leq \eps \norm{y - x} \lprp{\norm{q - x^*} + 6 (\norm{x^* - q} + \norm{q - x})} \leq \eps \norm{y - x} \lprp{\norm{q - x^*} + 12 \norm{q - x})} \\
        &\quad \leq 15 \eps \norm{y - x} \norm{q - x}
    \end{align*}
    where the second inequality is due to the condition on $z$, the third inequality is a consequence of \cref{lem:ip_pres} and the second to last inequality follows from the fact that $\norm{q - x^*} \leq \norm{q - x}$. This establishes the first claim of the lemma. For the second claim, we have for any $x \in X$:
    \begin{align*}
        \norm{z - \Pi x}^2 - \norm{q - x}^2 &= \lprp{\norm{z - \Pi x^* + \Pi (x^* - x)}^2 - \norm{q - x^* + (x^* - x)}^2} \\
        &= \norm{z - \Pi x^*}^2 + 2\inp{z - \Pi x^*}{\Pi (x^* - x)} + \norm{\Pi (x^* - x)}^2 \\
        &\qquad - \norm{q - x^*}^2 - 2\inp{q - x^*}{x^* - x} - \norm{x^* - x}^2 \\
        &\leq 2 (\inp{z - \Pi x^*}{\Pi (x^* - x)} - \inp{q - x^*}{x^* - x}) + \norm{\Pi (x^* - x)}^2 - \norm{x^* - x}^2 \\
        &\leq 2 (\inp{z - \Pi x^*}{\Pi (x^* - x)} - \inp{q - x^*}{x^* - x}) + ((1 + \eps)^2 - 1) \norm{x^* - x}^2 \\
        &\leq 2 (\inp{z - \Pi x^*}{\Pi (x^* - x)} - \inp{q - x^*}{x^* - x}) + 3\eps \norm{x^* - x}^2 \\
        &\leq 2 \eps \norm{q - x^*} \norm{x - x^*} + 3\eps \norm{x^* - x}^2 \leq 4\eps \norm{q - x^*} \norm{q - x} + 12\eps \norm{q - x}^2 \\
        &\leq 16 \eps \norm{q - x}^2
    \end{align*}
    where the first inequality follows from the fact that $\norm{z - \Pi x^*} \leq \norm{q - x^*}$, the second inequality follows from the fact that $\Pi$ has $\eps$-convex hull distortion for $X$, the fourth from our condition on $z$ and the fifth follows from the triangle inequality and the fact that $\norm{q - x^*} \leq \norm{q - x}$. By re-arranging the above inequality and taking square roots, we get:
    \begin{equation*}
        \norm{z - \Pi x}^2 \leq (1 + 8\eps) \norm{q - x}
    \end{equation*}
    concluding the proof of the lemma.
\end{proof}

\subsection{Fixed Scale Violator Detection}
\label{ssec:fixed_scale_terminal}

Note that as a consequence of \cref{thm:aann_main}, we may assume that the function can instantiate $(\rone, \rtwo, (1 + \eps^\dagger))$-Adaptive Approximate Nearest Neighbor (AANN) data structures and $(\rthree, \rfour)$-Approximate Partitioning (AP) data structures. Our data structure for constructing an oracle at a fixed scale is constructed in \cref{alg:fixed_scale_inst} and the query procedure is outlined in \cref{alg:fixed_scale_query}. \cref{alg:fixed_scale_inst} takes as input a set of data points $X$, a projection matrix $\Pi$, a memory parameter $\rrep$ and a failure probability $\delta$. When we instantiate this data structure, the point sets used to construct it will be the subsets of points corresponding to the nodes of the Partition Tree provided by \cref{thm:aann_main}. \cref{alg:fixed_scale_query} takes as input the data structure constructed by \cref{alg:fixed_scale_inst}, the query point $q$, a candidate solution to the convex program $v$, an approximate nearest neighbor of $q$ in $X$ and another tolerance parameter whose role will become clear when incorporating this data structure into the reduction. 

To state the correctness guarantees for \cref{alg:fixed_scale_inst}, we start by introducing some notation. For $x \in X$ and $r > 0$, define the local neighborhood of $x$ as follows: $\locn(x, r) = \{y \in X: \norm{y - x} \leq 2r\}$. For $x \in X$ such that $\abs{\locn (x, r)} \geq n^{1 - \rrep}$, our success event is simply that there exists $z \in \mc{Z}$ such that $\norm{z - x} \leq 2r$ and that $\mc{D}_z$ is instantiated successfully. For $x \in X$ such that $\abs{\locn (x, r)} \leq n^{1 - \rrep}$, the success event is more complicated. Informally, we will require that most of AP data structures produce appropriate partitions for $x$ and furthermore, that each $y \in X$ with $\norm{y - x} \leq r$ is well represented in these data structures. This is formally described in the following lemma:

\begin{lemma}
    \label{lem:fixed_scale_inst}
    Given $X = \{x_i\}_{i = 1}^n \subset \mb{R}^d$, $r > 0$, $\rrep \in [0, 1]$ and $\delta \in (0, 1)$, \cref{alg:fixed_scale_inst} produces with probability $1 - \delta$, a data structure $\mc{D}$ with the following guarantees:
    \begin{enumerate}
        \item For $x \in X$ such that $\abs{\locn (x, r)} \geq n^{1 - \rrep}$, we have that there exists $z \in \mc{Z}$ such that $\norm{x - z} \leq 2r$. That is, $x$ is assigned in the first stage of the algorithm. 
        \item For $x \in X$ such that $\abs{\locn (x, r)} < n^{1 - \rrep}$, we have:
        \begin{gather*}
            \sum_{i = 1}^l \bm{1} \lbrb{\sum_{\substack{y \in X \\ \norm{y - x} \geq 2r}} \sum_{S \in h_i(x)} \bm{1} \lbrb{y \in S} \leq O^*(n^{\rfour}) \text{ and } \sum_{\substack{y \in X \\ \norm{y - x} \leq 2r}} \sum_{S \in h_i(x)} \bm{1} \lbrb{y \in S} \leq O^*(n^{(1 - \rrep) + \rthree})} \geq 0.98l \\
            \sum_{i = 1}^l \bm{1} \lbrb{\sum_{S \in h_i (x)} \abs{S \setminus A_{i, S}} \leq O^*(n^{\rfour})} \geq 0.98l \\
            \forall y \in X \text{ such that } \norm{y - x} \leq r: \sum_{i = 1}^l \bm{1} \lbrb{\exists S \in h_i (x) \text{ such that } y \in S} \geq 0.98l
        \end{gather*}
        \item All the AANN data structures instantiated in the algorithm are instantiated successfully; that is, $\{\mc{D}_{z}\}_{z \in \mc{Z}}$ and $\{\mc{D}_{i, S, w}\}_{i \in [l], S \in \mc{D}_i, w \in \mc{W}_{i,S}}$ are instantiated successfully. 
        \item Finally, the space complexity of the data structure is $O^*(dn^{1 + \rone}(n^{\rrep} + n^{\rthree}) \log^2 1 / \delta)$.
    \end{enumerate}
\end{lemma}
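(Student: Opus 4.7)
The plan is to decompose \cref{lem:fixed_scale_inst} into its four claims, prove each with failure probability at most $\delta/\poly(n)$, and then union bound. Based on the techniques overview, I read \cref{alg:fixed_scale_inst} as a two-stage procedure. In the first stage a set $\mc{Z} \subseteq X$ of global centers is drawn by including each $x \in X$ independently with probability $p \approx n^{\rrep - 1} \log(n/\delta)$, and an AANN data structure $\mc{D}_z$ is built on nearby terminal-difference vectors for each $z \in \mc{Z}$. In the second stage, for $i \in [l]$ with $l = \Theta(\log(n/\delta))$, an independent $(\rthree, \rfour)$-AP $\mc{D}_i = (h_i, \mc{S}_i)$ is instantiated at scale $r$; within each $S \in \mc{S}_i$ a small random set $\mc{W}_{i,S} \subseteq S$ of size $\Theta(\log n)$ is drawn, $A_{i,S} \subseteq S$ records the points already covered by some representative at distance $\leq 4r$, and auxiliary AANN structures $\mc{D}_{i,S,w}$ are built on the corresponding local point sets.

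For Item~1 (dense assignment), any $x$ with $\abs{\locn(x,r)} \geq n^{1 - \rrep}$ has expected sample count $p \cdot n^{1-\rrep} \gtrsim \log(n/\delta)$ from $\mc{Z} \cap \locn(x,r)$, so by Chernoff $\Pr[\mc{Z} \cap \locn(x,r) = \emptyset] \leq 1/\poly(n,1/\delta)$ and a union bound over $x \in X$ gives the guarantee. For Item~3 (AANN instantiation success), \cref{thm:aann_main} provides failure probability $\leq 1/\poly(n,1/\delta)$ for each of the at-most $\poly(n) \cdot l$ AANN invocations, and a union bound suffices.

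For Item~2 I fix a sparse $x$ with $\abs{\locn(x,r)} < n^{1-\rrep}$ and a single index $i$. The fourth condition of \cref{def:na_ann_part} bounds the expected far-point count across $\bigcup_{S \in h_i(x)} S$ by $O^*(n^{\rfour})$, while the third condition together with sparsity of $x$ bounds the close-point count by $n^{1-\rrep} \cdot O^*(n^{\rthree}) = O^*(n^{(1-\rrep)+\rthree})$. Markov with a sufficiently large $O^*$ constant gives each per-$i$ event probability $\geq 0.99$, and a Chernoff over the $l$ independent AP instantiations yields the $0.98 l$ threshold in the first sub-item with probability $1 - e^{-\Omega(l)}$. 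The third sub-item is analogous using the fifth condition of \cref{def:na_ann_part} plus a union bound over the at most $n^2$ relevant pairs $(x, y)$, absorbed by choosing $l = \Theta(\log(n/\delta))$. Item~4 (space) decomposes as $\sum_{z \in \mc{Z}} \spcomp(\mc{D}_z) + \sum_i \spcomp(\mc{D}_i) + \sum_{i,S,w} \spcomp(\mc{D}_{i,S,w})$, which are $O^*(n^{\rrep} \cdot dn^{1+\rone})$, $O^*(l \cdot dn^{1+\rthree})$, and $O^*(l \cdot dn^{1 + \rthree + \rone})$ respectively by \cref{def:na_ann_part,def:na_ann} and a convexity estimate $\sum_S |S|^{1+\rone} \leq n^{\rone} \sum_S |S|$, matching the stated bound.

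The main obstacle is the second sub-item of Item~2: bounding $\sum_{S \in h_i(x)} \abs{S \setminus A_{i,S}}$ by $O^*(n^{\rfour})$ even when the raw close-point multiplicity across $h_i(x)$ can be as large as $O^*(n^{(1-\rrep)+\rthree})$. The argument must leverage both sources of randomness in the algorithm: the first-stage sampling ensures globally dense close points already lie in $A_{i,S}$, while the $\Theta(\log n)$-size inner sample $\mc{W}_{i,S}$ covers any close point $y$ with non-negligible local density $|S \cap \mb{B}(y,4r)|/|S|$; the remaining sparse-and-low-density close points are then few enough to be absorbed into the $O^*(n^{\rfour})$ far-point budget. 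I expect to expose the randomness in the order $\mc{Z}, \{\mc{D}_i\}_i, \{\mc{W}_{i,S}\}$, bound the expected unassigned close-count per $i$ via a case analysis on local density, apply Markov for per-$i$ success probability $\geq 0.99$, and finally Chernoff-amplify over the $l$ outer APs with the constants tuned so that the tail after union-bounding over $x \in X$ remains within $\delta$.
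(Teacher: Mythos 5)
Your decomposition and tools match the paper's proof almost exactly: Chernoff/Bernstein over the $\nrep = \Theta(n^{\rrep}\log(n/\delta))$ sampled centers for Item~1, union bounds over the $\poly(n)\cdot l$ AANN instantiations for Item~3, Markov on the \cref{def:na_ann_part} expectation bounds followed by a Chernoff over the $l$ independent APs for the first and third sub-items of Item~2, and the convexity estimate $\sum_S |S|^{1+\rone} \le n^{\rone}\sum_S|S| \le O^*(n^{\rthree+(1+\rone)})$ for Item~4. The one place where your write-up falls short of a proof is exactly the step you flag as the main obstacle, and there you are missing the single observation that makes it go through: for any $S \in h_i(x)$, every pair of points in $\locn(x,r)\cap S$ is within $4r$ of each other (triangle inequality through $x$), so if even one sampled representative $w \in \mc{W}_{i,S}$ lands in $\locn(x,r)\cap S$, then \emph{all} of $\locn(x,r)\cap S$ is assigned to $w$. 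This collapses your "local density" case analysis into a clean per-set dichotomy: if $\abs{\locn(x,r)\cap S} > \abs{(X\setminus\locn(x,r))\cap S}$, each of the $p=\Theta(\log 1/\delta^\ddagger)$ uniform draws hits $\locn(x,r)$ with probability $\ge 1/2$, so the entire close portion of $S$ is assigned except with probability $\delta^\ddagger$; otherwise $\abs{S\setminus A_{i,S}} \le \abs{S} \le 2\abs{(X\setminus\locn(x,r))\cap S}$, and summing over such $S$ is absorbed into the $O^*(n^{\rfour})$ far-point budget already guaranteed by the event $L_i=1$. Note also that your remark that "the first-stage sampling ensures globally dense close points already lie in $A_{i,S}$" is incorrect: the sets $A_{i,S}$ are populated only by the second-stage representatives $\mc{W}_{i,S}$ (the first stage populates the $A_z$), and in the regime of Item~2 the point $x$ is sparse so the first stage plays no role. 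With the triangle-inequality observation added and that stray appeal to the first stage removed, your argument coincides with the paper's.
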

\begin{proof}
    For the third claim, note that the total number of $\mc{D}_z$ instantiated is at most $\nrep$. Therefore, the probability that all the $\mc{D}_z$ are instantiated correctly is at least $1 - \delta / 16$. As for the $\mc{D}_{i, S, w}$, note that at most $O^*(lpn^{1 + \rthree})$ many of these are instantiated as there are $l$ AP data structures, each of which has at most $O^*(n^{1 + \rthree})$ subsets and each subset has at most $p$ AANN data structures. Therefore, again by the union bound the probability that each of these are instantiated correctly is at least $1 - \delta / 16$. Hence, the probability that all AANN data structures are instantiated correctly is at least $1 - \delta / 8$. 

    For the last claim, note that the space occupied by each of the $\mc{D}_z$ data structures is at most $O^*(n^{1 + \rone})$ and there are $\nrep$ of these. The space required to store each of the $l$ AP data structures is at most $O^*(n^{1 + \rthree})$. Finally, the space occupied by the $\mc{D}_{i, S, w}$ data structures is given by:
    \begin{equation*}
        \sum_{i = 1}^l \sum_{S \in \mc{S}_i} p \cdot O^*(\abs{S}^{1 + \rone} \log 1 / \delta) \leq O^*(n^{\rthree + (1 + \rone)} \log^2 1 / \delta)
    \end{equation*}
    by the fact that $\abs{S} \leq n$ for each $S$, $\sum_{S \in \mc{S}_i} \abs{S} \leq O^*(n^{1 + \rthree})$ for each $l$ and the convexity of the function $f(x) = x^{1 + \rthree}$. From the previously established bounds, the space complexity follows. 

    For the first claim, let $x \in X$ such that $\abs{\locn (x, r)} \geq n^{1 - \rrep}$ and $K_i = \bm{1} \lbrb{z_i \in \locn (x, r)}$. We have $\mb{P} (K_i = 1) \geq n^{-\rrep}$. Therefore, we have that $\mb{E} \lsrs{K \coloneqq \sum_{i = 1}^{\nrep} K_i} \geq C \log (n / \delta)$ for some large constant $C$. By noting that $\Var (K) \leq \mb{E} [K]$, we have by Bernstein's inequality that with probability at least $\delta / (16n)$ that there exists $K_i = 1$ for some $i \in [\nrep]$. Therefore, there exists $z \in \mc{Z}$ such that $\norm{z - x} \leq 2r$. By a union bound, this establishes the first claim with probability at least $1 - \delta / 16$.

    We now prove each of the three conclusions of the second claim of the lemma separately. First note that from \cref{def:na_ann_part}, the linearity of expectation and the fact that $\abs{\locn (x, r)} \leq n^{1 - \rrep}$, we have for any $i \in [l]$:
    \begin{gather*}
        \mb{P}\lbrb{\sum_{\substack{y \in X \\ \norm{y - x} \geq 2r}} \sum_{S \in h_i(x)} \bm{1} \lbrb{y \in S} \leq O^*(n^{\rfour})} \geq 0.99 \text{ and }\\
        \mb{E} \lbrb{\sum_{\substack{y \in X \\ \norm{y - x} \leq 2r}} \sum_{S \in h_i(x)} \bm{1} \lbrb{y \in S}} \leq O^*(n^{1 - \rrep + \rthree}).
    \end{gather*}
    By an application of Markov's Inequality on the second inequality, we have by the union bound:
    \begin{gather*}
        \mb{P}\lbrb{\sum_{\substack{y \in X \\ \norm{y - x} \geq 2r}} \sum_{S \in h_i(x)} \bm{1} \lbrb{y \in S} \leq O^*(n^{\rfour}) \text{ and } \sum_{\substack{y \in X \\ \norm{y - x} \leq 2r}} \sum_{S \in h_i(x)} \bm{1} \lbrb{y \in S} \leq O^*(n^{(1 - \rrep) + \rthree})} \geq 0.985.
    \end{gather*}
    Letting $L_i$ be the indicator random variable for the above event for $x$ in the data structure $\mc{D}_i$, we have by the Chernoff bound that $\mb{P}\lbrb{\sum_{i = 1}^l L_i \geq 0.98l} \geq 1 - \delta / (16n)$. Therefore, the first conclusion of the second claim holds for $x$ with probability at least $1 - \delta / (16n)$. A union bound now establishes the lemma for all $x \in X$ with probability at least $1 - \delta / 16$.

    For the second conclusion of the second claim, let $i$ be such that $L_i = 1$ from the preceding discussion. Let $S \in h_i(x)$ be such that $\abs{\locn (x, r) \cap S} > \abs{(X \setminus \locn (x, r)) \cap S}$; that is, $S$ is a set in $h_i(x)$ with more points from the local neighborhood of $x$ than far away points. For any such $S$, the probability that $w_j \in \locn (x, r)$ is at least $1/2$. Therefore, we have by the definition of $p$ that with probability at least $1 - \delta^{\ddagger}$ that there exists $w \in \mc{W}_{i, S}$ such that $w \in \locn (x, r)$. Note that all the points in $\locn (x, r) \cap S$ are assigned to $w$ in this case. By the union bound, the probability that this happens for all such $S \in h_i (x)$ is at least $1 - \delta / (ln^{2 + (\rthree + \rfour)})$. Conversely, for $S$ such that $\abs{\locn (x, r) \cap S} \leq \abs{(X \setminus \locn (x, r)) \cap S}$, the total number of unassigned points is upper bounded by $2 \abs{(X \setminus \locn (x, r)) \cap S}$ and by summing over all such $S$ the conclusion follows from the definition of $L_i$. Therefore, by a union bound, we get that the second conclusion of the second claim holds for $x \in X$ for all $i$ such that $L_i = 1$ with probability at least $1 - \delta / (n^{2 + (\rthree + \rfour)})$. The conclusion for all $x \in X$ follows from another union bound with probability at least $1 - \delta / 16$.

    Finally, for the last conclusion of the second claim, let $x, y \in X$ such that $\norm{x - y} \leq r$ and $M_i = \bm{1} \{\exists S \in h_i (x) : y \in S\}$. From \cref{def:na_ann_part}, we have that $\mb{P} \lbrb{M_i = 1} \geq 0.99$. Therefore, we have by the Chernoff bound that with probability at least $1 - \delta / (16n^3)$ that the conclusion holds for specific $x, y \in X$ satisfying $\norm{x - y} \leq r$. Through a union bound, the conclusion holds for all $x, y \in X$ with $\norm{x - y} \leq r$ with probability at least $1 - \delta / (16n)$. 

    A final union bound over all the events described in the proof gives us the required guarantees on the running of \cref{alg:fixed_scale_inst} with probability at least $1 - \delta / 2$.
  \end{proof}

\begin{algorithm}[H]
    \begin{algorithmic}[1]
        \State \textbf{Input:} Dataset $X = \{x_i\}_{i = 1}^n$, Projection $\Pi$, Scale $r > 0$, Repetition $\rrep$, Failure Probability $\delta$
        \State $\nrep \gets \Theta (n^{\rrep} \log (n / \delta)), \delta^{\dagger} \gets \Theta \lprp{\delta / \nrep}$
        \State Pick $\nrep$ main points, $\mc{Z} = \lbrb{z_i}_{i = 1}^{\nrep}$, uniformly at random from $X$
        \State For each $z \in \mc{Z}$, instantiate independent $(\rone, \rtwo, (1 + \eps^\dagger))$-AANN data structures, $\mc{D}_z$, with point set $\lbrb{y_i = \wt{v}_{x_i, z}}$ and failure probability $\delta^{\dagger}$ and for $x \in X$, assign $x$ to $z$ if $\norm{x - z} \leq 2r$ and add $x$ to $A_z$
        \State $l \gets \Theta \lprp{\log (n / \delta)}, \delta^{\ddagger} \gets \Theta \lprp{\delta / ln^{4 + 2(\rthree + \rfour)}}$
        \State Instantiate $l$ independent $(\rthree, \rfour)$-AP data structures, $\{\mc{D}_i = \lprp{h_i, \mc{S}_i}\}_{i = 1}^l$ with pointset $X$
        \For {$i \in l$ and $S \in \mc{S}_i$}
            \State Pick $p = \Theta \lprp{\log 1 / \delta^{\ddagger}}$ points, $\mc{W}_{i, S} = \{w_j\}_{j = 1}^p$ uniformly at random from $S$
            \State For each $w \in \mc{W}_{i, S}$, instantiate $(\rone, \rtwo, (1 + \eps^\dagger))$-AANN data structure, $\mc{D}_{i, S, w}$, with point set $\lbrb{\wt{v}_{x,w}}_{x \in S}$ and failure probability $\delta^{\ddagger}/p$. Assign $x \in S$ to $w$ if $\norm{x - w} \leq 4r$ and add $x$ to $A_{i, S}$
        \EndFor
        \State \textbf{Return:} $\mc{D} = \lprp{X, \mc{Z}, \lbrb{\mc{D}_z, A_z}_{z \in \mc{Z}}, \lbrb{\mc{D}_i}_{i = 1}^l, \{\mc{W}_{i, S}\}_{i \in [l], S \in \mc{D}_i}, \lbrb{\lbrb{\mc{D}_{i, S, w}}_{w \in \mc{W}_{i, S}}, A_{i, S}}_{i \in [l], S \in \mc{D}_i}}$
    \end{algorithmic}
    \caption{$\fixedscaleinst (X, \Pi, r, \rrep, \delta)$}
    \label{alg:fixed_scale_inst}
\end{algorithm}

We delay the analysis of the query procedure to the next subsection where we incorporate the fixed scale data structure into a multi scale separating oracle and conclude the proof of \cref{thm:term_embedding}. 

\begin{algorithm}[H]
    \begin{algorithmic}[1]
        \State \textbf{Input:} Data Structure $\mc{D}$, Query $q$, Candidate $v$, Approximate Nearest Nbr $\hat{x}$, Tolerance $\eps^{\dagger}$
        \If {$\norm{v - \Pi \hat{x}} \geq (1 + 10 \eps^\dagger) \norm{q - \hat{x}}$}
            \State \textbf{Return:} $\hat{x}$
        \ElsIf {$\hat{x}$ is assigned to any $z \in \mc{Z}$}
            \If {$\norm{v - \Pi z} \geq (1 + 10\eps^\dagger) \norm{q - z}$}
                \State \textbf{Return: } $z$
            \EndIf
            \If {$\abs{\inp{\tv_{z, \xhat}}{\tv_{\xhat}}} \geq 20\eps^\dagger$} 
                \State \textbf{Return: } $(z, \hat{x})$
            \EndIf
            \State Query $\mc{D}_z$ with $\tv_{z}$ and $-\tv_z$ to obtain solutions $y_i = \tv_{x_i, z}, y_j = \tv_{x_j, z}$
            \State $u = (x_i, z)$ if $\abs{\inp{\wt{v}_z}{y_i}} \geq 20\eps^\dagger$, $(x_j, z)$ if $\abs{\inp{\wt{v}_z}{y_j}} \geq 20\eps^\dagger$, $\fail$ otherwise 
            \State \textbf{Return:} $u$
        \Else 
            \For {$i \in [l] : \sum_{S \in h_i (\hat{x})} \abs{S \setminus A_{i, S}} \leq O^*(n^{\rfour})$, $\sum_{S \in h_x (\hat{x})} \abs{A_{i, S}} \leq O^*(n^{(1 - \rrep) + \rthree})$ and $S \in h_i(\hat{x})$}
                \For {$x \notin A_{i, S}$}
                    \If {$\abs{\inp{\tv_{\xhat}}{\tv_{x, \xhat}}} \geq 20\eps^\dagger$}
                        \State \textbf{Return:} $(x, \hat{x})$
                    \EndIf
                \EndFor
                \For {$w \in \mc{W}_{i, S}$}
                    \If {$\norm{v - \Pi w} \geq (1 + 10\eps^\dagger) \norm{q - w}$}
                        \State \textbf{Return: } $w$
                    \EndIf
                    \If {$\abs{\inp{\tv_{w, \xhat}}{\tv_{\xhat}}} \geq 20\eps^\dagger$} 
                        \State \textbf{Return: } $(w, \hat{x})$
                    \EndIf
                    \State Query $\mc{D}_{i, S, w}$ with $\tv_{w}$ and $-\tv_{w}$ to obtain solutions $y_i = \tv_{x_i, w}, y_j = \tv_{x_j, w}$
                    \State $u = (x_i, w)$ if $\abs{\inp{\tv_{x_i, w}}{\tv_w}} \geq 20\eps^\dagger$, $(x_j, w)$ if $\abs{\inp{\tv_{x_j, w}}{\tv_w}} \geq 20\eps^\dagger$, $\fail$ otherwise
                    \State \textbf{Return:} $u$
                \EndFor
            \EndFor
        \EndIf
        \State \textbf{Return:} $\fail$
    \end{algorithmic}
    \caption{$\fixedscalequery (\mc{D}, q, v, \hat{x}, \eps^\dagger)$}
    \label{alg:fixed_scale_query}
\end{algorithm}

\subsection{Multi Scale Reduction and Proof of \cref{thm:term_embedding}}
\label{ssec:multi_scale_term}

In this subsection, we wrap up the proof of \cref{thm:term_embedding} by incorporating the fixed scale violator detection method from \cref{ssec:fixed_scale_terminal} into a multi-scale procedure. We first define the auxiliary data structures that we will assume access to for the rest of the proof:
\begin{enumerate}
    \item From \cref{lem:conv_hull}, we may assume $\Pi \in \mb{R}^{k \times d}$ satisfies $\eps^\dagger$-convex hull distortion for $X$ with $\eps^\dagger = \wt{c} \eps$ for some suitably small constant $\wt{c} > 0$ with $k = O(\eps^{-2}\log n)$
    \item From \cref{thm:aann_main}, we may assume access to a partition tree $\mc{T}$ satisfying \cref{def:part_tree}
    \item We may assume access to a $(\rthree, \rfour, (1 + \eps))$-Adaptive Approximate Nearest Neighbor data structure for $X$ built on $\mc{T}$ from \cref{thm:aann_main}. 
\end{enumerate}

Recall the definitions of a partition tree from \cref{ssec:gen_red}. The first is the partitioning of a data points into connected components of a graph constructed by thresholding the distances between points in the dataset.

\gpcc*

The second is the notion of refinement between partitions.

\partref*

Next, define $\rmed(X)$ as:
\begin{equation*}
    \rmed(X) = \min \{r > 0: \exists C \in \cc (X, r) \text{ with } \abs{C} \geq n / 2\}.
\end{equation*}

The last is the definition of the partition tree itself.
\parttree*

We first describe the data structure which reduces the multi-scale violator detection problem to the fixed scale setting using the partition tree that is returned by \cref{thm:aann_main}. The data structure implementing the multi scale violator detector is constructed in \cref{alg:multi_scale_inst} and the corresponding query procedure is described in \cref{alg:multi_scale_query}. \cref{alg:multi_scale_inst} takes as input the projection matrix $\Pi$, the Partition Tree $\mc{T}$, a failure probability $\delta$ and a repetition factor $\rrep$. The following straightforward lemma is the only guarantee we will require of \cref{alg:multi_scale_inst}:
\begin{lemma}
    \label{lem:multi_scale_inst}
    Let $X = \{x_i\}_{i = 1}^n \subset \mb{R}^d$, $\mc{T}$ be a valid Partition Tree of $X$, $\delta \in (0, 1)$ and $\rrep \in [0, 1]$. Then, the data structures output by \cref{alg:multi_scale_inst} on input $X, \mc{T}$, $\delta$ all satisfy the conclusions of \cref{lem:fixed_scale_inst} with probability at least $1 - \delta$.
\end{lemma}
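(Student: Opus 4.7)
The plan is to prove \cref{lem:multi_scale_inst} by a straightforward union bound over all fixed-scale instantiations created by \cref{alg:multi_scale_inst}. By the overall structure sketched in \cref{ssec:gen_red}, the multi-scale construction traverses every node of the partition tree $\mc{T}$, and at each node $\mc{T}^\prime = (Z, \{\mc{T}_C\}_{C\in\clow}, \mc{T}_{\mathrm{rep}}, \clow, \chigh, \crep, \rhat)$ it interpolates the range $[\rlow, \rhigh] = [\rhat/\poly(n), \poly(n)\rhat]$ with scales $r_i = (1+\gamma)^i \rlow$, invoking \cref{alg:fixed_scale_inst} once per scale with the point set $Z$ (or $\crep$, depending on the branch).

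First, I would bound the total number $N$ of fixed-scale instantiations. By \cref{def:part_tree}, $\mc{T}$ contains $O(n\log n)$ nodes, and at each node the number of scales is $O((\log\rhigh/\rlow)/\gamma) = O(\log(n)/\gamma)$. With the choice $\gamma = \Theta(1/\log^3 n)$ used throughout, this yields $N = \poly(n)$ instantiations. Hence $\log(N/\delta) = O(\log n + \log(1/\delta))$, a quantity that is absorbed into the $O^*$ factors of the space and time bounds of \cref{lem:fixed_scale_inst}.

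Next, I would invoke \cref{lem:fixed_scale_inst} with failure parameter $\delta^\star \coloneqq \delta/N$ for each of the $N$ instantiations. That lemma guarantees that the data structure produced by a single call to \cref{alg:fixed_scale_inst} satisfies its four conclusions with probability at least $1 - \delta^\star$. Since the random decisions used to build the different fixed-scale data structures are independent (each call draws its own $\mc{Z}$, AP data structures, $\mc{W}_{i,S}$, and AANN data structures), a union bound over all $N$ calls gives overall success probability at least $1 - N\delta^\star = 1 - \delta$. On this good event, every fixed-scale data structure created by \cref{alg:multi_scale_inst} simultaneously satisfies the conclusions of \cref{lem:fixed_scale_inst}, which is exactly the statement of \cref{lem:multi_scale_inst}.

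The only mildly subtle point, and the one I would be most careful about, is verifying that $N$ is indeed polynomial in $n$ and that inflating each per-instantiation failure probability by a factor of $1/N$ does not break the stated space bound $O^*(dn^{\rrep+(1+\rone)} + dn^{\rthree+(1+\rone)})$ from \cref{lem:fixed_scale_inst} claim~(4). Since that space bound already carries a $\log^2(1/\delta)$ dependence and we are only paying $\log^2(N/\delta) = O(\log^2 n + \log^2(1/\delta))$, the extra factor is polylogarithmic and is swallowed by the $O^*$ notation. No new probabilistic argument is needed beyond this bookkeeping, which is why the lemma is stated as "straightforward."
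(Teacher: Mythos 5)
Your proposal is correct and matches the paper's proof, which is exactly this union bound over the tree nodes and the scales per node, with the per-instantiation failure probability $\delta^\dagger = \Theta(\delta/(nd)^{10})$ set in \cref{alg:multi_scale_inst} playing the role of your $\delta^\star = \delta/N$. The additional bookkeeping you note (counting $N = \poly(n)$ instantiations and checking that the inflated $\log(1/\delta^\star)$ factors are absorbed by $O^*$) is sound and consistent with the paper.
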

\begin{proof}
    The proof follows from a union bound over the nodes of the tree and the $l$ data structures instantiated for each node along with the setting of $\delta^\dagger$ in \cref{alg:multi_scale_inst} and \cref{lem:fixed_scale_inst}.
\end{proof}

\begin{algorithm}[H]
    \begin{algorithmic}[1]
        \State \textbf{Input:} Projection $\Pi$, Partition Tree $\mc{T}$, Failure Probability $\delta$, Repetition $\rrep$
        \State $\delta^{\dagger} \gets \Theta (\delta / (nd)^{10})$
        \For {$\mc{T}^\prime = \{Z, \{\mc{T}_C\}_{C \in \clow}, \mc{T}_{\mathrm{rep}}, \clow, \chigh, \crep, \rhat\} \in \mc{T}$}
            \State $\rlowt \gets \Theta (\rhat / (nd)^{20}), \rhight \gets \Theta ((nd)^{20} \rhat), \gamma \gets \Theta (1 / \log^3 n)$
            \For {$i = 0, \dots, \ceil*{\frac{\log {\rhight / \rlowt}}{\gamma}} = p$}
                \State Instantiate, $\mc{D}_{\mc{T}^\prime, i} \gets \fixedscaleinst (Z, \Pi, (1 + \gamma)^i \rlowt, \rrep, \delta^{\dagger})$
            \EndFor
        \EndFor
        \State \textbf{Return: } $\{\mc{D}_{\mc{T}^\prime, i}\}_{\mc{T}^\prime \in \mc{T},\, i \in \{0\} \cup [p]}$ 
    \end{algorithmic}
    \caption{$\multiscaleinst (\Pi, \mc{T}, \delta, \rrep)$}
    \label{alg:multi_scale_inst}
\end{algorithm}

We are now ready to conclude the proof of \cref{thm:term_embedding}. Suppose $q \in \mb{R}^d$ and we are required to construct a valid terminal embedding for $q$ with respect to the point set $X$. We may assume access to $(\xhat, \mc{T}^\prime)$ for $\xhat$ being an approximate nearest neighbor of $q$ in $X$ and $\mc{T}^\prime \in \mc{T}$ satisfying the conclusion of \cref{thm:aann_main}. Also, it suffices to construct a valid terminal embedding for the set of points in $\mc{T}^\prime$. We may assume $\mc{T}^\prime$ has more than one element as in the one element case, any point on a sphere of radius $\norm{q - \hat{x}}$ around $(\Pi \hat{x}, 0)$ suffices from \cref{thm:aann_main}. Our algorithm is based on the following set of convex constraints where $Z$ is the set of points in $\mc{T}^\prime$:
\begin{gather*}
    \forall x, y \in Z: \abs{\inp{z - \Pi x}{\Pi (y - x)} - \inp{q - x}{y - x}} \leq 20\eps^\dagger \norm{q - x} \norm{y - x} \\
    \forall x \in Z: \norm{z - \Pi x} \leq (1 + 10\eps^\dagger) \norm{q - x} \tag{Req} \label{eq:term_const}.
\end{gather*}
Let $K$ denote the convex subset of $\mb{R}^k$ satisfying the above set of constraints. Let $\hat{r} = \norm{q - \xhat}$. By \cref{lem:multi_point_nn} and Cauchy-Schwarz, $K$ is non empty and there exists $\wt{z} \in K$ with $\mb{B}(\wt{z}, \eps^\dagger \hat{r}) \subset K$ (there exists $\wt{z}$ satisfying the above constraints with the right-hand sides replaced by $15\eps^\dagger$ and $8\eps^\dagger$ respectively). Also, note that $K \subseteq \mb{B} (\Pi \xhat, 2 \hat{r})$ from the above set of constraints. While one can show a feasible point in $K$ can be used to construct a valid terminal embedding, we construct a slightly weaker oracle. We construct an oracle $\mc{O}$ satisfying:
\begin{enumerate}
    \item If $\mc{O} (v)$ outputs $\fail$, $v$ can be extended to a terminal embedding
    \item Otherwise $\mc{O} (v)$ outputs a valid separating hyperplane for $K$.
\end{enumerate}
To utilize the above oracle for constructing our terminal embeddings, we require the following guarantees on the ellipsoid algorithm which essentially states that for a convex set containing a ball of substantially large radius with an oracle that is guaranteed to output a correct separating hyperplane or $\mrm{FAIL}$, a point for which the oracle outputs $\mrm{FAIL}$ may be found quickly. In the context of the proof, $\mc{O}$ will be our oracle and the convex body will be the feasible points of \ref{eq:term_const}. The proof the statement is deferred to \cref{ssec:ellips}.
\begin{restatable}{lemma}{weakellipsoid}
    \label{lem:weak_ellipsoid}
    Suppose $\eps > 0$ and $K \subset \mb{R}^d$ be a closed convex set such that there exists $x^* \in K$ such that for all $y \in \mb{B}(x^*, \eps)$, $y \in K$. Furthermore, let $x \in \mb{R}^d$ and $R > 0$ be such that $K \subset \mb{B}(x, R)$. Suppose further that $\mc{O}$ satisfies for any input $z$, $\mc{O}$:
    \begin{enumerate}
        \item Outputs $v \neq 0$ such that for all $y \in K$, we have $\inp{v}{y - x} \geq 0$ or
        \item Outputs $\mrm{FAIL}$.
    \end{enumerate}
    Then, \cref{alg:ellipsoid} when instantiated with $x, R$ and $\mc{O}$, outputs $\hat{x}$ satisfying $\mc{O}(\hat{x}) = \mrm{FAIL}$. Furthermore, the number of iterations of the algorithm is bounded by $O\lprp{d^2 \log \frac{R}{\eps}}$ and hence the total computational complexity is bounded by $O\lprp{d^4 \log \frac{R}{\eps}}$.
\end{restatable}
From \cref{lem:weak_ellipsoid} and the fact that $K$ contains a ball of radius $\eps^\dagger \hat{r}$ and is contained in a ball of radius $2\hat{r}$, we see that there is a procedure which with $O^*(1)$ many queries to $\mc{O}$ and $O^*(1)$ total additional computation, outputs a point $v^*$ for which $\mc{O}$ outputs $\fail$. Our oracle implementing this property is defined simply by \cref{alg:multi_scale_query}. Through the rest of this subsection, we focus our attention on proving the correctness of the oracle.

Let $\mc{T}^\prime = \{Z, \{\mc{T}_C\}_{C \in \clow}, \mc{T}_{\mathrm{rep}}, \clow, \chigh, \crep, \rhat\}$ and $m = \abs{Z}$. First, we focus on the easy case when \cref{alg:multi_scale_query} does not output $\fail$. 

\begin{lemma}
    \label{lem:ms_not_fail}
    If \cref{alg:multi_scale_query} does not output $\fail$ on $v$, it outputs a separating hyperplane for $v$ from $K$.
\end{lemma}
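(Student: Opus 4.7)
The plan is a case analysis on which branch of \cref{alg:fixed_scale_query} fires, since \cref{alg:multi_scale_query} only returns what its fixed-scale subroutine produces. Every non-$\fail$ return is either (i) a single point $y \in \{\xhat, z, w\}$ triggered by the norm check $\norm{v - \Pi y} \geq (1 + 10\eps^\dagger)\norm{q - y}$, or (ii) a pair $(a, b)$ triggered by the inner-product check $\abs{\inp{\tv_b}{\tv_{a, b}}} \geq 20\eps^\dagger$ (either directly evaluated or applied as a post-check to the answer returned by an AANN query on $\mc{D}_b$). For each case I will exhibit an explicit $u \in \mb{R}^k$ with $\inp{z^* - v}{u} \geq 0$ for every $z^* \in K$.

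For case (i), the second family of constraints in \eqref{eq:term_const} gives $\norm{z^* - \Pi y} \leq (1 + 10\eps^\dagger)\norm{q - y} \leq \norm{v - \Pi y}$ for every $z^* \in K$. I would take $u := \Pi y - v$; expanding $\inp{z^* - v}{u} = \norm{v - \Pi y}^2 - \inp{z^* - \Pi y}{v - \Pi y}$ and applying Cauchy--Schwarz immediately yields $\inp{z^* - v}{u} \geq \norm{v - \Pi y}\lprp{\norm{v - \Pi y} - \norm{z^* - \Pi y}} \geq 0$, which certifies the separation.

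For case (ii), the key step is to translate the normalised $\tv$-level test back to the raw convex constraint. Using the trivial bounds $\norm{(q - b, -(v - \Pi b))} \geq \norm{q - b}$ and $\norm{(a - b, \Pi(a - b))} \geq \norm{a - b}$ on the denominators defining $\tv_b$ and $\tv_{a, b}$, the triggered guard immediately implies
\begin{equation*}
\abs{\inp{v - \Pi b}{\Pi(a - b)} - \inp{q - b}{a - b}} \geq 20\eps^\dagger\,\norm{q - b}\,\norm{a - b},
\end{equation*}
which is precisely a violation of the first family of constraints in \eqref{eq:term_const} at $(x, y) = (b, a)$. Taking $u := \sigma\, \Pi(b - a)$ where $\sigma \in \{+1, -1\}$ matches the sign of $\inp{v - \Pi b}{\Pi(a - b)} - \inp{q - b}{a - b}$ and subtracting the $K$-constraint satisfied by $z^*$ from the violated inequality satisfied by $v$ then gives $\inp{z^* - v}{u} \geq 0$.

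The one technical nuisance is the AANN branches, where the pair is supplied by a nearest-neighbour query rather than an exhaustive check. However, the algorithm applies the same $\abs{\inp{\tv_b}{\cdot}} \geq 20\eps^\dagger$ threshold test to the AANN answer before returning it, so the hypothesis of case (ii) is verifiably met for any pair actually released. The main obstacle is therefore just careful bookkeeping across the several return points of \cref{alg:fixed_scale_query}; no new idea is needed beyond the two short computations above.
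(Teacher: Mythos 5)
Your proposal is correct and follows essentially the same route as the paper's proof: the same two-case split (norm-constraint violation versus inner-product violation), with the identical key step of lower-bounding the denominators of $\tv_b$ and $\tv_{a,b}$ by $\norm{q-b}$ and $\norm{a-b}$ to convert the normalized test into a violated constraint of \eqref{eq:term_const}. The only difference is that you explicitly exhibit the separating vectors $u = \Pi y - v$ and $u = \sigma\,\Pi(b-a)$, which the paper treats as immediate once a convex constraint of $K$ is shown to be violated.
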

\begin{proof}
    Note that this only happens when some $x$ or pair $(x,y)$ is returned by \cref{alg:fixed_scale_query}. Now, \cref{alg:fixed_scale_query} may return an $x$ in one of two ways:
    \begin{enumerate}
        \item[] \textbf{Case 1: } $\norm{v - \Pi x} \geq (1 + 10 \eps^\dagger) \norm{q - x}$ for $x$ returned by the algorithm. In this case, the correctness of the procedure trivially follows from \ref{eq:term_const}.
        \item[] \textbf{Case 2: } For $x, y \in Z$, the vectors $v_1 = \wt{v}_x$ and $v_2 = \wt{v}_{y,x}$ satisfy $\abs{\inp{v_1}{v_2}} \geq 20 \eps^\dagger$. We have:
        \begin{align*}
            \abs*{\inp{q - x}{y - x} - \inp{v - \Pi x}{\Pi (y - x)}} &\geq 20\eps^\dagger \norm{(q - x, -(v - \Pi x))} \norm{(y - x, \Pi (y - x))} \\
            &\geq 20 \eps^\dagger \norm{q - x} \norm{y - x}
        \end{align*}
        which is again a violator of \ref{eq:term_const} which proves correctness in this case as well. 
    \end{enumerate}
    This concludes the proof of the lemma.
\end{proof}

Next, we consider the alternate case where \cref{alg:multi_scale_query} outputs $\fail$, where we show that the input $v$ can be used to construct a valid terminal embedding for $q$ with respect to $Z$ and consequently for $X$ from the guarantees of \cref{thm:aann_main}. We recall the following crucial parameters from \cref{alg:multi_scale_inst} where $\rhat$ is defined in \cref{def:part_tree}
\begin{equation*}
    \rlowt = \Theta (\rhat / (nd)^{20}) \qquad \rhight = \Theta ((nd)^{20} \rhat). \tag{RSET} \label{eq:rset}
\end{equation*}
\begin{lemma}
    \label{lem:ms_fail}
    If \cref{alg:multi_scale_query} outputs $\fail$ on $v$, $z_q = (v, \sqrt{\max(0, \norm{q - \hat{x}}^2 - \norm{v - \Pi \hat{x}}^2)})$ satisfies:
    \begin{equation*}
        \forall z \in Z: (1 - 2000\eps) \norm{q - z} \leq \norm{z_q - (\Pi z, 0)} \leq (1 + 2000\eps) \norm{q - z}.
    \end{equation*}
\end{lemma}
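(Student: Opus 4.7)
The plan is to recenter the squared-distance identity at the approximate nearest neighbor $\hat{x}$, where the outer extension $z_q$ already preserves distance cleanly, and to reduce the general case to bounding a single inner-product cross term using a centering point $y \in Z$ obtained from the fixed-scale data structure at the scale matching $\|z - \hat{x}\|$.

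Since \cref{alg:fixed_scale_query} returned $\fail$ at every scale, its very first test must have passed at the outermost invocation, giving $\|v - \Pi \hat{x}\| \leq (1 + 10\eps^\dagger)\|q - \hat{x}\|$; together with the definition $\alpha^2 = \max(0, \|q - \hat{x}\|^2 - \|v - \Pi \hat{x}\|^2)$ this forces $|\|z_q - (\Pi \hat{x}, 0)\|^2 - \|q - \hat{x}\|^2| \leq O(\eps)\|q - \hat{x}\|^2$. For $z \in Z$, expanding both squared distances around $\hat{x}$ yields
\begin{align*}
\|z_q - (\Pi z, 0)\|^2 - \|q - z\|^2
&= \bigl(\|z_q - (\Pi \hat{x}, 0)\|^2 - \|q - \hat{x}\|^2\bigr) \\
&\quad + 2\bigl(\langle v - \Pi \hat{x}, \Pi(\hat{x} - z)\rangle - \langle q - \hat{x}, \hat{x} - z\rangle\bigr) \\
&\quad + \bigl(\|\Pi(\hat{x} - z)\|^2 - \|\hat{x} - z\|^2\bigr).
\end{align*}
The first term is $O(\eps)\|q - \hat{x}\|^2 \leq O(\eps)\|q - z\|^2$, and the third is $O(\eps)\|\hat{x} - z\|^2 \leq O(\eps)\|q - z\|^2$ by $\eps$-convex-hull distortion of $\Pi$ together with $\|\hat{x} - z\| \leq 2\|q - z\|$ from the approximate optimality of $\hat{x}$. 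The whole lemma therefore reduces to bounding the middle cross term by $O(\eps)\|q - z\|\|\hat{x} - z\|$.

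To handle the cross term, I would insert a centering point $y \in Z$ near $z$: rewrite $\hat{x} - z = (\hat{x} - y) + (y - z)$, $v - \Pi \hat{x} = (v - \Pi y) + \Pi(y - \hat{x})$, and $q - \hat{x} = (q - y) + (y - \hat{x})$, and expand. This produces an $\hat{x}$-centered inner product for the pair $(\hat{x}, y)$, a $y$-centered inner product for the pair $(y, z)$, and a pure cross-curvature term $\langle \Pi(y - \hat{x}), \Pi(y - z)\rangle - \langle y - \hat{x}, y - z\rangle$, which is $O(\eps)\|y - \hat{x}\|\|y - z\|$ by convex-hull distortion. The centering point is chosen by scale: if $\|z - \hat{x}\| > \rhight$, I would take $y = \hat{x}$ and bound the cross term crudely using the large ratio of $\|z - \hat{x}\|$ to $\|q - \hat{x}\| \leq \rhight$; otherwise I would pick the smallest $i \in \{0, \ldots, p\}$ with $\|z - \hat{x}\| \leq r_i$. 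By the AP guarantees recorded in \cref{lem:fixed_scale_inst}, $z$ lies in some $S \in h_i(\hat{x})$. If $z$ is unassigned in $S$ I would set $y = \hat{x}$ and read off $|\langle \tv_{\hat{x}}, \tv_{z, \hat{x}}\rangle| \leq 20\eps^\dagger$ directly from the explicit unassigned-point check, which had to succeed for $\fail$ to be output; if $z$ is assigned to some $w \in \mc{W}_{i, S}$ with $\|w - z\| \leq 4r_i$, I would set $y = w$. In the latter case $\fail$ certifies both $\|v - \Pi w\| \leq (1 + 10\eps^\dagger)\|q - w\|$ and $|\langle \tv_{w, \hat{x}}, \tv_{\hat{x}}\rangle| \leq 20\eps^\dagger$ explicitly, while the AANN query on $\mc{D}_{i, S, w}$ returns candidates whose inner product with $\pm\tv_w$ is $\leq 20\eps^\dagger$; the $(1 + \eps^\dagger)$-approximation guarantee upgrades this to $|\langle \tv_{z, w}, \tv_w\rangle| \leq O(\eps)$ for every $z$ assigned to $w$ in $S$. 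Translating these unit-vector bounds back to the raw inner-product form uses only the definition of $\tv$ and the accompanying norm check. An entirely analogous argument handles the branch where $\hat{x}$ is itself assigned to a main point in $\mc{Z}$.

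The hard part will be the constant chase: the cross term routes through two inner-product translations and several uses of convex-hull distortion, so the cumulative factor multiplying $\eps$ has to stay well below the $2000$ slack the lemma allows. This forces me to verify that $\|q - y\|$, $\|y - \hat{x}\|$, and $\|y - z\|$ are all simultaneously $O(\|q - z\|)$; for $y = w$ this follows from $\|w - z\| \leq 4r_i$, the minimality of $i$, and $\|q - z\| \geq \|z - \hat{x}\|/2$. A secondary subtlety is the truncation in $\alpha^2 = \max(0, \cdot)$: when $\|v - \Pi \hat{x}\|^2 > \|q - \hat{x}\|^2$, $\|v - \Pi \hat{x}\|^2 + \alpha^2$ exceeds $\|q - \hat{x}\|^2$ by up to $O(\eps)\|q - \hat{x}\|^2$, but this defect only enters the first of the three summands above and is absorbed by the final slack.
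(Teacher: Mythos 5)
Your proposal is correct and follows essentially the same route as the paper: the same case split (points beyond $\rhight$ handled crudely; otherwise the smallest scale $i$ with $\|z-\hat{x}\|\le r_i$, then unassigned versus assigned, with the $\mc{Z}$-branch analogous), the same certificates extracted from $\fail$ (the explicit norm and $\hat{x}$-centered checks plus the AANN queries upgraded via the $(1+\eps^\dagger)$-approximation), and the same three-term squared-distance expansion. The only difference is cosmetic: you expand around $\hat{x}$ and re-center the cross term at $y$, whereas the paper expands directly around the centering point $w$ and isolates the $\hat{x}$-centered piece as a separate claim (\cref{clm:distance_bound_xhat_centered}); these are the same decomposition in a different order.
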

\begin{proof}
Note that since \cref{alg:fixed_scale_query} returned $\fail$ for all fixed scale data structures, we must have $\norm{v - \Pi \hat{x}} \leq (1 + 10 \eps^\dagger) \norm{q - \hat{x}}$. As a consequence, we get that $\norm{z_q - (\Pi \hat{x}, 0)} \leq (1 + 10\eps^\dagger) \norm{q - \hat{x}}$. Now, we need to show that $z_q$ is a valid terminal embedding of $q$ for an arbitrary $x \in Z$. We first consider the case where $\norm{x - \hat{x}} \geq 0.5 \rhight$ as defined in \cref{alg:multi_scale_inst} for the node $\mc{T}^\prime$. For this point, we have:
\begin{equation*}
    \norm{q - x} \geq \norm{\hat{x} - x} - \norm{q - \hat{x}} \geq \lprp{1 - \frac{1}{(10nd)^{9}}} \norm{\hat{x} - x}
\end{equation*}
where the last inequality comes from the condition on $\norm{q - \hat{x}}$ from \cref{thm:aann_main} and the definition of $\rhight$. Now, we get:
\begin{align*}
    \abs{\norm{x - q} - \norm{z_q - (\Pi x, 0)}} &\leq \abs{\norm{\hat{x} - x} - \norm{\Pi (\hat{x} - x)}} + \norm{\hat{x} - q} + \norm{z_q - (\Pi \xhat, 0)} \\
    &\leq \eps^\dagger \norm{\hat{x} - x} + (2 + \eps) \norm{\hat{x} - q} \\
    &\leq 2 \eps^\dagger \norm{q - x} + (2 + \eps) \norm{\hat{x} - q} \\
    & \leq \norm{q - x} \lprp{2 \eps^\dagger + 3 \frac{\norm{\hat{x} - q}}{\norm{q - x}}}\leq \eps \norm{x - q}
\end{align*}
where the second inequality follows from the fact that $\Pi$ as $\eps^\dagger$-convex full distortion for $X$, the third from the previous display and last inequality from the the fact that $\norm{\hat{x} - q} \leq O((nd)^{10} \rhat)$ (\cref{thm:aann_main}) and $\norm{q - x} \geq \norm{\hat{x} - x} / 2 \geq \rhight / 4$ and the values of $\rhat$ and $\rhight$ (\ref{eq:rset}). 

We now consider the alternative case where $\norm{x - \hat{x}} \leq 0.5 \rhight$. In this case, from the definition of $p$ in \cref{alg:multi_scale_inst}, let $i^*$ be the smallest $i \in [p]$ such that $\norm{x - \hat{x}} \leq (1 + \gamma)^{i^*} \rlowt$. Note that $i^*$ is finite from our condition on $\norm{x - \xhat}$. For this $i^*$, consider the fixed scale data structure, $\mc{D}_{\mc{T}^\prime, i^*} = \lprp{Z, \mc{Z}, \lbrb{\mc{D}_z, A_z}_{z \in \mc{Z}}, \lbrb{\mc{D}_i}_{i = 1}^l, \{\mc{W}_{i, S}\}_{i \in [l], S \in \mc{D}_i}, \lbrb{\lbrb{\mc{D}_{i, S, w}}_{w \in \mc{W}_{i, S}}, A_{i, S}}_{i \in [l], S \in \mc{D}_i}}$ constructed in \cref{alg:multi_scale_inst}. Letting $\wt{r} = (1 + \gamma)^{i^*} \rlowt$, we prove the lemma in two cases corresponding to the structure of $\mc{D}_{\mc{T}^\prime, i^*}$ as described in \cref{lem:fixed_scale_inst}:
\begin{enumerate}
    \item[] \textbf{Case 1: } $\hat{x} \in A_z$ for some $z \in \mc{Z}$. Note that this necessarily happens if $\abs{\locn(\hat{x}, \wt{r})} \geq m^{1 - \rrep}$. 
    \item[] \textbf{Case 2: } $\hat{x} \notin A_z$ for all $z \in \mc{Z}$. Recall, this implies $\abs{\locn(\hat{x}, \wt{r})} < m^{1 - \rrep}$.
\end{enumerate}
In either case, the following two simple claims will be crucial in bounding the error terms:
\begin{claim}
    \label{clm:distance_bound}
    For all $x \in Z$, we have:
    \begin{equation*}
        \norm{x - q} \geq \frac{1}{(2 + \eps + o^*(1))} \norm{\hat{x} - x}
    \end{equation*}
\end{claim}
\begin{proof}
    We have
    \begin{equation*}
        \norm{\hat{x} - x} \leq \norm{\hat{x} - q} + \norm{x - q} \leq (2 + \eps + o^*(1)) \norm{x - q}.
    \end{equation*}
    By re-arranging the above inequality, we obtain our result.
\end{proof}

\begin{claim}
    \label{clm:q_dist_bnd}
    We have:
    \begin{equation*}
        \norm{q - x} \geq \frac{5}{12} \wt{r}.
    \end{equation*}
\end{claim}
\begin{proof}
    If $i^* = 0$, we have:
    \begin{equation*}
        \norm{q - x} \geq (1 + \eps + o^*(1))^{-1} \norm{q - \xhat} \geq (10nd)^5 \rlowt
    \end{equation*}
    from our assumption on $\norm{q - \xhat}$ (\cref{thm:aann_main}). When $i^* > 0$, we have $(1 + \gamma)^{-1} \wt{r} \leq \norm{\xhat - x} \leq \wt{r}$ and the conclusion follows from \cref{clm:distance_bound}.
\end{proof}
In the first case, we get that $\hat{x}$ is assigned to some $z \in \mc{Z}$ with $\norm{\hat{x} - z} \leq 2\wt{r}$ which implies by the triangle inequality $\norm{x - z} \leq 3\wt{r}$. 

We now prove another claim we will use through the rest of this proof:
\begin{claim}
    \label{clm:distance_bound_xhat_centered}
    For all $w \in Z$, let $\wt{w} = \wt{v}_{w, \hat{x}}$ and $\wt{v} = \wt{v}_{\hat{x}}$. If $\abs{\inp{\wt{w}}{\wt{v}}} \leq 20\eps^\dagger$:
    \begin{equation*}
        \norm{z_q - (\Pi w, 0)} - \norm{w - q} \leq 400 \eps^\dagger \norm{w - q}.
    \end{equation*}
\end{claim}
\begin{proof}
    We have:
    \begin{align*}
        \abs{\norm{w - q}^2 - \norm{(\Pi w, 0) - z_q}^2} &\leq \abs{\norm{w - \hat{x}}^2 - \norm{\Pi (w - \hat{x})}^2} + \abs{\norm{q - \hat{x}}^2 - \norm{z_q - (\Pi \hat{x}, 0)}^2} \\
        &\quad + 2 \abs{\inp{w - \hat{x}}{q - \hat{x}} - \inp{\Pi (w - \hat{x})}{v - \Pi \hat{x}}} \\
        &\leq 3\eps^\dagger \norm{w - \hat{x}}^2 + 25\eps^\dagger \norm{q - \hat{x}}^2 + 100 \eps^\dagger \norm{w - \hat{x}} \norm{q - \hat{x}} \\
        &\leq 750 \eps^\dagger \norm{w - q}^2
    \end{align*}
    where the second inequality follows from the fact that $\Pi$ satisfies $\eps^\dagger$-convex hull distortion for $X$, our claim on $\norm{z_q - (\Pi \hat{x}, 0)}$ and our condition on $\inp{\wt{w}}{\wt{v}}$ while the final inequality follows from \cref{clm:distance_bound} and the fact that $\hat{x}$ is a $(1 + \eps + o^*(1))$-Approximate Nearest Neighbor of $q$. By factorizing the LHS and dividing by $\norm{w - q}$, we get the desired result.
\end{proof}
Returning to our proof, recall $\norm{x - z} \leq 3 \wt{r}$ and from \cref{clm:q_dist_bnd}, $\norm{q - x} \geq \frac{5}{12} \wt{r}$. We claim since $\mc{D}_{\mc{T}^\prime, i^*}$ returns $\fail$ that $\abs{\inp{\wt{v}_{x,z}}{\wt{v}_z}} \leq 4\eps$. To see this, suppose $\inp{\wt{v}_{x,z}}{\wt{v}_z} \geq 4\eps$. In this case, we have:
\begin{equation*}
    \norm{\wt{v}_{x,z} - \wt{v}_z}^2 \leq 2 - 8\eps \implies \norm{\wt{v}_{x,z} - \wt{v}_{z}} \leq (1 - 2\eps)\sqrt{2}.
\end{equation*}
Therefore, $\mc{D}_z$ on input $\wt{v}_z$ returns $y$ such that $\norm{y - \wt{v}_z} \leq (1 + \eps + o^*(1)) (1 - 2\eps) \sqrt{2}$ follows from the fact that $\mc{D}_z$ is successfully instantiated. From this, we get that $\norm{y - \wt{v}_z} \leq \sqrt{2} (1 - 0.8\eps)$. From this expression, we obtain:
\begin{equation*}
    \norm{y - \wt{v}_z}^2 = 2 - 2\inp{y}{\wt{v}_z} \implies \inp{y}{\wt{v}_z} \geq 0.8 \eps
\end{equation*}
which contradicts the assumption that $\mc{D}_{\mc{T}^\prime, i^*}$ returns $\fail$. The proof for $\inp{\wt{v}_z}{\wt{v}_{x,z}} \leq -4\eps$ is similar by replacing $\wt{v}_z$ by $-\wt{v}_z$ in the above proof and using the fact that we query $\mc{D}_{z}$ with $-\wt{v}_z$ as well. Hence, we have $\abs{\inp{\wt{v}_z}{\wt{v}_{x,z}}} \leq 4\eps$. Finally, bound the deviation of $\norm{z_q - (\Pi x, 0)}$ from $\norm{q - x}$:
\begin{align*}
    \abs{\norm{x - q}^2 - \norm{(\Pi x, 0) - z_q}^2} &\leq \abs{\norm{x - z}^2 - \norm{\Pi (x - z)}^2} + \abs{\norm{q - z}^2 - \norm{z_q - (\Pi z, 0)}^2} \\
    &\quad + 2 \abs{\inp{x - z}{q - z} - \inp{\Pi (x - z)}{v - \Pi z}} \\
    &\leq 3\eps^\dagger \norm{x - z}^2 + 750 \eps^\dagger \norm{q - z}^2 + 2 \cdot 4\eps \cdot \sqrt{5} \cdot \norm{x - z} \norm{q - z} \\
    &\leq 3\eps^\dagger \norm{x - z}^2 + 750 \eps^\dagger \norm{q - z}^2 + 20 \eps \norm{x - z} \norm{q - z} \\
    &\leq 27 \eps^\dagger \wt{r}^2 + 750 \eps^\dagger (\norm{q - x} + \norm{x - z})^2 + 20 \eps \norm{x - z} (\norm{q - x} + \norm{x - z}) \\
    &\leq 27 \eps^\dagger \wt{r}^2 + 1500 \eps^\dagger (\norm{q - x}^2 + \norm{x - z}^2) + 60 \eps \wt{r} (\norm{q - x} + 3\wt{r}) \\
    &\leq 1500 \eps \norm{q - x}^2
\end{align*}
where the second inequality follows from the fact that $\Pi$ has $\eps^\dagger$-convex hull distortion for $X$, \cref{clm:distance_bound_xhat_centered} and the fact that $\abs{\inp{\wt{v}_z}{\wt{v}_{x,z}}} \leq 4\eps$, the fourth inequality follows from the fact that $\norm{x - z} \leq 3\wt{r}$ and the last and second-to-last inequalities follow from the fact that $(a + b)^2 \leq 2(a^2 + b^2)$ and \cref{clm:q_dist_bnd}. Factorizing the LHS now establishes the lemma in this case.

We now consider the alternate case where $\abs{\locn(\hat{x}, \wt{r})} < m^{1 - \rrep}$. For the fixed scale data structure, $\mc{D}_{\mc{T}^\prime, i^*}$, recall that we have from \cref{lem:fixed_scale_inst} that:
\begin{gather*}
    \sum_{i = 1}^l \bm{1} \lbrb{\sum_{\substack{y \in Z \\ \norm{y - \hat{x}} \geq 2\wt{r}}} \sum_{S \in h_i(\hat{x})} \bm{1} \lbrb{y \in S} \leq O^*(m^{\rfour}) \text{ and } \sum_{\substack{y \in Z \\ \norm{y - \hat{x}} \leq 2\wt{r}}} \sum_{S \in h_i(\hat{x})} \bm{1} \lbrb{y \in S} \leq O^*(m^{(1 - \rrep) + \rthree})} \geq 0.98l \\
    \sum_{i = 1}^l \bm{1} \lbrb{\sum_{S \in h_i (\hat{x})} \abs{S \setminus A_{i, S}} \leq O^*(m^{\rfour})} \geq 0.98l \\
    \sum_{i = 1}^l \bm{1} \lbrb{\exists S \in h_i (\hat{x}) \text{ such that } x \in S} \geq 0.98l
\end{gather*}
In particular, we have by the union bound, there exists a set $\mc{J} \subset [l]$ with $\abs{\mc{J}} \geq 0.9l$ satisfying the indicators in all three of the above equations. For any $j \in \mc{J}$, from \cref{alg:fixed_scale_query}, all the subsets in $h_j(\hat{x})$ are fully explored. For some $j \in \mc{J}$, consider $S \in h_j(\hat{x})$ such that $x \in S$. Now, if $x \notin A_{j, S}$ or $x \in \mc{W}_{j, S}$, conclusion of the lemma follows from \cref{clm:distance_bound_xhat_centered}. Hence, the only case left to consider is the case where $x \in A_{j, S}$. Let $x$ be assigned to $w \in \mc{W}_{j, S}$ such that $\norm{x - w} \leq 4 \wt{r}$. In this case, we proceed identically to the previous case where $\hat{x}$ is assigned to some $z \in \mc{Z}$. Since $\mc{D}_{\mc{T}^\prime, i^*}$ returns $\fail$, $\abs{\inp{\wt{v}_w}{\wt{v}_{x,w}}} \leq 4\eps$. To see this, assume $\inp{\wt{v}_w}{\wt{v}_{x,w}} \geq 4\eps$ and we have:
\begin{equation*}
    \norm{\wt{v}_w - \wt{v}_{x,w}}^2 \leq 2 - 8\eps \implies \norm{\wt{v}_w - \wt{v}_{x,w}} \leq (1 - 2\eps)\sqrt{2}.
\end{equation*}
Therefore, $\mc{D}_{j, S, w}$ on input $\wt{v}_w$ returns $y$ such that $\norm{y - \wt{v}_w} \leq (1 + \eps + o^*(1)) (1 - 2\eps) \sqrt{2}$ follows from the fact that $\mc{D}_{j, S, w}$ is successfully instantiated. From this, we get $\norm{y - \wt{v}_w} \leq \sqrt{2} (1 - 0.8\eps)$. We obtain:
\begin{equation*}
    \norm{y - \wt{v}_w}^2 = 2 - 2\inp{y}{\wt{v}_w} \implies \inp{y}{\wt{v}_w} \geq 0.8 \eps
\end{equation*}
which contradicts the assumption that $\mc{D}_{\mc{T}^\prime, i^*}$ returns $\fail$. The proof when $\inp{\wt{v}_w}{\wt{v}_{x,w}} \leq -4\eps$ is similar by replacing $\wt{v}_w$ by $-\wt{v}_w$ and using the fact that we query $\mc{D}_{j, S, w}$ with $-\wt{v}_w$ as well. Hence, we have that $\abs{\inp{\wt{v}_w}{\wt{v}_{x,w}}} \leq 4\eps$. As before, we bound the deviation of $\norm{z_q - (\Pi x, 0)}$ from $\norm{q - x}$:
\begin{align*}
    \abs{\norm{x - q}^2 - \norm{(\Pi x, 0) - z_q}^2} &\leq \abs{\norm{x - w}^2 - \norm{\Pi (x - w)}^2} + \abs{\norm{q - w}^2 - \norm{z_q - (\Pi w, 0)}^2} \\
    &\quad + 2 \abs{\inp{x - w}{q - w} - \inp{\Pi (x - w)}{v - \Pi w}} \\
    &\leq 3\eps^\dagger \norm{x - w}^2 + 750 \eps^\dagger \norm{q - w}^2 + 2 \cdot 4\eps \cdot \sqrt{5} \cdot \norm{x - w} \norm{q - w} \\
    &\leq 3\eps^\dagger \norm{x - w}^2 + 750 \eps^\dagger \norm{q - w}^2 + 20 \eps \norm{x - w} \norm{q - w} \\
    &\leq 48 \eps^\dagger \wt{r}^2 + 750 \eps^\dagger (\norm{q - x} + \norm{x - w})^2 + 20 \eps \norm{x - w} (\norm{q - x} + \norm{x - w}) \\
    &\leq 48 \eps^\dagger \wt{r}^2 + 1500 \eps^\dagger (\norm{q - x}^2 + \norm{x - w}^2) + 80 \eps \wt{r} (\norm{q - x} + 4\wt{r}) \\
    &\leq 2000 \eps \norm{q - x}^2
\end{align*}
where the second inequality follows from the fact that $\Pi$ has $\eps^\dagger$-convex hull distortion for $X$, \cref{clm:distance_bound_xhat_centered} and the fact that $\abs{\inp{\wt{v}_w}{\wt{v}_{x,w}}} \leq 4\eps$, the fourth inequality follows from the fact that $\norm{x - w} \leq 4\wt{r}$ and the last and second-to-last inequalities follow from the fact that $(a + b)^2 \leq 2(a^2 + b^2)$ and \cref{clm:q_dist_bnd}. Factorizing the LHS now establishes the lemma in this case concluding the proof of the lemma.
\end{proof}

Having proved the correctness of \cref{alg:multi_scale_query}, we finally bound its runtime. Before doing so, we require a data structure allowing a weak form of dimensionality reduction that allows a speed-up of our algorithm. Note that standard techniques for dimensionality reduction may not be used in our setting as the queries may be chosen based on the data structure itself. For example, for the Johnson-Lindenstrauss lemma, the queries may be chosen orthogonal to the rows of the projection matrix violating its correctness guarantees. We first define an approximate inner product condition below for a matrix $\Pi$ and $x, y, z \in \mb{R}^d$ and $\eps > 0$
\begin{equation*}
    \abs*{\inp{\Pi (x - z)}{\Pi (y - z)} - \inp{x - z}{y - z}} \leq \eps \norm{x - z} \norm{y - z} \tag{AP\text{-}IP}. \label{eq:ap_ip}
\end{equation*}
For $x,y,z \in \mb{R}^d$ and $\eps > 0$, we say a matrix $\Pi$ satisfies $\ref{eq:ap_ip}(\eps, x, y, z)$ if it satisfies \ref{eq:ap_ip}. For a dataset, $X \subset \mb{R}^d$, $\Pi$ satisfies $\ref{eq:ap_ip}(\eps, X)$ if it satisfies $\ref{eq:ap_ip}(\eps, x, y, z)$ for all $x,y,z \in X$. 

We utilize the following theorem which states that if one initializes $\wt{O} (d)$ many independent JL-sketches, for any input query, most of them satisfy the \ref{eq:ap_ip} condition for the dataset $X$ augmented with the input query. This essentially allows us to reduce the dimensionality of the search problem by instead verifying the inner product condition in \ref{eq:gen_prog} on a randomly drawn sketch from this set as opposed to the $d$-dimensional space containing the dataset.

\begin{restatable}{theorem}{jlrep}
    \label{thm:jl_rep}
    Let $X = \{x_i\}_{i = 1}^n \subset \mb{R}^d$, $\eps \in \lprp{\frac{1}{\sqrt{nd}}, 1} $ and $\delta \in (0, 1)$. Furthermore, let $\{\Pi^i\}_{i = 1}^m \subset \mb{R}^{k \times d}$ for $m \geq \Omega((d + \log 1 / \delta) \log nd)$ and $k \geq \Omega\lprp{\frac{\log n}{\eps^2}}$ with $\Pi^i_{j, l} \overset{iid}{\thicksim} \mc{N} (0, 1 / k)$. Then, we have:
    \begin{equation}
        \forall q \in \mb{R}^d : \sum_{i = 1}^m \bm{1} \lbrb{\Pi^i \text{ satisfies } \ref{eq:ap_ip}(\eps,X \cup \{q\})} \geq 0.95m. \tag{JL-Rep} \label{eq:jl_rep}
    \end{equation}
    with probability at least $1 - \delta$.
\end{restatable}

We now use \cref{thm:jl_rep} to bound the runtime of \cref{alg:multi_scale_query}.

\begin{lemma}
    \label{lem:ms_qtime}
    \cref{alg:multi_scale_query} is implementable in time $O^*(d + n^{\rtwo} + n^{\rfour} + n^{\rfour + (1 + \rthree - \rfour - \rrep) \rtwo})$ with probability at least $1 - n^{-10}$. 
\end{lemma}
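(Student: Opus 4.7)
The plan is to trace through \cref{alg:multi_scale_query} level by level, accumulating the cost at each stage. Conditioning on the events of \cref{lem:multi_scale_inst} (all auxiliary data structures instantiated correctly) and on the median-JL sketches of \cref{sec:med_jl} preserving the relevant inner products for the query $q$, every per-operation cost becomes a deterministic function of the combinatorial quantities at hand, and only the size estimates need to be assembled. The top-level call to $\multiscalequery$ processes the distinguished node $\mc{T}^\prime \in \mc{T}$ and, at that node, invokes $\fixedscalequery$ on the $p = O^*(1)$ fixed-scale structures $\mc{D}_{\mc{T}^\prime, i}$; since the scale enumeration contributes only an $O^*(1)$ multiplicative factor, it suffices to bound the cost of one $\fixedscalequery$ call.

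First I would pay for the query-time projections $\Pi q, \Pi\hat{x}$ and the associated median-JL sketches: they are computed once in $O^*(d)$ time and let each subsequent norm / inner-product test of the form $\norm{v - \Pi u}$ or $\abs{\inp{\tv_u}{\tv_{v,w}}}$ execute in $O^*(1)$ time. This supplies the $d$ term. Next I split on the two branches of \cref{alg:fixed_scale_query}. In the branch where $\hat{x}$ is assigned to some $z \in \mc{Z}$, the body performs $O^*(1)$ inner-product checks followed by a single AANN query against $\mc{D}_z$ costing $O^*(n^{\rtwo})$, which supplies the $n^{\rtwo}$ term.

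In the unassigned branch, the algorithm first evaluates $h_i(\hat{x})$ in time $O^*(n^{\rfour})$ (AP definition, with the $d$ factor removed by the JL sketch), then iterates over the unassigned points $x \notin A_{i,S}$ across $S \in h_i(\hat{x})$, whose total count is capped by the explicit filter $\sum_{S \in h_i(\hat{x})} |S \setminus A_{i,S}| \leq O^*(n^{\rfour})$; each check costs $O^*(1)$, so this contributes $O^*(n^{\rfour})$. Finally, for each $S \in h_i(\hat{x})$ and each of the $p = O^*(1)$ witnesses $w \in \mc{W}_{i,S}$, the algorithm issues one AANN query against $\mc{D}_{i,S,w}$ in time $O^*(|S|^{\rtwo})$. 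The main remaining calculation is bounding $\sum_{S \in h_i(\hat{x})} |S|^{\rtwo}$, and this is where the trade-off parameters interact most subtly.

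For the hard step I would use the two filters that gate the loop: the AP guarantee $|h_i(\hat{x})| \leq O^*(n^{\rfour})$, together with the mass constraint
\begin{equation*}
    \sum_{S \in h_i(\hat{x})} |S| \;=\; \sum_{S \in h_i(\hat{x})} |S \setminus A_{i,S}| \,+\, \sum_{S \in h_i(\hat{x})} |A_{i,S}| \;\leq\; O^*\bigl(n^{\rfour} + n^{(1-\rrep) + \rthree}\bigr).
\end{equation*}
Since $\rtwo \leq 1$, the function $x \mapsto x^{\rtwo}$ is concave, and Jensen's inequality applied to the uniform distribution on $h_i(\hat{x})$ yields
\begin{equation*}
    \sum_{S \in h_i(\hat{x})} |S|^{\rtwo} \;\leq\; |h_i(\hat{x})|^{1-\rtwo} \cdot \Bigl(\sum_{S \in h_i(\hat{x})} |S|\Bigr)^{\rtwo} \;\leq\; O^*\bigl(n^{\rfour} + n^{\rfour + (1 + \rthree - \rfour - \rrep)\rtwo}\bigr),
\end{equation*}
which, multiplied by the $p = O^*(1)$ factor for the witnesses, is exactly the last term in the claimed bound (the stray $n^{\rfour}$ summand having been absorbed into the separately accounted $n^{\rfour}$ term). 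Adding the four contributions and absorbing the $O^*(1)$ factors from the scale enumeration gives the stated runtime. The $1 - n^{-10}$ probability is the union of the setup event from \cref{lem:multi_scale_inst} and the high-probability guarantee from \cref{sec:med_jl} that enough of the median-JL sketches remain accurate on the (possibly adaptive) query $q$; this is precisely what certifies that every per-check cost is $O^*(1)$ rather than $O(d)$.
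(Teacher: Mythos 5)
Your proposal is correct and follows essentially the same route as the paper's proof: the same decomposition into the $O^*(1)$ scales, the assigned/unassigned branches, the concavity (Jensen/H\"older) step $\sum_{S}\abs{S}^{\rtwo}\leq\abs{h_i(\xhat)}^{1-\rtwo}(\sum_S\abs{S})^{\rtwo}$ using the algorithm's explicit filters, and the median-JL sketches from \cref{thm:jl_rep} to decouple the $d$ term from the $n$-dependent terms. The only difference is that you make the Jensen computation and the mass-constraint bookkeeping explicit where the paper merely cites concavity, which is a harmless elaboration.
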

\begin{proof}
    Note that we may restrict ourselves to bounding the runtime of \cref{alg:fixed_scale_query} as we query at most $O^* (1)$ many of them. For a single fixed scale data structure, $\mc{D}_{\mc{T}^\prime, i}$ being queried, computing the sets $h_i (\xhat)$ takes time $O^*(dn^{\rfour})$. If $\xhat$ is assigned to a point $z$, the nearest neighbor procedure takes time $O^*(dn^{\rtwo})$. Otherwise, processing the unassigned points takes time $O^*(dn^{\rfour})$ and for the assigned points, there are at most $O^*(n^{1 - \rrep + \rthree})$ in at most $O^*(n^{\rfour})$ many sets. From the concavity of the function $f(x) = x^{\rtwo}$, we get that the maximum time taken to query all of these nearest neighbor data structures is at most $O^*(dn^{\rfour + (1 + \rthree - \rfour - \rrep) \rtwo})$. We now discuss how to decouple the dimensionality term from the term dependent on $n$. 
    
    As explained in \cref{ssec:runtime_adapt}, it suffices for our argument to use $\Pi^\prime (x - y)$ instead of $x - y$ as the first component in the construction of the vectors used to instantiate the data structures $\mc{D}_z$ and $\mc{D}_{i, S, w}$ in \cref{alg:fixed_scale_inst} for any $\Pi^\prime$ satisfying:
    \begin{equation*}
        \forall x, y, z \in X \cup \{q\}: \abs{\inp{\Pi^\prime (x - z)}{\Pi^\prime (y - z)} - \inp{x - z}{y - z}} \leq o^*(1) \cdot \norm{x - z} \norm{y - z}.
    \end{equation*}
    From \cref{thm:jl_rep}, we get that if we instantiate $m = \wt{\Theta} (d)$ JL sketches, $\{\Pi^i\}_{i \in [m]}$ with $\Theta (\log n \log \log n)$ rows each, the above condition is satisfied for at least $95\%$ of them with probability at least $1 - n^{-10}$. To construct our final violator detection subroutine, we instantiate $m$ copies of our violator detection algorithm for the projections of the data points with respect to each of the sketches $\Pi^i$. At query time, we simple sample $\Theta (\log n)$ many of these sketches for a possible violator. We then check the validity of each of the returned candidates which can be done in time $\Ot (d)$. Since, $95\%$ of the sketches satisfy the above condition, at least one of the sampled sketches will with probability at least $1 - n^{-10}$ and hence, satisfies the guarantees required of our oracle.
\end{proof}

Hence, \cref{lem:ms_fail,lem:ms_not_fail,lem:ms_qtime} along with \cref{thm:aann_main,thm:jl_rep} and the preceding discussion conclude the proof of \cref{thm:term_embedding} via a union bound over the $O^*(1)$ rounds of the Ellipsoid algorithm. 
\qed

\begin{algorithm}[H]
    \begin{algorithmic}[1]
        \State \textbf{Input:} Data Structure $\lbrb{\mc{D}_{\mc{T}^\prime, i}}_{\mc{T}^\prime \in \mc{T}, i \in \{0\}\cup [p]}$, Tree $\mc{T}$, Query $q$, Candidate $v$, AANN Output $(\hat{x}, \mc{T}^\prime)$, Tolerance $\eps^{\dagger}$
        \For {$i = 0 \dots, p$}
            \If {$x = \fixedscalequery (\mc{D}_{\mc{T}^\prime, i}, q, v, \hat{x}, \eps^\dagger) \neq \mrm{FAIL}$}
                \State \textbf{Return: } $x$
            \EndIf
        \EndFor
        \State \textbf{Return: } $\mrm{FAIL}$ 
    \end{algorithmic}
    \caption{$\multiscalequery (\lbrb{\mc{D}_{\mc{T}^\prime, i}}_{\mc{T}^\prime \in \mc{T}, i \in \{0\}\cup [p]}, \mc{T}, q, v, (\hat{x}, \mc{T}^\prime), \eps^\dagger)$}
    \label{alg:multi_scale_query}
\end{algorithm}

\section{Adaptive Approximate Nearest Neighbor}
\label{sec:aann}

In this section, we prove the following theorem regarding the existence of adaptive algorithms for approximate nearest neighbor search based on adapting ideas from \cite{adaptiveds} to the nearest neighbor to near neighbor reduction. These data structures will play a crucial role in designing algorithms to compute terminal embeddings. Note again that the probability of success only depends on the random choices made by data structure at \emph{query-time} which may be made arbitrarily high by repetition assuming successful instantiation of the data structure. Also, the second property of the tuple returned by the data structure is irrelevant for computing an approximate nearest neighbor but plays a crucial part in our algorithm for computing terminal embeddings. 
\aannmain*

For to establish \cref{thm:aann_main}, we require a construction of a partition tree. Recall again the definitions relevant to a partition tree from \cref{ssec:gen_red}. The first is the partitioning of a data points into connected components of a graph constructed by thresholding the distances between points in the dataset.

\gpcc*

The second is the notion of refinement between partitions.

\partref*

Next, define $\rmed(X)$ as:
\begin{equation*}
    \rmed(X) = \min \{r > 0: \exists C \in \cc (X, r) \text{ with } \abs{C} \geq n / 2\}.
\end{equation*}

The last is the definition of the partition tree itself.
\parttree*

We refer to the size of a subtree $\mc{T}^\prime = \lprp{Z, \lbrb{\mc{T}_C}_{C \in \clow}, \mc{T}_{\mathrm{rep}}, \clow, \chigh, \crep, \rhat} \in \mc{T}$ by the following recursive formula:
\begin{equation*}
    \Size(\mc{T}) = \abs{Z} + \sum_{C \in \clow} \Size(\mc{T}_C) + \Size (\mc{T}_{\mathrm{rep}}).
\end{equation*} 

We state a result proved in \cref{sec:cons_tree} which allows the construction of a partition tree in time near-linear in the dataset size. 
\begin{restatable}{lemma}{constree}
    \label{lem:cons_tree}
    Let $X = \{x_i\}_{i = 1}^n \subset \mb{R}^d$ and $\delta \in (0, 1)$. Then, \cref{alg:cons_tree} when given $X$, $\delta$ and $n$, runs in time $\wt{O} (nd \log (1 / \delta))$ and constructs, $\mc{T}$, satisfying $\forall \mc{T}^\prime = \lprp{Z, \lbrb{\mc{T}_C}_{C \in \clow}, \mc{T}_{\mathrm{rep}}, \clow, \chigh, \crep, \rhat} \in \mc{T}$:
    \begin{equation*}
        \cc (Z, 1000n^2\rhat) \sqsubseteq \chigh \sqsubseteq \cc (Z, \rhat) \sqsubseteq \cc (Z, \rmed) \sqsubseteq \cc \lprp{Z, \frac{\rhat}{10n}} \sqsubseteq \clow \sqsubseteq \cc \lprp{Z, \frac{\rhat}{1000n^3}}
    \end{equation*}
    with probability at least $1 - \delta$. Furthermore, as a consequence we have for all $n \geq 3$:
    \begin{equation*}
        \Size(\mc{T}) \leq C n\log n,\ \forall C \in \clow \cup \{C_{\mathrm{rep}}\}: \abs{C} \leq \frac{\abs{Z}}{2} \text{ and } \rmed \leq \rhat \leq n\rmed.
    \end{equation*}
\end{restatable}

Through the rest of the section, we prove \cref{thm:aann_main}. In \cref{ssec:cons_aann}, we overview the construction of our data structure given a partition tree constructed by say, \cref{lem:cons_tree} and in \cref{ssec:query_aann}, we show how to query the data structure where we show that it is sufficient to construct terminal embeddings for the set of points in the node we terminate our traversal of the Partition Tree.

Intuitively, through a union bound and a repetition argument, one would expect to show that for $\wt{\Omega} (d)$ independently instantiated approximate near neighbor data structures, most of them (say $90\%$) will correctly answer a near neighbor query for \emph{any} (even perhaps, adaptively chosen based on the instantiations of the data structures) query $q \in \R^d$. However, care must be taken to choose the grid in the covering number argument to avoid dependence on the aspect ratio of the dataset. Additionally, we discretize the input to one of the points in the chosen grid as we assume no continuity properties on the near neighbor data structures. Furthermore, we require stronger additional properties that enable these results to be used to construct the terminal embeddings in \cref{sec:term}. These technical considerations complicate the proof as it requires interleaving the choice of the grid with the design of the partition tree and its use in the nearest neighbor to near neighbor reduction. The remainder of this section illustrates and circumvents these difficulties.

\subsection{Constructing the Data Structure}
\label{ssec:cons_aann}

In this subsection, we describe how the data structure for our adaptive nearest neighbor algorithm is constructed. The procedure is outlined in \cref{alg:cons_aann}. The procedure takes as input a partition tree produced by $\cref{alg:cons_tree}$ satisfying the conclusion of \cref{lem:cons_tree}, a failure probability $\delta \in (0, 1)$ and the total number of points $n$.

\begin{algorithm}[H]
    \begin{algorithmic}[1]
        \State \textbf{Input:} Partition Tree $\mc{T}$, Failure Probability $\delta$, Total Number of points $n$
        \State $\delta^\dagger \gets \frac{\cprob \delta}{(nd)^{10}}, \gamma \gets \stepsize$
        \For {$\mc{T}^\prime = \lprp{Z, \lbrb{\mc{T}_C}_{C \in \clow}, \mc{T}_{\mathrm{rep}}, \clow, \chigh, C_{\mathrm{rep}}, \rhat} \in \mc{T}$}
            \State $\rlow \gets \frac{\rhat}{\rlhc\cdot (nd)^{10}}, \rhigh \gets \rlhc \cdot (nd)^{10} \rhat$
            \State $l \gets \ceil*{\frac{\log \rhigh / \rlow}{\gamma}}$, $s \gets \crep (d + \log l / \delta^\dagger) \log (nd)$
            \State For $i \in \{0\} \cup [l], j \in [s]$, let $\mc{D}_{i,j}$ be i.i.d ANN data structures with $(Z, (1 + \gamma)^i \rlow)$
            \State Let $\mc{D}_{\mc{T}^\prime} = \{\mc{D}_{i, j}\}_{i \in \{0\} \cup [l], j \in [s]}$
        \EndFor
        \State \textbf{Return:} $\{\mc{D}_{\mc{T}^\prime}\}_{\mc{T}^\prime \in \mc{T}}$
    \end{algorithmic}
    \caption{$\mathrm{ConstructAANN} (\mc{T}, \delta, n)$}
    \label{alg:cons_aann}
  \end{algorithm}


To state the correctness guarantees on the data structure, we will require most of the ANN data structures to be accurate for an appropriately chosen discretization of $\mb{R}^d$. The reason for this will become clear when defining the operation of the query procedure on the data structure produced by the algorithm. To construct the discrete set of points, consider a particular node $\mc{T}^\prime \in \mc{T}$. Let $\mc{T}^\prime = \lprp{Z, \lbrb{\mc{T}_C}_{C \in \clow}, \mc{T}_{\mathrm{rep}}, \clow, \chigh, C_{\mathrm{rep}}, \rhat} \in \mc{T}$ and $\mc{G}(\nu)$ be the discrete subset of $\mb{R}^d$ whose coordinates are integral multiples of $\nu = \frac{\gamma}{1000(nd)^{20}} \cdot \rlow$ where $\rlow = \frac{\rhat}{\rlhc \cdot (nd)^{10}}$ and $\gamma = \stepsize$ as in \cref{alg:cons_aann}. Again, letting $\rhigh = \rlhc \cdot (nd)^{10} \rhat$ as in \cref{alg:cons_aann}, we define the grid of points corresponding to $\mc{T}^\prime$ as follows:
\begin{equation*}
    \mc{H} \lprp{\mc{T}^\prime} = \bigcup_{x \in Z} \mb{B} (x, 10^6 \cdot (nd)^{20} \cdot \rhigh) \cap \mc{G} (\nu).
\end{equation*}
That is $\mc{H}$ corresponds to the set of points in $\mc{G}$ within $10^6 \cdot (nd)^{20} \rhigh$ of some point in $Z$. Finally, the set of points where we would like to ensure the correctness of our procedure is given by:
\begin{equation}
    \mc{J} = \bigcup_{\mc{T}^\prime \in \mc{T}} \mc{H} (\mc{T}^\prime). \tag{AANN-Grid} \label{eq:aann_grid}
\end{equation}
We now state the main result concerning the correctness of \cref{alg:cons_aann}:
\begin{lemma}
    \label{lem:cons_aann}
    Let $X = \lbrb{x_i}_{i = 1}^n \subset \mb{R}^d$, $\delta \in (0, 1)$ and $\mc{T}$ be a Partition Tree of $X$ satisfying the conclusion of \cref{lem:cons_tree}. Then, \cref{alg:cons_aann} when run with input $X$, $\delta$ and $n$, returns $\mc{D} = \lbrb{\mc{D}_{\mc{T}^\prime}}_{\mc{T}^\prime \in \mc{T}}$ satisfying:
    \begin{equation*}
        \forall \mc{D}_{\mc{T}^\prime} = \{\mc{D}_{i,j}\}_{i \in \{0\} \cup [l], j \in [s]} \in \mc{D},\, i \in \{0\} \cup [l],\, z \in \mc{J}: \sum_{j = 1}^s \bm{1} \lbrb{\mc{D}_{i,j} \text{ answers } z \text{ correctly}} \geq 0.95s
    \end{equation*}
    with probability at least $1 - \delta$. Furthermore, the space complexity of $\mc{D}$ is $O^*(n^{1 + \ru} (d + \log 1 / \delta))$.
\end{lemma}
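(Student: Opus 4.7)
The plan is a Chernoff bound on independent ANN copies, followed by a union bound over the triples indexing (node, scale, grid point). First, fix a triple $(\mc{T}^\prime, i, z)$ with $z \in \mc{J}$. Because the partition tree $\mc{T}$ is constructed prior to the ANN instantiations in \cref{alg:cons_aann} and the grid $\mc{J}$ is a deterministic function of $\mc{T}$, the point $z$ is independent of the internal randomness of every $\mc{D}_{i,j}$. By \cref{def:na_ann}, each of the $s$ independent copies $\mc{D}_{i,j}$ therefore succeeds on $z$ with probability at least $0.99$, and a Chernoff bound gives
\[
  \Pr\!\left[\sum_{j=1}^{s}\bm{1}\{\mc{D}_{i,j}\text{ correct on }z\} < 0.95 s\right] \leq e^{-\Omega(s)}.
\]

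To close the argument I would count the triples carefully. The partition tree has $|\mc{T}| \leq \Size(\mc{T}) = O(n\log n)$ nodes by \cref{def:part_tree}, and each node uses $l+1$ scales with $l = O^*(1)$ since $\rhigh/\rlow = O(\rlhc^2 (nd)^{20})$ and $\gamma = \stepsize$. For a fixed node, $\mc{H}(\mc{T}^\prime)$ is a union of at most $|Z|$ Euclidean balls of radius $O((nd)^{20}\rhigh)$ discretized at spacing $\nu = \Theta(\gamma \rlow / (nd)^{20})$, so
\[
  |\mc{H}(\mc{T}^\prime)| \leq |Z| \cdot \bigl((nd)^{O(1)}/\gamma\bigr)^d,
\]
giving $\log |\mc{J}| = O(\log n) + O^*(d)$. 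With $s = \crep(d + \log(l/\delta^\dagger))\log(nd)$ and $\delta^\dagger = \cprob\delta/(nd)^{10}$, a large enough choice of $\crep$ makes $e^{-\Omega(s)}$ smaller than $\delta / (|\mc{T}| \cdot l \cdot |\mc{J}|)$, and a union bound completes the high-probability statement.

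For the space bound, each $\mc{D}_{i,j}$ instantiated on $(Z, (1+\gamma)^i\rlow)$ occupies $O^*(d|Z|^{1+\ru})$ space by \cref{def:na_ann}, and \cref{alg:cons_aann} builds $(l+1) \cdot s = O^*(d + \log(1/\delta))$ of these per node. Summing over nodes via $\sum_{\mc{T}^\prime \in \mc{T}} |Z_{\mc{T}^\prime}|^{1+\ru} \leq n^{\ru} \cdot \Size(\mc{T}) \leq O^*(n^{1+\ru})$ (using the uniform bound $|Z_{\mc{T}^\prime}| \leq n$) yields the stated total of $O^*(n^{1+\ru}(d + \log 1/\delta))$.

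The main obstacle lies in calibrating the grid spacing $\nu$ to be polynomially smaller than $\rlow$ without blowing up $\log |\mc{J}|$ past $O^*(d)$; the polynomial factor $(nd)^{20}$ in the definitions of $\rlow, \rhigh, \nu$ is chosen precisely to balance these competing requirements. The fine grid will be needed in the query analysis of \cref{ssec:query_aann} so that any real query $q \in \mb{R}^d$ can be safely snapped to $\mc{J}$ without altering the near neighbor outcome at any relevant scale, while the $d \cdot \log(nd)$ factor in $s$ is what pays for this discretization in the union bound here.
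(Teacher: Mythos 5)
Your proposal is correct and follows essentially the same route as the paper: a packing bound giving $|\mc{J}| \leq (nd)^{O(d)}$, a Chernoff/Hoeffding bound for each fixed grid point and scale using the $0.99$ per-copy success guarantee of \cref{def:na_ann}, and a union bound paid for by the $d\log(nd)$ factor in $s$. The only (cosmetic) difference is in the space bound, where you use $\sum_{\mc{T}'}|Z_{\mc{T}'}|^{1+\ru} \leq n^{\ru}\Size(\mc{T})$ directly, while the paper splits into nodes with fewer or more than $\log n$ points and invokes convexity to control the $(nd)^{o(1)}$ factors hidden in the ANN space guarantee; both yield $O^*(n^{1+\ru}(d+\log 1/\delta))$.
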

\begin{proof}
    We start by bounding the amount of space occupied by the data structure. For the nearest neighbor data structures instantiated, note that the space utilization of a single data structure with $n$ points scales as $O^*(n^{1 + \ru + f(n)})$ where $f(n) = o(1)$. Let $T$ be such that $f(n) \leq 1$ for all $n \geq T$ and let $M = \max_{i \in [T]} f(i)$. To account for the space occupied by the data structure, first consider nodes in the tree with less than $\log n$ points. Let $\mc{D} = \lprp{Z, \lbrb{\mc{T}_C}_{C \in \clow}, \mc{T}_{\mathrm{rep}}, \clow, \chigh, \crep, \rhat}$ be such a node. For this node, the space occupied by the nearest neighbor data structures is at most $O^*(\abs{Z}^{1 + \ru + M} (d + \log 1 / \delta))$ as we instantiate $O^*(d + \log 1 / \delta)$ nearest neighbor data structures in each node. Since, $\abs{Z} \leq \log n$ and there are at most $O^*(n\log n)$ many of these nodes (\cref{lem:cons_tree}), the total amount of space occupied by these nodes is at most $O^*(n(d + \log 1 / \delta))$. Now, consider the alternate case where $\abs{Z} \geq \log n$. Through similar reasoning, the total amount of space occupied by such a node is $O^*(\abs{Z}^{1 + \ru + f(\abs{Z})} (d + \log 1 / \delta))$. Now, summing over all the nodes in the tree and by using the convexity of the function $f(x) = x^{1 + \rho}$ for $\rho > 0$, we get that the total amount of space occupied by nodes with more than $\log n$ points is at most $O^*(n^{1 + \ru}(d + \log 1 / \delta))$. This completes the bound on the space complexity of the data structure produced by \cref{alg:cons_aann}.

    We will now finally establish the correctness guarantees required of $\mc{D}$. To bound the size of $\mc{J}$, first consider a single $\mc{H} (\mc{T}^\prime)$ in the definition of $\mc{J}$. For a single term in the definition of $\mc{H} (\mc{T}^\prime)$, $\mc{V} = \mb{B} \lprp{x, 10^6 \cdot (nd)^{20} \cdot \rhigh} \cap \mc{G}(\nu)$, note that $\mc{V}$ is a $\nu$ packing of $\mb{B} \lprp{x, (10^6 + 1) \cdot (nd)^{20} \cdot \rhigh}$. Therefore, from standard bounds on packing and covering numbers and the definitions of $\nu$ and $\rhigh$ in terms of $\rhat$, we get that $\abs{\mc{V}} \leq (nd)^{O(d)}$ \cite[Section 4.2]{vershynin}. By taking a union bound over the $O^*(n\log n)$ nodes in the tree and the at most $n$ points in each node, we get that $\abs{\mc{J}} \leq (nd)^{O(d)}$. Now, for a particular $z \in \mc{J}$ and a particular node $\mc{T}^\prime \in \mc{T}$ with data structure $\mc{D}_{\mc{T}^\prime} = \{\mc{D}_{i,j}\}_{i \in \{0\} \cup [l], j \in [s]}$, the probability that $\mc{D}_{i, j}$ incorrectly answers the Approximate Near Neighbor query is at most $0.01$. Therefore, we have by Hoeffding's inequality that the probability that more than $0.05 s$ of the $\mc{D}_{i,j}$ answer $z$ incorrectly is at most $\delta / (10 \cdot \abs{\mc{T}} \cdot \abs{\mc{J}})$ from our setting of $s$ and our bounds on $\abs{\mc{J}}$ and $\abs{\mc{T}}$. A union bound over $\mc{J}$ and the nodes of the tree establishes the lemma.
\end{proof}

\subsection{Querying the Data Structure and Proof of \cref{thm:aann_main}}
\label{ssec:query_aann}

The procedure to query the data structure returned by \cref{alg:cons_aann} is described in \cref{alg:query_aann}. The procedure takes as inputs the partition tree, $\mc{T}$, and the data structure output by \cref{alg:cons_aann} which has a separate data structure for each $\mc{T}^\prime \in \mc{T}$. The procedure recursively explores the nodes of $\mc{T}$ starting at the root and moving down the tree and stops when the approximate nearest neighbor found is within $poly(n) \rhat$ of the nodes in the tree. In addition, in anticipation of its application towards computing terminal embeddings, we show that it is sufficient to construct a terminal embedding for the set of points in the node the algorithm terminates in. We now state the main lemma of this subsection which shows both that the data point $x$ returned by \cref{alg:query_aann} is an approximate nearest neighbor of $q$ as well as establishing that it suffices to construct a terminal embedding for data points in the node of the Partition tree the algorithm terminates in.

\begin{lemma}
    \label{lem:query_aann}
    Let $X = \{x_i\}_{i = 1}^n \subset \mb{R}^d$, $\mc{T}$ be a valid Partition Tree of $X$ and $\mc{D} = \lbrb{\mc{T}, \lbrb{\mc{D}_{\mc{T}^\prime}}_{\mc{T}^\prime \in \mc{T}}}$ be a data structure satisfying the conclusion of \cref{lem:cons_aann}. Then, \cref{alg:query_aann}, when run with inputs $\mc{D}$ and any $q \in \mb{R}^d$ returns a tuple $(\xhat, \tres)$ satisfying:
    \begin{equation*}
        \norm{q - \xhat} \leq (1 + o^*(1)) c \min_{x \in X} \norm{q - x} \text{ and } \xhat \in \tres.
    \end{equation*}
    Furthermore, let $\mc{Y} = \{y_i\}_{i = 1}^n \subset \mb{R}^k$ satisfying for some $\eps^\dagger \in \lprp{\frac{1}{\sqrt{d}}, 1}$:
    \begin{equation*}
        \forall i, j \in [n]: (1 - \eps^\dagger) \norm{x_i - x_j} \leq \norm{y_i - y_j} \leq (1 + \eps^\dagger) \norm{x_i - x_j}
    \end{equation*}
    and for $\tres = \lprp{Z, \lbrb{\mc{T}_C}_{C \in \clow}, \mc{T}_{\mathrm{rep}}, \clow, \chigh, \crep, \rhat}$, let $y \in \mb{R}^k$ satisfy for some $\eps^\ddagger \in [\eps^\dagger, 1)$:
    \begin{equation*}
        \forall x_i \in Z: (1 - \eps^\ddagger) \norm{q - x_i} \leq \norm{y - y_i} \leq (1 + \eps^\ddagger) \norm{q - x_i}.
    \end{equation*}
    Then, we have that:
    \begin{equation*}
        \forall x_i \in X: \lprp{1 - \lprp{1 + o^*(1)}\eps^\ddagger} \norm{q - x_i} \leq \norm{y - y_i} \leq \lprp{1 + \lprp{1 + o^*(1)} \eps^\ddagger} \norm{q - x_i}.
    \end{equation*}
    and if $\abs{Z} > 1$:
    \begin{equation*}
        \Omega \lprp{\frac{1}{(nd)^{10}} \rhat} \leq \norm{q - \xhat} \leq O((nd)^{10} \rhat)
    \end{equation*}
    Additionally, \cref{alg:query_aann} runs in time $O^*(n^{\rc} (d + \log 1 / \delta))$. 
\end{lemma}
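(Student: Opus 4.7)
My plan is to trace \cref{alg:query_aann} along a root-to-descendant path in the partition tree $\mc{T}$, maintaining at each visited node $\mc{T}'$ with point set $Z'$ two invariants: (I1) an exact nearest neighbor of $q$ in $X$ lies inside $Z'$, and (I2) any $y \in \mb{R}^k$ preserving distances from $q$ to points of $Z'$ up to multiplicative $(1 \pm \eps^\ddagger)$ actually preserves all distances from $q$ to $X$ up to $(1 \pm (1+o^*(1))\eps^\ddagger)$. Establishing these at the terminal node $\tres$ yields the nearest-neighbor approximation bound, the containment $\xhat \in \tres$, and the extension property of the lemma; the two-sided scale bound on $\norm{q - \xhat}$ then drops out of the termination condition directly.

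\textbf{Invariant propagation.} At a node $\mc{T}' = \lprp{Z, \lbrb{\mc{T}_C}_{C \in \clow}, \mc{T}_{\mrm{rep}}, \clow, \chigh, \crep, \rhat}$ the procedure queries the $(l+1)s$ ANN structures at scales $r_i = (1+\gamma)^i \rlow$ and takes a majority answer at each scale. By \cref{lem:cons_aann} this majority is correct on every point of the discretization $\mc{J}$. To handle an adaptive $q$, I snap it to its nearest lattice point in $\mc{G}(\nu)$; since $\nu \sqrt{d} \ll \rlow$ this moves all distances negligibly, and the snapped point lies in $\mc{J}$ whenever some $x \in Z$ is within $\rhigh$ of $q$ (the only scenario in which the vote at any scale is non-trivial). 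Three cases then arise. In Case A an answer $\xhat$ is returned already at scale $\rlow$, so $\xhat$ lies in some $\clow$-component $C$; since distinct $\clow$-components are separated by Euclidean distance exceeding $\rhat/(1000 n^3) \gg c\rlow$ by \cref{def:part_tree}, the true nearest neighbor of $q$ in $Z$ sits in $C$, and the algorithm recurses into $\mc{T}_C$. In Case B no answer is returned at any scale $\leq \rhigh$, so $\min_{x \in Z}\norm{q - x} \geq \rhigh/c$; but every $x \in Z$ has its $\chigh$-representative $\wt{x} \in \crep$ at distance $\leq \poly(n)\rhat \ll \rhigh$, so recursing into $\mc{T}_{\mrm{rep}}$ preserves both invariants because $\norm{q - \wt{x}} = (1 + o^*(1))\norm{q - x}$. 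In Case C the first answer appears at a scale $r_i$ with $\rlow < r_i \leq \rhigh$; the algorithm terminates at $\tres = \mc{T}'$ with $r_{i-1} < \norm{q - \xhat} \leq c\, r_i$, which, combined with $\rlow \leq r_{i-1} < r_i \leq \rhigh$, is exactly the $\Theta((nd)^{\pm 10})\rhat$ scale bound claimed when $|Z| > 1$.

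\textbf{Embedding extension.} Propagating (I2) across Cases A and B is a triangle-inequality computation that leverages the $(1 \pm \eps^\dagger)$-preservation of $\mc{Y}$ on all of $X$. For any $x_i \in Z' \setminus Z''$ removed by a recursion step, I locate a surrogate $\wt{x}$ in the recursed set $Z''$ whose distance to $x_i$ is negligible relative to $\norm{q - x_i}$: in Case B it is the $\chigh$-representative of $x_i$, while in Case A any $x_i$ outside the selected $\clow$-component $C$ satisfies $\norm{q - x_i} \geq \rhat/(1000 n^3) - c\rlow$, dwarfing $\norm{q - \wt{x}} \leq c\rlow$ for any $\wt{x} \in C$. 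The hypothesis on $Z''$ then gives $\bigl|\norm{y - y_i} - \norm{q - x_i}\bigr| \leq \eps^\ddagger \norm{q - \wt{x}} + (1 + \eps^\dagger)\norm{x_i - \wt{x}}$, and the polynomial gap forces both terms to be absorbed into $(1 + o^*(1))\eps^\ddagger \norm{q - x_i}$. Composing across the $O(\log n)$ levels of the traversal only multiplies the slack by a further $1 + o^*(1)$ factor, which the $o^*(1)$ absorbs.

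\textbf{Runtime and main obstacle.} Because $|Z|$ at least halves at each recursion step (by \cref{def:part_tree}), the traversal has depth $O(\log n)$; at each visited node we invoke $O^*(d + \log 1/\delta)$ ANN structures costing $O^*(n^{\rc})$ each, for total running time $O^*(n^{\rc}(d + \log 1/\delta))$. The main obstacle is the adaptive-query issue: $q$ may depend on the internal randomness of $\mc{D}$, which invalidates the per-query success guarantee underlying standard approximate near neighbor structures. This is exactly what motivates the $s = \Theta((d + \log 1/\delta^\dagger)\log(nd))$ independent repetitions per scale in \cref{alg:cons_aann}: their majority is correct uniformly over the discretization $\mc{J}$ of size $(nd)^{O(d)}$ with probability $1 - \delta$ by Chernoff plus union bound, and the snap-to-grid reduction above converts this into correctness against arbitrary adaptive $q$.
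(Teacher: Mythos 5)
Your proposal is correct and follows essentially the same route as the paper's proof: the same root-to-leaf traversal with the same three-way case analysis (recurse into a $\clow$-child, recurse into $\mc{T}_{\mathrm{rep}}$, or terminate), the same grid-snapping reduction to the discretization $\mc{J}$ to handle adaptive queries, the same surrogate-point triangle-inequality argument for the embedding-extension claim, and the same runtime accounting; these correspond directly to \cref{clm:query_depth} and \cref{clm:qdepth_term}. The only caveat is that your invariant (I1) cannot be maintained exactly — recursing into $\crep$ replaces the nearest neighbor by its $\chigh$-representative, incurring a $(1 + (10nd)^{-5})$ multiplicative loss per level that compounds to $1 + o^*(1)$ over the $O(\log n)$ levels — but your Case~B discussion already carries out exactly this approximate version, so this is a wording slip rather than a gap.
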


\begin{proof}
    We first set up some notation. Note that \cref{alg:query_aann} traverses the partition tree, $\mc{T}$, using the data structure $\mc{D}$ defined in \cref{alg:cons_aann}. Let $\mc{T}^{(0)}, \dots, \mc{T}^{(K)}$ denote the sequence of nodes traversed by the algorithm with $\mc{T}^{(i)} = \lprp{Z^{(i)}, \lbrb{\mc{T}_C}_{C \in \clow^{(i)}}^{(i)}, \mc{T}_{\mathrm{rep}}^{(i)}, \clow^{(i)}, \chigh^{(i)}, C_{\mathrm{rep}}^{(i)}, \rhat^{(i)}}$. Note that $K \leq \ceil{\log n} + 1$ as the number of data points drops by at least a factor of $2$ each time the algorithm explores a new node. To prove the first claim regarding the correctness of $\xhat$, let $r^* = \min_{x \in X} \norm{q - x}$. We now have the following claim:
\begin{claim}
    \label{clm:query_depth}
    For all $i \in \{0, \dots, K\}$, we have:
    \begin{equation*}
        \exists z \in Z^{(i)}: \norm{q - z} \leq \lprp{1 + \frac{1}{(10nd)^5}}^{i} r^*.
    \end{equation*}
\end{claim}
\begin{proof}
    We will prove the claim via induction on $i$. The base case where $i = 0$ is trivially true. Now, suppose that the claim holds true for all nodes up to $\mc{T}^{(k)}$ and we wish to establish the claim from $\mc{T}^{(k + 1)}$. For the algorithm to reach node $\mc{T}^{(k + 1)}$ from node $\mc{T}^{(k)}$, one of the following two cases must have occurred:
    \begin{enumerate}
        \item Either $\mc{T}^{(k + 1)} = \mc{T}^{(k)}_{\mrm{rep}}$ or 
        \item $\mc{T}^{(k + 1)} = \mc{T}_C$ for some $C \in \clow^{(k)}$
    \end{enumerate}
    Now, let $\wt{q}^{(k)}$ be the discretization of $q$ when $\mc{T}^{(k)}$ is being processed. We first handle the case where $\wt{q}^{(k)} \notin \mc{J}$, we have by the triangle inequality:
    \begin{equation*}
        r^* \geq \min_{x \in X} \norm{\wt{q}^{(k)} - x} - \norm{q - \wt{q}^{(k)}} \geq 5\cdot 10^5 \cdot (nd)^{20} \cdot \rhigh
    \end{equation*}
    and the first case occurs. Now assume that $\wt{q}^{(k)} \in \mc{J}$. If the first case occurs, we again have by the triangle inequality:
    \begin{equation*}
        \min_{z \in Z^{(k)}} \norm{q - z} \geq \min_{z \in Z^{(k)}} \norm{\wt{q}^{(k)} - z} - \norm{q - \wt{q}^{(k)}} \geq 0.9 \rhigh
    \end{equation*}
    by the fact that \cref{alg:query_aann} recurses on $\crep^{(k)}$ and the conclusion of \cref{lem:cons_aann} which ensures that $\wt{q}^{(k)}$ does not have a neighbor in $Z^{(k)}$ within a distance of $\rhigh^{(k)}$. Now, let $z^*$ be the closest neighbor to $q$ in $Z^{(k)}$; that is, $z^* = \argmin_{z \in Z^{(k)}} \norm{q - z}$. We know that there exists $C \in \chigh$ such that $z^* \in C$. Furthermore, we have by the triangle inequality and the fact that $\cc (Z^{(k)}, 1000n^2\rhat^{(k)}) \sqsubseteq \chigh^{(k)}$ (\cref{lem:cons_tree}) that $\norm{z - z^*} \leq 1000n^3 \rhat^{(k)}$ for all $z \in C$. Note that $C$ has representative $\hat{z}$ in the construction of $\crep^{(k)}$. For $\hat{z}$, we have
    \begin{equation*}
        \frac{\norm{q - \hat{z}}}{\norm{q - z^*}} \leq 1 + \frac{\norm{\hat{z} - z^*}}{\norm{q - z^*}} \leq 1 + \frac{1000n^3 \rhat}{0.9 \rhigh} \leq \lprp{1 + \frac{1}{(10nd)^5}}.
    \end{equation*}
    Along with the induction hypothesis, the above fact concludes the proof of the claim in this case. Finally, for the second case, again let $z^* = \argmin_{z \in Z^{(k)}} \norm{q - z}$ and $C \in \clow^{(k)}$ such that $z^* \in C$. From the definition of $\clow^{(k)}$ and \cref{lem:cons_tree}, we also know for all $z \in Z^{(k)} \setminus C$, $\norm{z^* - z} \geq \frac{\rhat}{(1000 n^3)}$ as $\clow \sqsubseteq \cc \lprp{Z^{(k)}, \frac{\rhat}{1000n^3}}$. We have by the triangle inequality for all $z \in Z^{(k)} \setminus C$:
    \begin{align*}
        \norm{\wt{q}^{(k)} - z} &\geq \norm{z - z^*} - \norm{q - z^*} - \norm{q - \wt{q}^{(k)}} = \norm{z - z^*} - \min_{z \in Z^{(k)}} \norm{q - z} - \norm{q - \wt{q}^{(k)}} \\
        &\geq \norm{z - z^*} - \min_{z \in Z^{(k)}} (\norm{\wt{q}^{(k)} - z} + \norm{\wt{q}^{(k)} - q}) - \norm{q - \wt{q}^{(k)}} \\
        &\geq \frac{\rhat^{(k)}}{1000 n^3} - c \rlow^{(k)} - 2\sqrt{d} \nu^{(k)} \geq c(10nd)^7 \rlow^{(k)}.
    \end{align*}
    Therefore, all $x \in Z^{(k)}$ such that $\norm{\wt{q} - x} \leq c \rlow^{(k)}$ must belong to $C$. Consequently, we recurse on $\mc{T}_C$ with $z^* \in C \in \clow$ which establishes the inductive hypothesis in this case as well.
\end{proof}

To finish the proof of the first claim, let $\mc{T}^{(K)} = \lprp{Z, \lbrb{\mc{T}_C}_{C \in \clow}, \mc{T}_{\mathrm{rep}}, \clow, \chigh, C_{\mathrm{rep}}, \rhat}$ with associated data structure $\mc{D}_{\mc{T}^{(K)}} = \{\mc{D}_{i,j}\}_{i \in \{0\} \cup [l], j \in [s]}$. If $\abs{Z} = 1$, a direct application of \cref{clm:query_depth} establishes the lemma. Alternatively, we must have $0 < i^* \leq l$ when $\mc{T}^{(K)}$ is being processed by \cref{alg:query_aann}. From the guarantees on $\mc{D}_{i,j}$ (\cref{lem:cons_aann}), we get $\min_{z \in Z} \norm{z - \wt{q}} > (1 + \gamma)^{i^* - 1} \rlow$. Letting $x \in Z$  such that $\norm{x - \wt{q}} \leq c(1 + \gamma)^{i^*} \rlow$ returned by \cref{alg:query_aann}, we get by another application of the triangle inequality:
\begin{align*}
    \frac{\norm{q - x}}{\min_{z \in Z} \norm{q - z}} &\leq \frac{\norm{\wt{q} - x} + \norm{\wt{q} - q}}{\min_{z \in Z} \norm{\wt{q} - z} - \norm{\wt{q} - q}} \leq \frac{c (1 + \gamma)^{i^*} \rlow + \sqrt{d} \nu}{(1 + \gamma)^{i^* - 1} \rlow - \sqrt{d} \nu} \\
    &\leq c(1 + \gamma) + \frac{2 c \sqrt{d} \nu}{(1 + \gamma)^{i^* - 1} \rlow - \sqrt{d} \nu} \leq c(1 + \gamma)^2
\end{align*}
where the first inequality is valid as $\min_{z \in Z} \norm{\wt{q} - z} - \norm{\wt{q} - q} > 0$ from our setting of $\nu$ and the condition on $\min_{z \in Z} \norm{q - z}$ and the final inequality similarly follows from our setting of $\nu$. Another application of \cref{clm:query_depth} with the fact that $K \leq \ceil*{\log n} + 1$, establishes the first claim of the lemma. 

To prove the second claim of the lemma, we prove an analogous claim to \cref{clm:query_depth} for terminal embeddings:
\begin{claim}
    \label{clm:qdepth_term}
    We have for all $i \in \{0, \dots, K\}$:
    \begin{equation*}
        \forall z_j \in Z^{{i}}: \lprp{1 - \lprp{1 + \frac{1}{(10nd)^{5}}}^{K - i} \eps^{\ddagger}} \norm{q - x_j} \leq \norm{y - y_j} \leq \lprp{1 + \lprp{1 + \frac{1}{(10nd)^{5}}}^{K - i} \eps^{\ddagger}} \norm{q - x_j}.
    \end{equation*}
\end{claim}
\begin{proof}
    We will prove the claim by reverse induction on $i$. For $i = K$, the claim is implied by the assumptions on $y$. Now, suppose the claim holds for $i = k + 1$ and we wish to establish the claim for $i = k$. As in the proof of \cref{clm:query_depth}, we have two cases when $\mc{T}^{(k)}$ is being processed:
    \begin{enumerate}
        \item Either $\mc{T}^{(k + 1)} = \mc{T}^{(k)}_{\mrm{rep}}$ or 
        \item $\mc{T}^{(k + 1)} = \mc{T}^{(k)}_C$ for some $C \in \clow^{(k)}$.
    \end{enumerate}
    As in \cref{clm:query_depth}, when the first case occurs, we have $\min_{z \in Z^{(k)}} \norm{q - z} \geq 0.9\rhigh^{(k)}$. Now, for any $z_i \in Z^{(k)}$, let $z_j \in \crep^{(k)}$ such that $z_i, z_j \in C \in \chigh^{(k)}$. Note that we must have by the triangle inequality and the fact that $\cc(Z^{(k)}, 1000n^2\rhat^{(k)}) \sqsubseteq \chigh$ that $\norm{z_i - z_j} \leq 1000n^3\rhat^{(k)}$. From the fact that $\crep^{(k)} = Z^{(k + 1)}$, the inductive hypothesis and the assumption on $y_i, y_j$ from the lemma, we get:
    \begin{align*}
        \norm{y - y_i} &\geq \norm{y - y_j} - \norm{y_j - y_i} \geq \lprp{1 - \lprp{1 + \frac{1}{(10nd)^5}}^{K - (k + 1)}\eps^\ddagger} \norm{q - z_j} - 2000n^3\rhat^{(k)} \\
        &\geq \lprp{1 - \lprp{1 + \frac{1}{(10nd)^5}}^{K - (k + 1)}\eps^\ddagger} (\norm{q - z_i} - \norm{z_i - z_j}) - 2000n^3\rhat^{(k)} \\
        &\geq \lprp{1 - \lprp{1 + \frac{1}{(10nd)^5}}^{K - (k + 1)}\eps^\ddagger} \norm{q - z_i} - 4000 n^3 \rhat^{(k)} \\
        &\geq \lprp{1 - \lprp{1 + \frac{1}{(10nd)^5}}^{K - (k + 1)}\eps^\ddagger} \norm{q - z_i} - \frac{1}{(10nd)^5} \eps^{\ddagger} \norm{q - z_i} \\
        &\geq \lprp{1 - \lprp{1 + \frac{1}{(10nd)^5}}^{K - k}\eps^\ddagger} \norm{q - z_i}
    \end{align*}
    where the last inequality follows from the fact that $(1 + a)^b \geq (1 + a)^{b - 1} + a$ for $a \geq 0, b \geq 1$. For the other direction, we have by a similar calculation:
    \begin{align*}
        \norm{y - y_i} &\leq \norm{y - y_j} + \norm{y_j - y_i} \leq \lprp{1 + \lprp{1 + \frac{1}{(10nd)^5}}^{K - (k + 1)}\eps^\ddagger} \norm{q - z_j} + 2000n^3\rhat^{(k)} \\
        &\leq \lprp{1 + \lprp{1 + \frac{1}{(10nd)^5}}^{K - (k + 1)}\eps^\ddagger} (\norm{q - z_i} + \norm{z_i - z_j}) + 2000n^3\rhat^{(k)} \\
        &\leq \lprp{1 + \lprp{1 + \frac{1}{(10nd)^5}}^{K - (k + 1)}\eps^\ddagger} \norm{q - z_i} + 4000 n^3 \rhat^{(k)} \\
        &\leq \lprp{1 + \lprp{1 + \frac{1}{(10nd)^5}}^{K - (k + 1)}\eps^\ddagger} \norm{q - z_i} + \frac{1}{(10nd)^5} \eps^{\ddagger} \norm{q - z_i} \\
        &\leq \lprp{1 + \lprp{1 + \frac{1}{(10nd)^5}}^{K - k}\eps^\ddagger} \norm{q - z_i}.
    \end{align*}
    This establishes the claim in the first case. For the second case, let $z^* = \argmin_{z \in Z^{(k)}} \norm{q - z}$. As in the proof of \cref{clm:qdepth_term}, we have $z^* \in Z^{(k + 1)} = C \in \clow^{(k)}$ and for all $z \in Z^{(k)} \setminus C$:
    \begin{align*}
        \norm{q - z} &\geq \norm{z^* - z} - \norm{q - z^*} = \norm{z^* - z} - \min_{z \in Z^{(k)}}\norm{q - z} \\
        &\geq \norm{z^* - z} - \min_{z \in Z^{(k)}} \norm{\wt{q}^{(k)} - z} - \norm{\wt{q}^{(k)} - q} \\
        &\geq \frac{\rhat^{(k)}}{1000 n^3} - c \rlow^{(k)} - \sqrt{d} \nu^{(k)} \geq c(10nd)^7 \rlow^{(k)}
    \end{align*}
    and furthermore, we have:
    \begin{equation*}
        \norm{q - z^*} = \min_{z \in Z^{(k)}} \norm{q - z^*} \leq \min_{z \in Z^{(k)}} \norm{\wt{q}^{(k)} - z^*} + \norm{\wt{q}^{(k)} - q} \leq c\rlow^{(k)} + \rlow^{(k)} \leq 2c\rlow^{(k)}.
    \end{equation*}
    Now, the claim is already established for all $x_i \in Z^{(k + 1)}$. For $z_i \in Z^{(k)} \setminus Z^{(k + 1)}$, letting $y^* = y_j$ for $x_j = z^*$:
    \begin{align*}
        \norm{y_i - y} &\leq \norm{y_i - y_j} + \norm{y - y_j} \leq (1 + \eps^\dagger) \norm{x_i - x_j} + \lprp{1 + \lprp{1 + \frac{1}{(10nd)^{5}}}^{K - (k + 1)} \eps^{\ddagger}} \norm{q - z^*} \\
        &\leq (1 + \eps^\dagger) \norm{x_i - x_j} + 3c\rlow^{(k)} \leq (1 + \eps^\dagger) \norm{x_i - q} + (1 + \eps^\dagger) \norm{q - x_j} + 3 c\rlow^{(k)} \\
        &\leq \lprp{1 + \lprp{1 + \frac{6c\rlow^{(k)}}{\eps^\dagger \cdot \norm{x_i - q}}}\eps^\dagger} \norm{x_i - q} \leq \lprp{1 + \lprp{1 + \frac{1}{(10nd)^5}} \eps^\dagger} \norm{x_i - q}
    \end{align*}
    where the third inequality is due to the fact that $K \leq \ceil{\log n} + 1$ and the fact that $(1 + a)^b \leq e^{ab}$ for $a, b \geq 0$ and the final two inequalities follow from the upper bound on $\norm{q - z^*}$ and the lower bound on $\norm{x_i - q}$ established previously. For the lower bound, we have by a similar computation:
    \begin{align*}
        \norm{y_i - y} &\geq \norm{y_i - y_j} - \norm{y - y_j} \geq (1 - \eps^\dagger) \norm{x_i - x_j} - \lprp{1 + \lprp{1 + \frac{1}{(10nd)^{5}}}^{K - (k + 1)} \eps^{\ddagger}} \norm{q - z^*} \\
        &\geq (1 - \eps^\dagger) \norm{x_i - x_j} - 3c\rlow^{(k)} \geq (1 - \eps^\dagger) \norm{x_i - q} - (1 - \eps^\dagger) \norm{q - x_j} - 3 c\rlow^{(k)} \\
        &\geq \lprp{1 - \lprp{1 + \frac{6c\rlow^{(k)}}{\eps^\dagger \cdot \norm{x_i - q}}}\eps^\dagger} \norm{x_i - q} \geq \lprp{1 - \lprp{1 + \frac{1}{(10nd)^5}} \eps^\dagger} \norm{x_i - q}.
    \end{align*}
    The assumption that $\eps^{\ddagger} \geq \eps^\dagger$ finishes the proof of the claim by induction.
\end{proof}
\cref{clm:qdepth_term} establishes the second conclusion of lemma from the fact that $K \leq \ceil{\log n} + 1$. When $\abs{Z} > 1$ in $\tres$, we get by the triangle inequality, the fact that $\norm{\wt{q} - \xhat} \geq \rlow$ and the definition of $\wt{q}$ in \cref{alg:query_aann} that:
\begin{equation*}
    \Omega \lprp{\frac{1}{(nd)^{10}} \rhat} \leq \norm{q - \xhat} \leq O((nd)^{10} \rhat).
\end{equation*}
Finally, for the bound on the runtime, note that \cref{alg:query_aann} queries at most $K l (s + 1)$ near neighbor data structures built with at most $n$ points. Therefore, each query takes time at most $O^*(n^{\rc})$. This proves our bound on the runtime of \cref{alg:query_aann}. 
\end{proof}

\cref{lem:query_aann,lem:cons_aann,lem:cons_tree} now establish \cref{thm:aann_main} barring the decoupling of $d$ and the $n^{\rc}$ term in the runtime guarantee. To do this, note that we may simply use the Median-JL data structure (see \cref{thm:jl_rep} in \cref{sec:med_jl}) by instantiating $l = \Ot(d)$ many JL-sketches and for each of them, instantiate an adaptive approximate nearest neighbor data structure in the low dimensional space. At query time, we simple pick $\Omega (\log n)$ of these sketches uniformly at random, query each of the corresponding nearest neighbor data structures with the projection of $q$ and return the best answer. This yields the improved runtimes from \cref{thm:aann_main}.
\qed 

\begin{algorithm}[H]
    \begin{algorithmic}[1]
        \State \textbf{Input:} Data Structure $\mc{D} = \lbrb{\mc{T}, \lbrb{\mc{D}_{\mc{T}^\prime}}_{\mc{T}^\prime \in \mc{T}}}$, Query $q$
        \State $\mc{T}_{\mrm{cur}} \gets \mc{T}$
        \While {$\mrm{TRUE}$}
            \State Let $\mc{T}_{\mrm{cur}} = \lprp{Z, \lbrb{\mc{T}_C}_{C \in \clow}, \mc{T}_{\mathrm{rep}}, \clow, \chigh, C_{\mathrm{rep}}, \rhat} \in \mc{T}$
            \If {$\abs{Z} = 1$}
                \State \textbf{Return:} $(x, \mc{T}_{\mrm{cur}})$ for $x \in Z$
            \EndIf
            
            \State $\nu \gets \frac{\gamma}{1000 (nd)^{20}} \cdot \rlow$
            \State $\wt{q}_i \gets \floor*{\frac{q_i}{\nu}} \nu$
            \State $i^* \gets \min\lbrb{i: \exists j \text{ such that }\mc{D}_{i, j} (\wt{q}) \text{ returns } x \in Z \text{ satisfying } \norm{\wt{q} - x} \leq c (1 + \gamma)^i \rlow}$
            
            \If {$0 < i^* \leq l$}
                \State Let $x \in Z$ be such that $\norm{x - \wt{q}} \leq c(1 + \gamma)^{i^*} \rlow$
                \State \textbf{Return:} $(x, \mc{T}_{\mrm{cur}})$
            \ElsIf{$i^* = 0$}
                \State Let $x \in Z$ be such that $\norm{x - \wt{q}} \leq c \rlow$
                \State Let $C \in \clow$ be such that $x \in C$
                \State $\mc{T}_{\mrm{cur}} \gets \mc{T}_C$
            \ElsIf{$i^* = \infty$} 
                \State $\mc{T}_{\mrm{cur}} \gets \mc{T}_{\mrm{rep}}$
            \EndIf
        \EndWhile
    \end{algorithmic}
    \caption{$\mathrm{QueryAANN} (\mc{D}, q)$}
    \label{alg:query_aann}
\end{algorithm}

\section{Median - JL}
\label{sec:med_jl}

In this section, we will prove the following lemma which enabled speeding up our algorithms by effectively projecting onto a low dimensional subspace essentially decoupling the terms that depend on $d$ and $n$.

\jlrep*

By setting $x = y$ in \ref{eq:ap_ip}, we see that the above theorem is a generalization of the standard Johnson-Lindenstrauss condition where in addition to maintaining distances between points, $\Pi$ is also required to approximately maintain relative inner products between the points in the augmented dataset. To begin the proof, we start by recalling the standard Johnson-Lindenstrauss lemma (see, for example, \cite{vershynin}). 

\begin{lemma}
    \label{lem:jl}
    Let $\Pi \in \mb{R}^{k \times d}$ be distributed according to $\Pi_{i, j} \overset{iid}{\thicksim} \mc{N} (0, 1 / k)$, $\delta \in (0, 1)$ and $k \geq C \frac{\log 1 / \delta}{\eps^2}$ for some absolute constant $C > 0$. Then, for any $v \in \mb{R}^d$, we have:
    \begin{equation*}
        \mb{P} \lbrb{(1 - \eps) \norm{v}^2 \leq \norm{\Pi v}^2 \leq (1 + \eps) \norm{v}^2} \geq 1 - \delta.
    \end{equation*}
\end{lemma}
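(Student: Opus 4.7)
The plan is to exploit rotational invariance of the Gaussian to reduce the claim to a standard chi-squared concentration bound. By homogeneity, both sides of the claimed inequality scale as $\norm{v}^2$, so it suffices to treat the case $\norm{v} = 1$. Under this normalization, each coordinate $(\Pi v)_i = \sum_{j = 1}^d \Pi_{i,j} v_j$ is a linear combination of independent $\mc{N}(0, 1/k)$ variables with coefficient vector $v$, hence $(\Pi v)_i \sim \mc{N}(0, \norm{v}^2 / k) = \mc{N}(0, 1/k)$, and the $k$ coordinates are independent since they depend on disjoint rows of $\Pi$. Consequently,
\begin{equation*}
    Y \coloneqq k \norm{\Pi v}^2 = \sum_{i = 1}^k Z_i^2, \qquad Z_i \overset{iid}{\sim} \mc{N}(0, 1),
\end{equation*}
so $Y$ is a chi-squared random variable with $k$ degrees of freedom and mean $k$.

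Next, I would invoke a standard chi-squared tail bound (Laurent--Massart, or a direct Chernoff computation): for any $t \in (0, 1)$,
\begin{equation*}
    \mb{P} \lbrb{\abs{Y - k} \geq t k} \leq 2 \exp(-c k t^2)
\end{equation*}
for an absolute constant $c > 0$. Setting $t = \eps$ and demanding the right-hand side be at most $\delta$ yields the requirement $k \geq C \log(1/\delta) / \eps^2$ with $C = 1/c$. Dividing the resulting two-sided bound by $k$ and undoing the normalization $\norm{v} = 1$ gives $(1 - \eps) \norm{v}^2 \leq \norm{\Pi v}^2 \leq (1 + \eps) \norm{v}^2$ with probability at least $1 - \delta$, as desired.

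The only step that requires actual work is the chi-squared concentration inequality itself, which is most cleanly obtained from the moment generating function $\mb{E}[\exp(\lambda Z_i^2)] = (1 - 2\lambda)^{-1/2}$ valid for $\lambda < 1/2$. By independence, $\mb{E}[\exp(\lambda (Y - k))] = e^{-\lambda k} (1 - 2\lambda)^{-k/2}$, and expanding $-\log(1 - 2\lambda)$ around zero shows $Y - k$ is sub-exponential with parameters of order $(\sqrt{k}, 1)$; a Chernoff bound optimized near $\lambda \approx t/4$ then produces the displayed tail inequality. This is entirely standard and I would likely just invoke the statement from a textbook reference rather than reproduce the computation.
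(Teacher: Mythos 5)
Your proof is correct and is exactly the standard argument the paper relies on: it states \cref{lem:jl} without proof, citing it as the classical distributional JL lemma from a textbook reference, and the reduction to chi-squared concentration via rotation invariance and homogeneity is precisely that classical proof. No further comment is needed.
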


We obtain via a union bound over all pairs of points in the dataset, $X$, the JL guarantee:

\begin{corollary}
    \label{cor:jl}
    Let $X = \{x_i\}_{i = 1}^n$, $\Pi \in \mb{R}^{k \times d}$ with $\Pi_{i,j} \overset{iid}{\thicksim} \mc{N}(0, 1 / k)$ with $k \geq C \frac{(\log n + \log 1 / \delta)}{\eps^2}$. Then, we have:
    \begin{equation*}
        \forall x_i, x_j \in X: (1 - \eps) \norm{x_i - x_j} \leq \norm{\Pi x_i - \Pi x_j} \leq (1 + \eps) \norm{x_i - x_j}
    \end{equation*}
    with probability at least $1 - \delta$.
\end{corollary}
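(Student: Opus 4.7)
The plan is to derive Corollary \ref{cor:jl} as a direct union-bound consequence of the single-vector Johnson-Lindenstrauss guarantee in \cref{lem:jl}. Observe that the event to be controlled is the conjunction, over all pairs $i < j$, of the norm-preservation event for the fixed vector $v_{ij} := x_i - x_j$. Since there are at most $\binom{n}{2} \leq n^2/2$ such pairs, it suffices to ensure each individual event fails with probability at most $\delta' := 2\delta/n^2$, and then apply a union bound.

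Concretely, I would fix $i < j$, set $v = v_{ij}$, and invoke \cref{lem:jl} with failure probability $\delta'$. That lemma requires $k \geq C \log(1/\delta')/\eps^2$; substituting the value of $\delta'$ gives the requirement
\begin{equation*}
    k \;\geq\; \frac{C(\log(1/\delta) + 2\log n + O(1))}{\eps^2},
\end{equation*}
which is satisfied (after adjusting the absolute constant $C$) by the hypothesis of the corollary, namely $k \geq C(\log n + \log(1/\delta))/\eps^2$. Hence for each fixed pair, with probability at least $1 - \delta'$,
\begin{equation*}
    (1-\eps)\norm{v_{ij}}^2 \;\leq\; \norm{\Pi v_{ij}}^2 \;\leq\; (1+\eps)\norm{v_{ij}}^2.
\end{equation*}
Taking square roots (which is monotone on nonnegative numbers, and may absorb an additional factor on $\eps$ by adjusting the constant $C$, using $\sqrt{1\pm\eps} = 1 \pm \Theta(\eps)$ for $\eps \in (0,1)$) yields the two-sided multiplicative bound on $\norm{\Pi x_i - \Pi x_j}$ required by the corollary, since $\Pi$ is linear and $\Pi v_{ij} = \Pi x_i - \Pi x_j$.

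Finally, a union bound over the at most $n^2/2$ pairs $(i,j)$ shows that the simultaneous event holds with probability at least $1 - \binom{n}{2}\delta' \geq 1 - \delta$, as desired. There is no real obstacle here: the only minor care points are (i) threading the constant $C$ through both the failure-probability budget and the square-root conversion from squared norms to norms, and (ii) noting that we may assume $n \geq 2$ since the $n = 1$ case is vacuous.
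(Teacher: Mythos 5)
Your proof is correct and matches the paper's argument exactly: the paper also obtains \cref{cor:jl} from \cref{lem:jl} by applying it to each difference vector $x_i - x_j$ with failure probability $\Theta(\delta/n^2)$ and union bounding over the $\binom{n}{2}$ pairs. The details you supply (absorbing the extra $\log n$ into the constant $C$ and converting squared-norm bounds to norm bounds via $\sqrt{1\pm\eps}=1\pm\Theta(\eps)$) are the standard ones the paper leaves implicit.
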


A second corollary we will make frequent use of is the following where we show that $\Pi$ also approximately preserves inner products.

\begin{corollary}
    \label{cor:ip_pres}
    Let $\Pi \in \mb{R}^{k \times d}$ be distributed according to $\Pi_{i, j} \overset{iid}{\thicksim} \mc{N} (0, 1 / k)$ and $k \geq C \frac{\log 1 / \delta}{\eps^2}$. Then, for any $x,y \in \mb{R}^d$, we have:
    \begin{equation*}
        \mb{P} \lbrb{\abs*{\inp{\Pi x}{\Pi y} - \inp{x}{y}} \leq \eps \norm{x}\norm{y}} \geq 1 - \delta.
    \end{equation*}
\end{corollary}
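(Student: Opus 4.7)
The plan is to derive inner-product preservation from the norm-preservation statement of \cref{lem:jl} via the polarization identity. The cases $x=0$ or $y=0$ are trivial, so I would first reduce to unit vectors: using linearity of $\Pi$, one has $\inp{\Pi x}{\Pi y} - \inp{x}{y} = \norm{x}\norm{y}\lprp{\inp{\Pi \hat{x}}{\Pi \hat{y}} - \inp{\hat{x}}{\hat{y}}}$ with $\hat{x}=x/\norm{x}$ and $\hat{y}=y/\norm{y}$, so it suffices to show an error bounded by $\eps$ when $\norm{x}=\norm{y}=1$.

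Next I would apply the polarization identity $\inp{u}{v} = \tfrac{1}{4}\lprp{\norm{u+v}^2 - \norm{u-v}^2}$ in both the original and the projected space to write
\begin{equation*}
    \inp{\Pi x}{\Pi y} - \inp{x}{y} = \tfrac{1}{4}\lsrs{\lprp{\norm{\Pi(x+y)}^2 - \norm{x+y}^2} - \lprp{\norm{\Pi(x-y)}^2 - \norm{x-y}^2}}.
\end{equation*}
I would then invoke \cref{lem:jl} twice, once on $x+y$ and once on $x-y$, each with distortion parameter $\eps$ and failure probability $\delta/2$, which only costs a constant factor in $C$. With probability at least $1-\delta$ both norms are preserved simultaneously up to additive error $\eps\norm{x\pm y}^2$, so the triangle inequality together with the parallelogram law $\norm{x+y}^2 + \norm{x-y}^2 = 2\lprp{\norm{x}^2+\norm{y}^2} = 4$ bounds the right-hand side by $\tfrac{1}{4}\lprp{\eps\norm{x+y}^2 + \eps\norm{x-y}^2} = \eps$. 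Unscaling by $\norm{x}\norm{y}$ recovers the claim.

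There is no genuine obstacle here: this is the textbook reduction from norm preservation to inner-product preservation via polarization, and the constant $C$ in the hypothesis on $k$ is easily chosen large enough to absorb the union bound over the two applications of \cref{lem:jl}.
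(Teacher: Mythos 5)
Your proposal is correct and follows essentially the same route as the paper: reduce to unit vectors, apply the polarization identity, and control the two squared-norm deviations via \cref{lem:jl} with a union bound (the paper uses distortion $\eps/4$ and the crude bound $\norm{x\pm y}^2\le 4$ where you use the parallelogram law, but this is an immaterial difference in constants).
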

\begin{proof}
    If either $x$ or $y$ are $0$, the conclusion follows trivially. Assume, $x,y \neq 0$. By scaling both sides by $\norm{x} \norm{y}$, we may assume that $\norm{x} = \norm{y} = 1$. We now have the following:
    \begin{gather*}
        \inp{\Pi x}{\Pi y} = \frac{1}{4} \lprp{\norm{\Pi (x + y)}^2 - \norm{\Pi (x - y)}^2} \\
        \inp{x}{y} = \frac{1}{4} \lprp{\norm{(x + y)}^2 - \norm{(x - y)}^2}.
    \end{gather*}
    By subtracting both equations, we get:
    \begin{align*}
        \abs{\inp{\Pi x}{\Pi y} - \inp{x}{y}} &\leq \frac{1}{4} \lprp{\abs*{\norm{\Pi (x + y)}^2 - \norm{x + y}^2} + \abs*{\norm{\Pi (x - y)}^2 - \norm{x - y}^2}}
    \end{align*}
    By the union bound, the triangle inequality and \cref{lem:jl}, we get with probability at least $1 - \delta$:
    \begin{gather*}
        \abs{\norm{\Pi (x + y)}^2 - \norm{x + y}^2} \leq \frac{\eps}{4} \norm{x + y}^2 \leq \eps \text{ and }\\
        \abs{\norm{\Pi (x - y)}^2 - \norm{x - y}^2} \leq \frac{\eps}{4} \norm{x - y}^2 \leq \eps.
    \end{gather*}
    The inequalities in the previous display imply the lemma.
\end{proof}

We start by establishing a simple lemma on the norms of the matrices $\Pi^i$.
\begin{lemma}
    \label{lem:norm_bounds}
    Assume the setting of \cref{thm:jl_rep}. Then, we have:
    \begin{equation*}
        \sum_{i = 1}^m \bm{1} \lbrb{\norm{\Pi^i}_F \leq O(\sqrt{d})} \geq 0.99m
    \end{equation*}
    with probability at least $1 - \delta / 4$.
\end{lemma}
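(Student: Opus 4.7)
The plan is to combine a per-matrix concentration bound with a Chernoff bound over the $m$ independent sketches. Since the entries of $\Pi^i$ are i.i.d.\ Gaussians with variance $1/k$, the quantity $k \norm{\Pi^i}_F^2$ is $\chi^2$-distributed with $kd$ degrees of freedom; equivalently, $\mb{E} \norm{\Pi^i}_F^2 = d$. Thus a typical $\Pi^i$ already satisfies $\norm{\Pi^i}_F = O(\sqrt{d})$, and establishing the lemma reduces to controlling a single sketch with probability sufficiently close to $1$ and then amplifying via independence of the sketches.

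For the per-matrix bound, Markov's inequality applied to the nonnegative random variable $\norm{\Pi^i}_F^2$ yields $\mb{P}\lbrb{\norm{\Pi^i}_F^2 > 1000 d} \leq 1/1000$, so $\mb{P}\lbrb{\norm{\Pi^i}_F \leq C \sqrt{d}} \geq 0.999$ for $C = \sqrt{1000}$. (A chi-squared tail bound would of course give exponentially small failure probability, but this crude Markov estimate is more than enough.) Letting $Z_i = \bm{1}\lbrb{\norm{\Pi^i}_F > C \sqrt{d}}$, the $Z_i$ are i.i.d.\ Bernoulli with $\mb{E} Z_i \leq 0.001$, and a standard multiplicative Chernoff bound then gives
\[
    \mb{P}\lbrb{\sum_{i = 1}^m Z_i > 0.01 m} \leq \exp(-\Omega(m)) \leq \delta / 4,
\]
where the final inequality uses the hypothesis $m \geq \Omega((d + \log 1/\delta)\log nd) \geq \Omega(\log 1/\delta)$. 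The complementary event is precisely the conclusion of the lemma.

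There is no substantive obstacle here: this lemma is a routine concentration warm-up that pins down a crude Frobenius-norm scaffolding bound, needed only as an input to the delicate gridding and covering-net arguments for the main \ref{eq:jl_rep} statement in the remainder of \cref{sec:med_jl}. The slack in the hypothesis on $m$ easily absorbs the Chernoff failure probability, so no sharper per-matrix tail bound is required.
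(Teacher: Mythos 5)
Your proof is correct and follows essentially the same route as the paper's: a per-matrix tail bound on the $\chi^2$-distributed quantity $\norm{\Pi^i}_F^2$ (the paper invokes Bernstein where you use the even cruder Markov bound, which suffices since the constant inside $O(\sqrt{d})$ is unconstrained), followed by concentration over the $m$ i.i.d.\ sketches (Hoeffding in the paper, multiplicative Chernoff in yours) using $m \geq \Omega(\log 1/\delta)$ to absorb the failure probability.
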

\begin{proof}
    Let $W_i = \bm{1} \lbrb{\norm{\Pi^i}_F \leq O(\sqrt{d})}$. We have by an application of Bernstein's inequality that $\mb{P} (W_i = 1) \geq 0.995$ as $\sum_{j \in [k], l \in [d]} (\Pi^i_{j, l})^2$ is a $\chi^2$ random variable with mean $d$. Since, the $W_i$ are iid, the conclusion follows by an application of Hoeffding's inequality applied to the random variable $W = \sum_{i = 1}^m W_i$. 
\end{proof}

\begin{lemma}
    \label{lem:unit_rep}
    Assume the setting of \cref{thm:jl_rep}. Then, we have:
    \begin{equation*}
        \forall \norm{v} = 1: \sum_{i = 1}^m \bm{1} \lbrb{
                \begin{gathered}
                    (1 - \eps / 128) \leq \norm{\Pi^i v}^2 \leq (1 + \eps / 128) \\
                    \forall x, y \in X: \abs{\inp{\Pi^i v}{\Pi^i (x - y)} - \inp{v}{x - y}} \leq \frac{\eps}{128} \cdot \norm{x - y} \\
                    \norm{\Pi^i}_F \leq O(\sqrt{d}) 
                \end{gathered}
            }
            \geq 0.98m
    \end{equation*}
    with probability at least $1 - \delta / 2$. 
\end{lemma}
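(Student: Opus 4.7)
The plan is a standard $\nu$-net argument over the unit sphere in $\mb{R}^d$ combined with a Hoeffding bound over the $m$ independent sketches, together with \cref{lem:norm_bounds} to handle the Frobenius condition. Since \cref{lem:norm_bounds} already yields the $\norm{\Pi^i}_F = O(\sqrt{d})$ condition for at least $0.99m$ sketches with probability $\geq 1-\delta/4$, it suffices to establish the first two conditions for $\geq 0.99m$ sketches simultaneously over all unit vectors with probability $\geq 1-\delta/4$; intersecting the two events then relaxes the fraction to $0.98m$.

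First I fix a unit vector $v$. By \cref{lem:jl} applied to $v$, and by \cref{cor:ip_pres} applied to the pair $(v,(x-y)/\norm{x-y})$ for each $x,y\in X$, the probability that a single $\Pi^i$ violates the first or second condition (at tolerance $\eps/256$) is at most $O(n^2)\cdot e^{-\Omega(k\eps^2)} \leq 0.005$ since $k = \Omega(\log n/\eps^2)$. The indicators $B_i^v = \bm{1}\{\Pi^i \text{ is good for } v\}$ are therefore i.i.d.\ Bernoulli with mean $\geq 0.995$, and Hoeffding yields $\sum_i B_i^v \geq 0.99m$ with failure probability at most $e^{-\Omega(m)}$.

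Next I take a $\nu$-net $N$ of the unit sphere with $\nu = c\eps/\sqrt{d}$ for a small absolute constant $c$. A standard volumetric bound gives $\abs{N} \leq (3/\nu)^d \leq \exp(O(d\log(nd)))$, where I use $\eps \geq 1/\sqrt{nd}$. Taking $m \geq \Omega((d + \log 1/\delta)\log(nd))$ and union-bounding the previous paragraph over $v \in N$, with probability $\geq 1-\delta/4$ every net point has at least $0.99m$ good indices. Intersecting with \cref{lem:norm_bounds} yields, with total probability $\geq 1-\delta/2$, the event $\mc{E}$: for every $v'\in N$ there is a subset $G_{v'} \subseteq [m]$ of size at least $0.98m$ such that every $i\in G_{v'}$ is simultaneously good for $v'$ and satisfies $\norm{\Pi^i}_F = O(\sqrt{d})$.

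The main obstacle is to extend $\mc{E}$ from the net $N$ to an arbitrary unit vector $v$ without any additional loss. Condition on $\mc{E}$ and pick $v'\in N$ with $\norm{v-v'}\leq\nu$; I claim that every $i\in G_{v'}$ is also good for $v$, so that $G_v \supseteq G_{v'}$ and the bound follows. The key estimate is the operator bound $\norm{\Pi^i(v-v')} \leq \norm{\Pi^i}_F\cdot \nu = O(\sqrt{d}\nu) = O(\eps)$. Expanding
\begin{equation*}
\norm{\Pi^i v}^2 - \norm{\Pi^i v'}^2 = 2\inp{\Pi^i v'}{\Pi^i(v-v')} + \norm{\Pi^i(v-v')}^2
\end{equation*}
and using $\norm{\Pi^i v'} \leq 1 + \eps/256$ keeps $\norm{\Pi^i v}^2$ within $\eps/128$ of $1$, provided $c$ is chosen small enough. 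The inner-product condition is handled identically by splitting $\inp{\Pi^i v}{\Pi^i(x-y)} - \inp{v}{x-y}$ into the $v'$-term plus the cross terms $\inp{\Pi^i(v-v')}{\Pi^i(x-y)}$ and $\inp{v-v'}{x-y}$, each bounded by $O(\sqrt{d}\nu)\norm{x-y} = O(\eps)\norm{x-y}$. This completes the extension and hence the lemma.
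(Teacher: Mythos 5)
Your overall strategy is the same as the paper's: a net over $\mb{S}^{d-1}$, a per-point Hoeffding bound at tolerance $\eps/256$, a union bound over the net of size $(nd)^{O(d)}$, and an extension to the whole sphere that leans on the Frobenius-norm control from \cref{lem:norm_bounds}. The probability accounting ($0.99m$ from the net argument intersected with $0.99m$ from the Frobenius event to give $0.98m$, each at failure probability $\delta/4$) also matches. The one place where your argument does not go through as written is the cross term in the inner-product extension. You claim $\abs{\inp{\Pi^i(v-v')}{\Pi^i(x-y)}} \leq O(\sqrt{d}\,\nu)\norm{x-y}$, but the only control you have on $\norm{\Pi^i(x-y)}$ from your good event is $\norm{\Pi^i(x-y)} \leq \norm{\Pi^i}_F\norm{x-y} = O(\sqrt{d})\norm{x-y}$; combined with $\norm{\Pi^i(v-v')} \leq \norm{\Pi^i}_F\,\nu = O(\sqrt{d}\,\nu)$ this yields only $O(d\,\nu)\norm{x-y} = O(\sqrt{d}\,\eps)\norm{x-y}$, which is off by a factor of $\sqrt{d}$. (Your squared-norm extension is fine, because there $\Pi^i$ hits the short vector $v-v'$ in both slots or is paired with $\norm{\Pi^i v'}\approx 1$.)

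The gap is easy to close in either of two ways. You can enlarge the good event at each net point to also require $\norm{\Pi^i(x-y)} \leq (1+\eps)\norm{x-y}$ for all $x,y\in X$ (this costs nothing extra in the union bound, by \cref{cor:jl}), after which your claimed bound holds. Alternatively, you can do what the paper does: take a much finer net, of resolution $\gamma = c/(nd)^{10}$, so that even the crude bound $\gamma\,\norm{\Pi^i}_F^2\,\norm{x-y} = O(d\gamma)\norm{x-y}$ is far below $\eps\norm{x-y}$ thanks to the assumption $\eps \geq 1/\sqrt{nd}$; the net size remains $(nd)^{O(d)}$, so the same value of $m$ suffices. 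With either patch your proof is complete and is essentially the paper's.
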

\begin{proof}
    Let $\mc{G}$ be a $\gamma$-net over $\mb{S}^{d - 1}$ with $\gamma = \frac{c}{(nd)^{10}}$ for some small enough constant $c$. Furthermore, we may assume $\abs{\mc{G}} \leq (nd)^{O(d)}$. Now, for $u \in \mc{G}$, let:
    \begin{equation*}
        W_i(u) = \bm{1} \lbrb{
            \begin{gathered}
                (1 - \eps / 256) \leq \norm{\Pi^i u}^2 \leq (1 + \eps / 256) \\ 
                \forall x, y \in X: \abs{\inp{\Pi^i u}{\Pi^i (x - y)} - \inp{u}{x - y}} \leq \frac{\eps}{256} \cdot \norm{x - y}
            \end{gathered}    
        }.
    \end{equation*}
    We have from \cref{cor:ip_pres}, that $\mb{P} (W_i(u) = 1) \geq 0.995$. Therefore, we have by an application of Hoeffding's inequality and a union bound over $\mc{G}$ that:
    \begin{equation*}
        \mb{P} \lbrb{\forall u \in \mc{G}: \sum_{i = 1}^m W_i (u) \geq 0.99m} \geq 1 - \delta / 4. 
    \end{equation*}
    We now condition on the event from the previous equation and the conclusion of \cref{lem:norm_bounds}. To extend from the net $\mc{G}$ to the whole sphere, consider $v \in \mb{S}^{d - 1}$ and its nearest neighbor $u \in \mc{G}$. Note that $\norm{v - u} \leq \gamma$. Let $i \in [m]$ be such that $W_i (u) = 1$ and $\norm{\Pi^i}_F \leq O(\sqrt{d})$. We have:
    \begin{gather*}
        \abs{\norm{\Pi^i v}^2 - \norm{\Pi^i u}^2} = \abs{(v + u)^\top (\Pi^i)^\top \Pi^i (v - u)} \leq 2 \cdot \norm{\Pi^i}_F^2 \cdot \gamma \leq \eps / 256
    \end{gather*}
    Furthermore, we have for all $x, y \in X$:
    \begin{align*}
        &\abs{\inp{\Pi^i v}{\Pi^i (x - y)} - \inp{v}{x - y}} \\
        &\qquad \leq \abs{\inp{\Pi^i u}{\Pi^i (x - y)} - \inp{u}{x - y}} + \abs{\inp{\Pi^i (v - u)}{\Pi^i (x - y)}} + \abs{\inp{v - u}{x - y}} \\
        &\qquad \leq \frac{\eps}{256}\cdot \norm{x - y} + \gamma \cdot \norm{\Pi^i}_F^2 \cdot \norm{x - y} + \gamma \cdot \norm{x - y} \leq \frac{\eps}{128} \cdot \norm{x - y}.
    \end{align*}
    Since, for any $u \in \mc{G}$, at least $0.98m$ of the $\Pi^i$ satisfy $W_i (u) = 1$ and $\norm{\Pi^i}_F \leq O(\sqrt{d})$ with probability at least $1 - \delta / 2$, the conclusion of the lemma follows. 
\end{proof}

Finally to establish \cref{thm:jl_rep}, we will need to use a more intricate multi-scale gridding argument than the one used to prove \cref{lem:unit_rep}. Using a single grid of resolution $\gamma$ does not suffice as the dataset, $X$, may contain pairs of points separated by much less than $\gamma$. Bounding the error of the embedding of $q$ in terms of its nearest neighbor in the net does not suffice in such situations. On the other hand, using a finer net whose resolution is comparable to the minimum distance between the points in the dataset leads to a choice of $m$ dependent on the aspect ratio of $X$. The multi-scale argument presented here allows us to circumvent these difficulties.

To define the grid, let $r_{ij} = \norm{x_i - x_j}$ for $x_i, x_j \in X$ and $\mc{G}_{ij}$ be a $\gamma$-net of $\mb{B}(x_i, 2(Cnd)^{10}r_{ij})$ with $\gamma = (Cnd)^{-10}\cdot r_{ij}$ for some large enough constant $C$. The grid in our argument will consist of the union of all the $\mc{G}_{ij}$; that is $\mc{G} = \bigcup_{i,j \in [n]} \mc{G}_{ij}$. Now, for $u \in \mc{G}$, define $W_i (u)$ as follows:
\begin{equation*}
    W_i (u) = \bm{1} \lbrb{\Pi^i \text{ satisfies } \ref{eq:ap_ip}(\eps / 256, X \cup \{u\})}.
\end{equation*}
From \cref{cor:ip_pres}, we have $\mb{P}(W_i (u) = 1) \geq 0.995$. Noting that $\abs{\mc{G}} \leq (2nd)^{O(d)}$, we have by Hoeffding's Inequality and the union bound that with probability at least $1 - \delta / 4$, we have for all $u \in \mc{G}$: $\sum_{i = 1}^m W_i (u) \geq 0.99m$. For the rest of the argument, we will also condition on the conclusions of \cref{lem:unit_rep}. Note, that this event occurs with probability at least $1 - \delta$ from the union bound. Therefore, we have by the union bound with probability at least $1 - \delta$:
\begin{equation*}
    \forall u \in \mc{G}, \forall \norm{v} = 1: \sum_{i = 1}^m \bm{1} \lbrb{
        \begin{gathered}
            (1 - \eps / 128) \leq \norm{\Pi^i v}^2 \leq (1 + \eps / 128) \\
            \forall x, y \in X: \abs{\inp{\Pi^i v}{\Pi^i (x - y)} - \inp{v}{x - y}} \leq \frac{\eps}{128} \cdot \norm{x - y} \\
            \Pi^i \text{ satisfies } \ref{eq:ap_ip} (\eps / 256, X \cup \{u\}) \\
            \norm{\Pi^i}_F \leq O(\sqrt{d}) 
        \end{gathered}
    }
    \geq 0.95m
\end{equation*}

Letting $Y_i (u, v)$ denote the indicator in the above expression, we now condition on the above event for the rest of the proof. Let $q \in \mb{R}^d$ and $x_q = \argmin_{x \in X} \norm{q - x}$ (its closest neighbor in $X$). Note that the case where $q = x_q$ is already covered by the condition on the $W_i$. Therefore, we assume $q \neq x_q$. With $v_q = \frac{(q - x_q)}{\norm{q - x_q}}$ and $\wt{q} = \argmin_{u \in \mc{G}} \norm{q - u}$, let $\mc{J} (q) = \{i: Y_i (\wt{q}, v_q) = 1\}$. We will now prove for all $i \in \mc{J} (q)$:
\begin{equation*}
    \forall \wt{x}, \wt{y}, \wt{z} \in X \cup \{q\}: \abs{\inp{\Pi^i (\wt{x} - \wt{z})}{\Pi^i (\wt{y} - \wt{z})} - \inp{\wt{x} - \wt{z}}{\wt{y} - \wt{z}}} \leq \eps \norm{\wt{x} - \wt{z}} \norm{\wt{y} - \wt{z}}.
\end{equation*}
When $\wt{x} = \wt{z}$ or $\wt{y} = \wt{z}$, the conclusion is trivial. Furthermore, for $\wt{x}, \wt{y}, \wt{z} \in X$, the conclusion follows from the definition of $\mc{J}$. Hence, we may restrict ourselves to cases where $\wt{x}, \wt{y}$ are distinct from $\wt{z}$ and at least one of $\wt{x}, \wt{y}, \wt{z}$ are $q$. We first tackle the cases where $\wt{x}, \wt{y}, \wt{z}$ are distinct and we have the following subcases:

\begin{enumerate}
    \item[] \textbf{Case 1: } $\wt{x} = q$ and $\wt{y}, \wt{z} \in X$. In this case,we have from the definition of $\mc{J}$:
    \begin{align*}
        &\abs{\inp{\Pi^i (q - \wt{z})}{\Pi^i (\wt{y} - \wt{z})} - \inp{q - \wt{z}}{\wt{y} - \wt{z}}} \\
        &\quad \leq \abs{\inp{\Pi^i (q - x_q)}{\Pi^i (\wt{y} - \wt{z})} - \inp{q - x_q}{\wt{y} - \wt{z}}} + \abs{\inp{\Pi^i (x_q - \wt{z})}{\Pi^i (\wt{y} - \wt{z})} - \inp{x_q - \wt{z}}{\wt{y} - \wt{z}}} \\
        &\quad \leq \frac{\eps}{128} \cdot \norm{q - x_q}\cdot \norm{\wt{y} - \wt{z}} + \frac{\eps}{256} \cdot \norm{x_q - \wt{z}} \cdot \norm{\wt{y} - \wt{z}} \\
        &\quad \leq \frac{\eps}{128} \cdot \norm{q - \wt{z}}\cdot \norm{\wt{y} - \wt{z}} + \frac{\eps}{256} \cdot 2 \norm{q - \wt{z}} \cdot \norm{\wt{y} - \wt{z}} \leq \frac{\eps}{64} \cdot \norm{q - \wt{z}} \cdot \norm{\wt{y} - \wt{z}}
    \end{align*}
    concluding the proof in this case.
    \item[] \textbf{Case 2: } $\wt{z} = q$ and $\wt{x}, \wt{y} \in X$. We have by algebraic expansion:
    \begin{align*}
        &\abs{\inp{\Pi^i (\wt{x} - q)}{\Pi^i (\wt{y} - q)} - \inp{\wt{x} - q}{\wt{y} - q}} = \abs{(\wt{x} - q)^\top ((\Pi^i)^\top \Pi^i - I) (\wt{y} - q)} \\
        &\leq \abs{(\wt{x} - x_q)^\top ((\Pi^i)^\top \Pi^i - I) (\wt{y} - x_q)} +  \abs{(x_q - q)^\top ((\Pi^i)^\top \Pi^i - I) (\wt{y} - x_q)} + \\
        &\quad\ \,\abs{(\wt{x} - x_q)^\top ((\Pi^i)^\top \Pi^i - I) (x_q - q)} + \abs{(x_q - q)^\top ((\Pi^i)^\top \Pi^i - I) (x_q - q)} \\
        &\leq \frac{\eps}{256} \cdot \norm{x_q - \wt{x}} \norm{x_q - \wt{y}} + \frac{\eps}{128}\cdot \norm{x_q - q} \norm{\wt{y} - x_q} \\
        &\qquad + \frac{\eps}{128} \cdot \norm{\wt{x} - x_q} \norm{q - x_q} + \frac{\eps}{128}\cdot \norm{x_q - q}^2 \\
        &\leq \frac{\eps}{256} \cdot 2\norm{\wt{x} - q} \cdot 2\norm{\wt{y} - q} + \frac{\eps}{128}\cdot \norm{\wt{x} - q} \cdot 2 \norm{\wt{y} - q} \\
        &\qquad + \frac{\eps}{128} \cdot 2\norm{\wt{x} - q}\cdot \norm{\wt{y} - q} + \frac{\eps}{128}\cdot \norm{\wt{x} - q}\norm{\wt{y} - q} \\
        &\leq \frac{\eps}{16} \cdot \norm{\wt{x} - q} \cdot \norm{\wt{y} - q}
    \end{align*}
    establishing the statement in this case as well.
\end{enumerate}
We now move on to the case where $\wt{x}, \wt{y}, \wt{z}$ are not distinct. As remarked before, it suffices to consider $x = \wt{x} = \wt{y} \neq \wt{z}$ and one of $x, \wt{z}$ are $q$. Without loss of generality, we may assume $\wt{z} = q$. When $x = x_q$ the conclusion follows from the definition of $\mc{J}(q)$. As a consequence, our goal simplifies to establishing for all $i \in \mc{J} (q)$ and $x \neq x_q$:
\begin{equation*}
    \abs{\norm{\Pi^i (q - x)}^2 - \norm{q - x}^2} \leq \eps \norm{q - x}^2.
\end{equation*}
Let $r = \norm{x - x_q}$ and here again, we have two cases:
\begin{enumerate}
    \item[] \textbf{Case 1: } $\norm{q - x_q} \leq 2(Cnd)^{10} r$. From the construction of $\mc{G}$, we have $\norm{\wt{q} - q} \leq (Cnd)^{-10} r$. Furthermore, we have from the fact that $r \leq 2 \norm{q - x}$:
    \begin{align*}
        \abs{\norm{q - x}^2 - \norm{\wt{q} - x}^2} &= \abs{\inp{q - \wt{q}}{(q - x) + (\wt{q} - x)}} \\
        &\leq (Cnd)^{-10} \cdot r \cdot (\norm{q - x} + \norm{\wt{q} - x}) \leq \frac{\eps}{1024} \cdot \norm{q - x}^2.
    \end{align*}
    Additionally, we have:
    \begin{align*}
        &\abs{(\norm{\Pi^i (x - q)}^2 - \norm{x - q}^2) - (\norm{\Pi^i (x - \wt{q})}^2 - \norm{x - \wt{q}}^2)} \\
        &\qquad = \abs{(x - q)^\top ((\Pi^i)^\top \Pi^i - I) (x - q) - (x - \wt{q})^\top ((\Pi^i)^\top \Pi^i - I) (x - \wt{q})} \\
        &\qquad = \abs{(\wt{q} - q)^\top ((\Pi^i)^\top \Pi^i - I) ((x - q) + (x - \wt{q}))} \\
        &\qquad \leq \norm{\wt{q} - q} (\norm{\Pi^i}_F^2 + 1) (\norm{x - q} + \norm{x - \wt{q}}) \leq \frac{\eps}{1024} \cdot \norm{x - q}^2.
    \end{align*}
    The conclusion now follows from the inequalities in the previous two displays and the following condition on $\Pi^i$ in the definition of $\mc{J}$:
    \begin{equation*}
        \abs{\norm{\Pi^i (\wt{q} - x)}^2 - \norm{\wt{q} - x}^2} \leq \frac{\eps}{256} \cdot \norm{\wt{q} - x}^2 \leq \frac{\eps}{128} \norm{q - x}^2.
    \end{equation*}
    \item[] \textbf{Case 2: } $\norm{q - x_q} \geq 2(Cnd)^{10} r$. We have similarly to the previous case:
    \begin{align*}
        \abs{\norm{q - x}^2 - \norm{q - x_q}^2} &= \abs{\inp{x_q - x}{(q - x) + (q - x_q)}} \\
        &\leq r \cdot (\norm{q - x} + \norm{q - x_q}) \leq \frac{\eps}{1024} \cdot \norm{q - x}^2.
    \end{align*}
    Now, we have:
    \begin{align*}
        &\abs{(\norm{\Pi^i (q - x)}^2 - \norm{q - x}^2) - (\norm{\Pi^i (q - x_q)}^2 - \norm{q - x_q}^2)} \\
        &\qquad = \abs{(q - x)^\top ((\Pi^i)^\top \Pi^i - I) (q - x) - (q - x_q)^\top ((\Pi^i)^\top \Pi^i - I) (q - x_q)} \\
        &\qquad = \abs{(x_q - x)^\top ((\Pi^i)^\top \Pi^i - I) ((q - x) + (q - x_q))} \\
        &\qquad \leq \norm{x_q - x} (\norm{\Pi^i}_F^2 + 1) (\norm{q - x} + \norm{q - x_q}) \leq \frac{\eps}{1024} \cdot \norm{q - x}^2.
    \end{align*}
    Similarly, we conclude from the following inequality implied by the definition of $\mc{J}$:
    \begin{equation*}
        \abs{\norm{\Pi^i (q - x_q)}^2 - \norm{q - x_q}^2} \leq \frac{\eps}{128} \cdot \norm{q - x_q}^2 \leq \frac{\eps}{128}\cdot \norm{q - x}^2.
    \end{equation*}
\end{enumerate}

The four cases just enumerated establish the theorem assuming $\abs{\mc{J} (q)} \geq 0.95m$ for all $q \in \mb{R}^d$. As shown before, this occurs with probability at least $1 - \delta$ concluding the proof of the theorem.
\qed

\section{Acknowledgements}
\label{sec:ack}

The authors would like to thank Sidhanth Mohanty for enlightening conversations in the course of this project and helpful feedback in the preparation of this manuscript.

\bibliographystyle{alpha}
\bibliography{adaptive}

\appendix
\section{Construct Partition Tree}
\label{sec:cons_tree}

In this section, we discuss our construction of a Partition Tree (\cref{def:part_tree}) and prove \cref{lem:cons_tree} which we restate below:

\constree*

The algorithm works by recursively constructing the nodes of the tree starting from the root and then partitioning the point set for that node to construct its children and so on. In \cref{ssec:ap_part} we show how to partition points at a single node and in \cref{ssec:cons_part_tree}, we use this to construct the full tree. 

\subsection{Partitioning at a Single Node}
\label{ssec:ap_part}

In this subsection, we describe the partitioning procedure at a single node. This will then be incorporated into a recursive procedure to construct the entire tree. To begin, recall \cref{def:gp_cc,def:part_ref} and the definition of $\rmed$ from \cref{sec:techniques}. While it is possible to compute $\rmed(X)$ in time $\Ot(n^2d)$, obtaining a crude estimate is sufficient for our purposes \cite{him}. Furthermore, we will also not require computing $\cc (X, r)$ exactly but appropriate refinements and coarsenings suffice. We first restate a simple lemma from \cite{him}:

\begin{lemma}
    \label{lem:rmed_apx}
    Given $X = \{x_i\}_{i = 1}^n \subset \mb{R}^d$ and $\delta \in (0, 1)$, there is a randomized algorithm, $\comprmed$, that computes in time $O(nd \log 1 / \delta)$ and outputs an estimate $\rhat$ satisfying:
    \begin{equation*}
        \mb{P} \lbrb{\rhat \geq \rmed(X)} = 1 \text{ and } \mb{P} \lbrb{\rhat \leq n \rmed(X)} \geq 1 - \delta.
    \end{equation*}
\end{lemma}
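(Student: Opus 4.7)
The plan is to use a simple random-sampling strategy based on the observation that a uniformly random point's median-ranked neighbor distance simultaneously upper-bounds $\rmed$ deterministically and matches it up to a factor of $n$ with constant probability. Concretely, I would sample $x \in X$ uniformly at random, compute $\norm{x - x_i}$ for every $x_i \in X$ in $O(nd)$ time, and let $d_x$ denote the $\lceil n/2 \rceil$-th smallest such distance (which can be extracted in $O(n)$ time via linear-time selection).

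For the deterministic upper bound $d_x \geq \rmed(X)$, I would argue as follows: by construction, the closed ball $\mb{B}(x, d_x)$ contains at least $n/2$ points of $X$, each of which is within distance $d_x$ of $x$. In $\gr(X, d_x)$ all of these points are directly adjacent to $x$ and therefore lie in a single connected component of size $\geq n/2$. Since $\rmed$ is by definition the \emph{smallest} radius at which such a component exists, this forces $d_x \geq \rmed(X)$ with probability one, regardless of the sample.

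For the randomized matching bound, I would use the component $C^* \in \cc(X, \rmed)$ with $|C^*| \geq n/2$ guaranteed to exist by the definition of $\rmed$. With probability $\geq 1/2$ over the uniform choice, the sampled $x$ lies in $C^*$. Conditioned on that event, every other $y \in C^*$ is reachable from $x$ in $\gr(X, \rmed)$ via a path of at most $n-1$ edges each of length $\leq \rmed$, so $\norm{x - y} \leq (n-1)\rmed$. Thus at least $n/2$ points of $X$ lie inside $\mb{B}(x, (n-1)\rmed)$, which gives $d_x \leq (n-1)\rmed \leq n\cdot \rmed$.

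To amplify the success probability to $1-\delta$, I would run $T = \Theta(\log 1/\delta)$ independent trials producing $d_{x_1},\dots,d_{x_T}$, and return $\rhat = \min_i d_{x_i}$. The deterministic inequality $\rhat \geq \rmed$ is preserved under the minimum, while the probability that \emph{every} trial's sample avoids $C^*$ is at most $2^{-T} \leq \delta$, so with probability $\geq 1-\delta$ at least one trial satisfies $d_{x_i} \leq n\cdot \rmed$ and hence $\rhat \leq n\cdot \rmed$. Each trial costs $O(nd) + O(n) = O(nd)$, giving total runtime $O(nd\log 1/\delta)$. There is no substantial obstacle here; the only minor subtlety is matching the ``$\lceil n/2 \rceil$-th smallest'' convention in the definition of $d_x$ to the $|C| \geq n/2$ threshold in $\rmed$ (whether $x$ itself is counted), which is a one-line off-by-one bookkeeping check.
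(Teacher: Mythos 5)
Your proposal is correct and follows essentially the same route as the paper: sample a uniformly random point, take the median distance to it, observe that this deterministically upper-bounds $\rmed$ (since the $\geq n/2$ points within that radius are all adjacent to the sample in the relevant graph) and is at most $n\rmed$ with probability $\geq 1/2$ via the path/triangle-inequality argument when the sample lands in the large component, then amplify by taking the minimum over $\Theta(\log 1/\delta)$ independent trials. No substantive differences from the paper's argument.
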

\begin{proof}
    Let $C \in \cc (X, \rmed(X))$ be such that $\abs{C} \geq n / 2$. Picking a point, $x$, uniformly at random from $X$ picks a point in $C$ with probability at least $1 / 2$. Now, we compute distances $\{\norm{x_i - x}\}_{i = 1}^n$ and output their median, $\rhat$. Conditioned on $x \in C$, we have by the triangle inequality, that $\rhat \leq n \rmed (X)$ which proves the second claim with probability at least $1/2$. For the first, note that $x$ belongs to a connected component in $\cc (X, \rhat)$ of size at least $n / 2$. This establishes the first claim of the lemma. By repeating this procedure $\Omega (\log 1 / \delta)$ times and taking the minimum of the returned estimates establishes the lemma by an application of Hoeffding's inequality. 
\end{proof}

\begin{lemma}
    \label{lem:ap_part}
    Let $X = \{x_i\}_{i = 1}^n \subset \mb{R}^d$, $r > 0$ and $\delta \in (0, 1)$. Then, there is a randomized algorithm, $\conpart$ that outputs a partitioning of $X$, $\mc{C}$, satisfying:
    \begin{equation*}
        \cc (X, 1000n^2 r) \sqsubseteq \mc{C} \sqsubseteq \cc (X, r)
    \end{equation*}
    with probability at least $1 - \delta$. Furthermore, the algorithm runs in time $O(nd \log (n / \delta))$.
\end{lemma}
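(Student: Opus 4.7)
The plan is to realize $\conpart$ as an iterative ball-carving loop. Initialize $U \gets X$ and $\mc{C} \gets \emptyset$. While $U \neq \emptyset$, each iteration: (i) sample a pivot $p \in U$ uniformly at random; (ii) compute $d(p, x)$ for every $x \in U$ in time $O(|U|d)$; (iii) sample a radius $R$ uniformly at random from an interval $[\alpha, \beta] \subset [\Theta(nr), \Theta(n^2 r)]$ whose length is $\Omega(n^2 r)$, with the exact scale informed by an invocation of $\comprmed(U, \delta')$ from \cref{lem:rmed_apx}; (iv) carve off the cell $A = \{x \in U : d(p, x) \leq R\}$, append it to $\mc{C}$, and update $U \gets U \setminus A$.

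Correctness splits in two. The upper refinement $\mc{C} \sqsubseteq \cc(X, 1000n^2 r)$ is immediate from the diameter bound $\mathrm{diam}(A) \leq 2\beta \leq 1000 n^2 r$: any two points of $A$ are joined by the single direct edge through the pivot $p$ in $\gr(X, 1000 n^2 r)$. For the lower refinement $\cc(X, r) \sqsubseteq \mc{C}$, the key probabilistic argument is that with high probability no connected component $C$ of $\gr(X, r)$ is split across two cells. Fix a spanning tree $T_C$ of $C$ in $\gr(X, r)$; the ball around $p$ cuts $C$ only if some edge $(x, y) \in T_C$ satisfies $d(p, x) \leq R < d(p, y)$, and by the triangle inequality $|d(p, x) - d(p, y)| \leq d(x, y) \leq r$. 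A uniformly random $R$ in an interval of length $\Omega(n^2 r)$ therefore lands in this bad sub-interval with probability $O(1/n^2)$. Summing over the $\leq n - 1$ total spanning-tree edges across all components and over the (at most $O(\log(n/\delta))$) iterations, a union bound yields total failure probability at most $\delta$ after an appropriate choice of constants in the interval for $R$.

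For runtime, each round costs $O(|U| d)$ for the distance computation plus $O(|U| d \log 1/\delta')$ for $\comprmed$. To reach the overall bound $O(nd \log(n/\delta))$ I need $|U|$ to contract geometrically with high probability. The halving claim is that, with probability $\geq 1/2$ over the random pivot, $p$ lies in the ``big'' component $C^*$ of $\gr(U, \rmed(U))$ (which has $|C^*| \geq |U|/2$ by definition of $\rmed$, and diameter at most $|U|\rmed(U) \leq |U|\hat{r}$); when $R$ exceeds this diameter, the ball swallows all of $C^*$ and $|U|$ halves. Chernoff then yields $O(\log(n/\delta))$ halvings, and hence the desired runtime, with probability $1-\delta$. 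The hard step in the analysis is reconciling the ``halving radius'' $\Theta(|U|\hat{r})$ produced by $\comprmed$ with the cut-avoidance window $[\Theta(nr), \Theta(n^2 r)]$ dictated by the refinement requirement: these ranges must be made to overlap, which forces a case analysis on whether $\hat{r}$ lies below, within, or above $\Theta(n r)$, and in the regime $\hat{r} \gg r$ one rescales the ball-carving step so that $R$ is drawn from an interval anchored at $\Theta(n\hat{r})$ rather than $\Theta(nr)$, paying a constant loss in the cut-probability union bound that is absorbed into the boosted setting of $\delta'$.
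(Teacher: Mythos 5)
Your ball-carving scheme is a genuinely different algorithm from the paper's $\conpart$ (which runs $O(\log(n/\delta))$ rounds of random one-dimensional Gaussian projections, adds an edge between points whose projections fall within a window of width $10r$ and whose true distance is at most $1000n^2r$, and outputs the connected components of the union of all edges). Unfortunately it has two gaps. The first is the failure probability: the upper refinement $\cc(X,1000n^2r)\sqsubseteq\mc{C}$ forces the carving radius, and hence the length $L$ of the interval from which $R$ is drawn, to be $O(n^2r)$, so each spanning-tree edge is cut with probability $\Theta(r/L)=\Omega(1/n^2)$ per carve, and a single component of $\gr(X,r)$ with $n-1$ tree edges is split with probability $\Omega(1/n)$ already in one iteration. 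No ``appropriate choice of constants'' changes this, and the lemma must hold for arbitrary $\delta\in(0,1)$ --- indeed the paper invokes it with $\delta$ of order $\delta_0/n^2$ and $\delta_0/(nd)^{10}$. Your proposal contains no $\delta$-dependent amplification mechanism (the $\delta'$ passed to $\comprmed$ only controls the quality of $\hat r$, not the cut probabilities). A repair does exist --- run $O(\log(n/\delta))$ independent carvings and output the common coarsening of the resulting partitions, which preserves the upper refinement and succeeds for a component whenever any single run does --- but it is not what you propose, and it still would not rescue the runtime.

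The second gap is that the halving argument needed for the $O(nd\log(n/\delta))$ runtime is in direct, deterministic conflict with the upper refinement. Swallowing the large component of $\gr(U,\rmed(U))$ requires $R\geq|U|\rmed(U)$, which exceeds the $O(n^2r)$ ceiling whenever $\rmed(U)\gg nr$. Your fix of anchoring $R$ at $\Theta(n\hat r)$ then carves a cell of diameter $\Theta(n\hat r)\gg1000n^2r$ that can intersect many distinct components of $\gr(X,1000n^2r)$; this violates $\cc(X,1000n^2r)\sqsubseteq\mc{C}$ outright, and is not a low-probability event that can be ``absorbed into the boosted setting of $\delta'$.'' Without halving, the iteration count is bounded only by $n$ and the runtime degrades to $O(n^2d)$. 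The paper's construction avoids both problems at once because its cells are connected components of a graph all of whose edges have length at most $1000n^2r$: a cell may have enormous diameter yet still lie inside one component of $\gr(X,1000n^2r)$, and since adding edges over additional rounds only helps the lower refinement, the failure probability decays geometrically in the number of rounds. Any ball-carving variant ties ``containment in a coarse component'' to ``small cell diameter,'' and that coupling is exactly what breaks here.
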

\begin{proof}
    The randomized algorithm is detailed in the following pseudocode.
    \begin{algorithm}[H]
        \begin{algorithmic}[1]
            \State \textbf{Input:} Point set $X = \{x_i\}_{i = 1}^n \subset \mb{R}^d$, Resolution $r$, Failure Probability $\delta$
            \State $K \gets 10 \log (n / \delta), \tau \gets 10 r, \nu \gets 1000 n^2r$
            \State $V \gets X, E \gets \phi$
            \For {$k = 1:K$}
                \State $g_k \thicksim \mc{N}(0, I)$
                \State For all $i \in [n]$, let $v^{(k)}_i = \inp{x_i}{g_k}$
                \State Let $x^{(k)}_1, \dots, x^{(k)}_n$ be an ordering of the $x_i$ increasing in $v^{(k)}_i$
                \State $i \gets 1$
                \While {$i < n$}
                    \State $i_n = \max \{i < j \leq n: v^{(k)}_j \leq v^{(k)}_i + \tau\}$
                    \State Add $(x^{(k)}_i, x^{(k)}_j)$ to $E$ for $j \in \{i + 1, \dots, i_n\}$ if $\norm{x^{(k)} - x^{(k)}} \leq \nu$
                    \State $i \gets \max(i + 1, i_n)$
                \EndWhile
            \EndFor
            \State $\mc{C} \gets \mathrm{ConnectedComponents}(V, E)$
            \State \textbf{Return:} $\mc{C}$
        \end{algorithmic}
        \caption{$\conpart (X, r, \delta)$}
        \label{alg:con_part}
    \end{algorithm}
    By the definition of \cref{alg:con_part}, we see that for every $(x, y) \in E$, we must have $\norm{x - y} \leq \nu$. Therefore, we obtain $\mc{C}$ refines $\cc (X, 1000 n^2r)$. We now show that $\cc (X, r)$ refines $\mc{C}$. To do this, we will need the following claim:
    \begin{claim}      
        \label{clm:prod}
        For $x, y \in \mb{R}^d$ and $g \thicksim \mc{N}(0, I)$, we have:
        \begin{gather*}
            \mb{P} \lbrb{\abs{\inp{g}{x - y}} \leq \tau} \geq \frac{999}{1000} \text{ if } \norm{x - y} \leq r \\
            \mb{P} \lbrb{\abs{\inp{g}{x - y}} \leq \tau} \leq \frac{1}{100 n^2} \text{ if } \norm{x - y} \geq \nu
        \end{gather*}
    \end{claim}
    \begin{proof}
        The first inequality follows from the observation that $\inp{g}{x - y}$ is a gaussian with variance ${\norm{x - y}^2}$ and the definition of $\tau$. The second follows from the same fact and the fact that pdf of the standard normal distribution takes maximum value $1 / \sqrt{2\pi}$.
    \end{proof}
    Now, fix an outer iteration $k$. From \cref{clm:prod}, we have with probability at least $1 / 100$, that $\abs{v_i^{(k)} - v_j^{(k)}} > \tau$ for all $\norm{x_i - x_j} \geq \nu$. Let $x_i, x_j \in X$ with $\norm{x_i - x_j} \leq r$. Then with probability at least $0.999$, we have that $\abs{v^{(k)}_i - v^{(k)}_j} \leq \tau$. For the rest of the argument, we condition on the previous two events and we assume, without loss of generality, that $v^{(k)}_i \leq v^{(k)}_j$. Note that since $v^{(k)}_j - v^{(k)}_i \leq \tau$, one of the following cases must occur:
    \begin{enumerate}
        \item[] \textbf{Case 1:} $(x_i, x_j) \in E$,
        \item[] \textbf{Case 2:} $(z, x_i), (z, x_j) \in E$ for some $z \in X$ or
        \item[] \textbf{Case 3:} $(z, x_i), (z, w), (w, x_j) \in E$ for some $z, w \in X$.
    \end{enumerate}
    In all three cases, we see that $x_i$ and $x_j$ are in the same connected component with respect to $E$. Note that $x_i, x_j$ are in the same connected component if our good event occurs for at least $1$ of the $K$ outer iterations of the algorithm. Therefore, the probability that $x_i, x_j$ are in the same connected component over all $K$ runs is at least $1 - \delta / n^2$. By a union bound, this establishes that with probability at least $1 - \delta$, $x_i, x_j$ are in the same connected component in $E$ for all $\norm{x_i - x_j} \leq r$.

    The runtime of the algorithm follows from the fact that we add at most $O(n \log (n / \delta))$ edges to $E$ over the entire run of the algorithm and we do at most $O(nd)$ amount of work in each run of the while loop. 
\end{proof}

\subsection{Constructing the Partition Tree}
\label{ssec:cons_part_tree}
We will now use the results of \cref{ssec:ap_part} to construct the whole tree.

\begin{proof}
    We will first assume that the functions $\comprmed$ and $\conpart$ run successfully (that is, they satisfy the conclusions of \cref{lem:rmed_apx,lem:ap_part} respectively) in every recursive call of \cref{alg:cons_tree} and then finally bound the probability of this event. Note that when $\comprmed$ runs successfully, $\rhat$ always satisfies $\rmed \leq \rhat \leq n\rmed$ for every recursive call of \cref{alg:cons_tree} (\cref{lem:rmed_apx}). Together with the correctness of $\conpart$ (\cref{lem:ap_part}), we get that:
    \begin{equation*}
        \cc (Z, 1000n^2\rhat) \sqsubseteq \chigh \sqsubseteq \cc (Z, \rhat) \sqsubseteq \cc (Z, \rmed) \sqsubseteq \cc \lprp{Z, \frac{\rhat}{10n}} \sqsubseteq \clow \sqsubseteq \cc \lprp{Z, \frac{\rhat}{1000n^3}}
    \end{equation*}
    for every node $\mc{T}^\prime = \lprp{Z, \lbrb{\mc{T}_C}_{C \in \clow}, \mc{T}_{\mathrm{rep}}, \clow, \chigh, \crep, \rhat} \in \mc{T}$. Furthermore, for any such $\mc{T}^\prime$, we get from the fact that $\cc (Z, \rmed) \sqsubseteq \clow$, that all $C \in \clow$ satisfy $\abs{C} \leq \abs{Z} / 2$. In addition, the definition of $\rmed$ and the fact that $\chigh \sqsubseteq \cc (Z, \rmed)$ yield $\abs{C_{\mathrm{rep}}} \leq \abs{Z} / 2$. To bound, first note that $\abs{C_{\mathrm{rep}}} \leq \abs{\clow}$ from the fact that $\chigh \sqsubseteq \clow$ and the construction of $C_{\mathrm{rep}}$. To bound $\Size (\mc{T})$, define $B(n)$ as follows:
    \begin{equation*}
        B(n) = n + \max_{\substack{n_1, \dots, n_k, k \\ \sum_{i = 1}^k n_i = n \\ k \leq n / 2, \forall i \in [k] : 1 \leq n_i \leq n / 2}} B(k) + \sum_{i = 1}^k B(n_i) \text{ and } B(1) = 1.
    \end{equation*}
    From the definition of $B(n)$, we see that $B(n)$ is monotonic in $n$ and from this, we get that $B(n)$ is an upper bound on $\Size (\mc{T})$. We now recall the following claim from \cite{him}:
    \begin{claim}[\cite{him}]
        \label{clm:him_bb}
        For all $n \geq 3$, $B(n) \leq C n \log n$.
    \end{claim}
    The above claim establishes the bound on $\Size (\mc{T})$. 

    Finally, we bound the probability that any execution of $\comprmed$ and $\conpart$ fail. We start by bounding the probability that any of the first $5 B(n)$ runs of $\comprmed$ and $\conpart$ fail. From the definition of $\delta^{\dagger}$, the probability that any of the $5 B(n)$ runs of $\comprmed$ and $\conpart$ fail is at most $1 - \delta$ by the union bound. However, the preceding argument shows that the algorithm terminates with fewer than $B(n)$ recursive calls if none of the executions of $\comprmed$ and $\conpart$ fail. Therefore, the probability that any of the executions of $\comprmed$ and $\conpart$ fail in the running of the algorithm is at most $1 - \delta$. This yields the previously derived conclusions with probability at least $1 - \delta$. 
\end{proof}

\begin{algorithm}[H]
    \begin{algorithmic}[1]
        \State \textbf{Input:} Point set $Z = \{x_i\}_{i = 1}^m \subset \mb{R}^d$, Failure Probability $\delta$, Total Number of points $n$
        \If {$\abs{X} = 1$}
            \State \textbf{Return:} $(X, \phi, \phi, \phi, \phi, \phi)$
        \EndIf
        \State $\delta^\dagger \gets \frac{\cprob \delta}{n^2}$
        \State $\rhat \gets \comprmed(Z, \delta^\dagger)$
        \State $\clow \gets \conpart(Z, \rhat / (1000n^3), \delta^\dagger)$, $\chigh \gets \conpart(Z, \rhat, \delta^\dagger)$
        \State For $C \in \clow$, let $\mc{T}_C \gets \mathrm{ConstructPartitionTree}(C, n, \delta)$
        \State For $C \in \chigh$, pick representative $x \in C$ and add it to $C_{\mathrm{rep}}$
        \State $\mc{T}_{\mathrm{rep}} \gets \mathrm{ConstructPartitionTree}(C_{\mathrm{rep}}, n, \delta)$
        \State \textbf{Return:} $\lprp{Z, \lbrb{\mc{T}_C}_{C \in \clow}, \mc{T}_{\mathrm{rep}}, \clow, \chigh, C_{\mathrm{rep}}, \rhat}$
    \end{algorithmic}
    \caption{$\mathrm{ConstructPartitionTree} (X, n, \delta)$}
    \label{alg:cons_tree}
\end{algorithm}

\section{Miscellaneous Results}
\label{sec:misc_res}
In this section, we develop some standard tools needed for our constructions. In \cref{ssec:ellips}, we recall some basic facts about the Ellipsoid algorithm for convex optimization \cite{ny83,bertsekas} and analyze it when it's instantiated with a weak oracle as in our terminal embedding construction. 

\subsection{Ellipsoid Algorithm Preliminaries}
\label{ssec:ellips}

We recall some basic facts regarding the operation of the Ellipsoid algorithm for convex optimization. We will instead use a weaker version where your goal is simply to output a feasible point in a convex set. In what follows, our goal is to find a point in a closed convex set, $K \subset \mb{R}^d$ and we are given a starting $x$ and $R \geq 0$ such that for all $K \subseteq \mb{B} (x, R)$. Furthermore, the algorithm assumes access to an oracle $\mc{O}$ which when given a point $x \in \mb{R}^d$ either:
\begin{enumerate}
    \item Outputs $v \neq 0$ such that $\forall y \in K$, $\inp{y - x}{v} \geq 0$ or
    \item Outputs $\mrm{FAIL}$.
\end{enumerate}
Given access to such an oracle the Ellipsoid algorithm proceeds as follows:
\begin{algorithm}[H]
    \begin{algorithmic}[1]
        \State \textbf{Input:} Initialization $x \in \mb{R}^d$, Initial Distance $R > 0$, Separating Oracle $\mc{O}$
        \State $\xt{0} \gets x, \At{0} \gets R^2 \cdot I, t \gets 0$
        \While {$\vt{t} = \mc{O}(\xt{t}) \neq \mrm{FAIL}$}
            \State $\ut{t + 1} \gets \frac{\At{t} \vt{t}}{\sqrt{(\vt{t})^\top\At{t} \vt{t}}}$
            \State $\xt{t + 1} \gets \xt{t} + \frac{1}{(d + 1)} \ut{t + 1}$
            \State $\At{t + 1} \gets \frac{d^2}{d^2 - 1} \lprp{\At{t} - \frac{2}{d + 1} \ut{t}(\ut{t})^\top}$
            \State $t \gets t + 1$
        \EndWhile
        \State \textbf{Return: } $\xt{t}$
    \end{algorithmic}
    \caption{$\mathrm{Ellipsoid} (x, R, \mc{O})$}
    \label{alg:ellipsoid}
\end{algorithm}
We recall some classical facts regarding the operation of \cref{alg:ellipsoid} where $\mc{E}(x, A)$ denotes the ellipsoid $\{y \in \mb{R}^d: (y - x)^\top A^{-1} (y - x) \leq 1\}$.
\begin{lemma}[\cite{ny83,bertsekas}]
    \label{lem:ellipsoid_misc}
    Let $x \in \mb{R}^d$, $A \succ 0$. Then for any $v \neq 0$, let:
    \begin{equation*}
        u = \frac{A v}{\sqrt{v^\top A v}}, \qquad \wt{x} = x + \frac{1}{d + 1} u \quad \text{and}\quad \wt{A} = \frac{d^2}{d^2 - 1} \lprp{A - \frac{2}{d + 1} uu^\top}.
    \end{equation*}
    Then, we have:
    \begin{equation*}
        \mc{E} (x, A) \cap \{y: \inp{y - x}{v} \geq 0\} \subseteq \mc{E} (\wt{x}, \wt{A}) \quad \text{and}\quad \vol (\mc{E}(\wt{x}, \wt{A})) \leq \exp \lprp{ - \frac{1}{2(d + 1)}} \vol (\mc{E}(x, A)).
    \end{equation*}
\end{lemma}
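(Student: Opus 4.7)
The plan is the classical affine-normalization argument. First I would reduce to the unit-ball case by writing $A = B B^\top$ with $B = A^{1/2}$ and changing variables via $z = B^{-1}(y - x)$: this transformation carries $\mc{E}(x, A)$ to $\mb{B} \subset \mb{R}^d$, sends the half-space $\{y : \inp{y - x}{v} \geq 0\}$ to $\{z : \inp{z}{B^\top v} \geq 0\}$, and rescales every Lebesgue volume by the same factor $|\det B|$, so volume ratios are preserved. A further orthogonal rotation fixing $\mb{B}$ aligns $B^\top v$ with $e_1$. Hence it suffices to (a) exhibit an ellipsoid $E^\star \subset \mb{R}^d$ that contains the half-ball $H := \{z \in \mb{B} : z_1 \geq 0\}$ with $\vol(E^\star)/\vol(\mb{B}) \leq e^{-1/(2(d+1))}$, and (b) check that pulling $E^\star$ back through this change of variables gives exactly $\mc{E}(\wt x, \wt A)$ for the formulas in the statement.

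The candidate $E^\star$ is centered at $e_1/(d+1)$ with diagonal shape matrix whose first entry is $d^2/(d+1)^2$ and whose remaining $d-1$ entries equal $d^2/(d^2-1)$, i.e.\
\begin{equation*}
E^\star := \Bigl\{z \in \mb{R}^d : \tfrac{(d+1)^2}{d^2}\bigl(z_1 - \tfrac{1}{d+1}\bigr)^2 + \tfrac{d^2-1}{d^2}\sum_{i=2}^d z_i^2 \leq 1\Bigr\}.
\end{equation*}
For containment I would use $\sum_{i \geq 2} z_i^2 \leq 1 - z_1^2$ on $H$, expand, and collect terms to find that the left-hand side equals $1 + \tfrac{2(d+1)}{d^2} z_1(z_1 - 1)$, which is at most $1$ since $z_1 \in [0, 1]$. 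For the volume ratio, the diagonal determinant gives
\begin{equation*}
R := \frac{\vol(E^\star)}{\vol(\mb{B})} = \frac{d}{d+1}\Bigl(\frac{d^2}{d^2-1}\Bigr)^{(d-1)/2},
\end{equation*}
and applying $\log(1 + 1/((d-1)(d+1))) \leq 1/((d-1)(d+1))$ to the second factor together with the classical inequality $\log(1+x) \geq x/(1+x)$ at $x = 1/d$ (which gives $\log(1+1/d) \geq 1/(d+1)$) to the first yields $\log R \leq -1/(d+1) + 1/(2(d+1)) = -1/(2(d+1))$.

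For the pullback step, let $Q$ be an orthogonal matrix with $Qe_1 = B^\top v / \|B^\top v\|$; then the affine map $z \mapsto x + BQz$ sends $E^\star$ to an ellipsoid of the form $\mc{E}(\wt x, \wt A)$. The key observation is $BQe_1 = BB^\top v / \|B^\top v\| = Av/\sqrt{v^\top A v} = u$, so the new center is $x + u/(d+1)$, matching $\wt x$. Writing the shape matrix of $E^\star$ as $\tfrac{d^2}{d^2-1}\bigl(I - \tfrac{2}{d+1} e_1 e_1^\top\bigr)$ (a short identity check) and conjugating by $BQ$ gives $\tfrac{d^2}{d^2-1}\bigl(A - \tfrac{2}{d+1}uu^\top\bigr) = \wt A$, using $BQ(BQ)^\top = BB^\top = A$. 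The only non-mechanical step is the containment algebra in (a) together with the pair of logarithmic estimates; everything else is routine change-of-variables and rank-one linear algebra, so I do not anticipate any real obstacle beyond keeping this computation tidy.
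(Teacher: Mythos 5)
Your proof is correct, and it is the standard textbook argument for the ellipsoid update (affine reduction to the half-ball, the Löwner--John ellipsoid of the half-ball, and the two logarithmic estimates for the volume drop); the paper itself does not prove this lemma but simply cites \cite{ny83,bertsekas}, so your write-up supplies exactly the omitted classical proof. The only cosmetic point is that after invoking $\sum_{i\ge 2} z_i^2 \le 1 - z_1^2$ the displayed quantity $1 + \tfrac{2(d+1)}{d^2}z_1(z_1-1)$ is an upper bound on the left-hand side rather than equal to it, but the containment conclusion is unaffected.
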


We now use \cref{lem:ellipsoid_misc} to establish a slightly weaker guarantee for the \cref{alg:ellipsoid} corresponding to our weaker oracle.

\weakellipsoid*

\begin{proof}
    Suppose \cref{alg:ellipsoid} ran for $T$ iterations with iterates, $(\xt{t}, \At{t})_{t = 0}^T$. From \cref{lem:ellipsoid_misc}, we have for any $t \in \{0, \dots, T - 1\}$, $\vol (\mc{E}(\xt{t + 1}, \At{t + 1})) \leq \exp\lprp{- \frac{1}{2(d + 1)}} \vol (\mc{E}(\xt{t}, \At{t}))$. Furthermore, \cref{lem:ellipsoid_misc} along with our assumption on $\mc{O}$ also implies that for all $t \in \{0, \dots, T\}$, $K \subseteq \mc{E}(\xt{t}, \At{t})$. From these facts, we have:
    \begin{equation*}
        \exp \lprp{- \frac{T}{2(d + 1)}} \vol (\mb{B}(x, R)) = \exp \lprp{- \frac{T}{2(d + 1)}} \vol (\mc{E}(x, R^2\cdot I)) \geq \vol (K) \geq \vol (\mb{B}(x^*, \eps)).
    \end{equation*}
    By taking rearranging the above inequality, taking logarithms on both sides and by using the homogeneity properties of Euclidean volume, we get the desired bound on the number of iterations. The bound on the computational complexity of the algorithm follows from the fact that each iteration takes time $O(d^2)$ to compute. 
\end{proof}

\subsection{Miscellaneous Technical Results}
\label{ssec:misc_tech}

Here, we present miscellaneous technical results needed in other parts of our proof. 

\ippres*

\begin{proof}
    We have by the definition of the inner product:
    \begin{equation*}
        \inp{x}{y} = \frac{1}{4} \lprp{\norm{x + y}^2 - \norm{x - y}^2}.
    \end{equation*}
    Now, let $x, y \in \conv(T)$. If $x = y$, the result is true from the fact that $\Pi$ has $\eps$-convex hull distortion for $X$. Therefore, assume $x \neq y$. We have:
    \begin{align*}
        \abs{\inp{\Pi x}{\Pi y} - \inp{x}{y}} &= \frac{1}{4} \lprp{\abs{\norm{\Pi (x + y)}^2 - \norm{\Pi (x - y)^2} - \norm{x + y}^2 + \norm{x - y}^2}} \\
        &\leq \frac{1}{4} \lprp{\abs{\norm{\Pi (x + y)}^2 - \norm{x + y}^2} + \abs{\norm{\Pi (x - y)^2} - \norm{x - y}^2}}.
    \end{align*}
    For the first term, we have:
    \begin{equation*}
        \abs{\norm{\Pi (x + y)}^2 - \norm{x + y}^2} = 4 \lprp{\norm*{\Pi \lprp{\frac{x + y}{2}}} + \norm*{\frac{x + y}{2}}} \abs*{\norm*{\Pi \lprp{\frac{x + y}{2}}} - \norm*{\frac{x + y}{2}}} \leq 12 \eps
    \end{equation*}
    where the final inequality follows from the fact that $\frac{x + y}{2} \in \conv (T)$, the fact that $\norm{x},\norm{y} \leq 1$ and the assumption that $\Pi$ has $\eps$-convex hull distortion for $X$. A similar inequality for the second term yields:
    \begin{equation*}
        \abs{\inp{\Pi x}{\Pi y} - \inp{x}{y}} \leq \frac{1}{4} \lprp{12\eps + 12\eps} = 6\eps
    \end{equation*}
    concluding the proof of the lemma.
\end{proof}

\end{document}